\makeatletter \@addtoreset{equation}{section} \makeatother
\newtheorem{proposition}{Proposition}
\newtheorem{theorem}{Theorem}
\newtheorem{lemma}{Lemma}
\newtheorem{remark}{Remark}
\newcommand{\mdet}{\mathrm{det}}
\newcommand{\intd}{\displaystyle\int}
\newcommand{\Tr}{\mathrm{Tr}\,}
\begin{document}

\title{Universality of the local regime for the block band matrices with a finite number of blocks}

\author{ Tatyana Shcherbina
\thanks{Chebyshev Laboratory, Mathematical Department of St. Petersburg State University}
\thanks{IAS, Princeton, USA}}
\date{}

\maketitle

\begin{abstract}
We consider the block band matrices, i.e. the Hermitian matrices $H_N$, $N=|\Lambda|W$ with elements $H_{jk,\alpha\beta}$,
where $j,k \in\Lambda=[1,m]^d\cap \mathbb{Z}^d$ (they parameterize the lattice sites) and $\alpha, \beta= 1,\ldots, W$ (they
parameterize the orbitals on each site). The entries $H_{jk,\alpha\beta}$ are random
Gaussian variables with mean zero such that $\langle H_{j_1k_1,\alpha_1\beta_1}H_{j_2k_2,\alpha_2\beta_2}\rangle=\delta_{j_1k_2}\delta_{j_2k_1}
\delta_{\alpha_1\beta_2}\delta_{\beta_1\alpha_2} J_{j_1k_1},$ where $J=1/W+\alpha\Delta/W$, $\alpha< 1/4d$.
This matrices are the special case of Wegner's $W$-orbital models. Assuming that the number of sites $|\Lambda|$ is finite,
we prove universality of the local eigenvalue statistics of $H_N$ for the energies $|\lambda_0|< \sqrt{2}$.
\end{abstract}
\section{Introduction}
Let $\Lambda=[1,m]^d\cap \mathbb{Z}^d$ be a periodic
box with volume $|\Lambda|=m^d$. Assign to every site $j\in\Lambda$ one copy $K_j\simeq \mathbb{C}^W$ of an $W$-dimensional complex
vector space, and set
\[
K=\bigoplus\limits_{j\in\Lambda}K_j\simeq\mathbb{C}^{|\Lambda|W}.
\]
From the physical point of view, we are assigning $W$ valence electron orbitals to
every atom of a solid with hypercubic lattice structure.

We start from the matrices $H: K\to K$ belonging to the Gaussian Unitary Ensemble (GUE), i.e.
from the Hermitian matrices with i.i.d. (modulo symmetry) Gaussian entries with mean zero
and variance 1, and then multiply the variances of all
matrix elements of $H$ connecting $K_j$ and $K_k$ by the positive number $J_{jk}$, $j,k\in\Lambda$ (which means
that $H$ becomes the matrix constructed of $W\times W$ blocks, and the variance in each block is constant).

More precisely, we consider Hermitian matrices $H_N$, $N=|\Lambda|W$ with elements $H_{jk,\alpha\beta}$,
where $j,k \in\Lambda$ (they parameterize the lattice sites) and $\alpha, \beta= 1,\ldots, W$ (they
parameterize the orbitals on each site). The entries $H_{jk,\alpha\beta}$ are random
Gaussian variables with mean zero such that
\begin{equation}\label{H}
\langle H_{j_1k_1,\alpha_1\beta_1}H_{j_2k_2,\alpha_2\beta_2}\rangle=\delta_{j_1k_2}\delta_{j_2k_1}
\delta_{\alpha_1\beta_2}\delta_{\beta_1\alpha_2} J_{j_1k_1}.
\end{equation}
Here $J_{jk}\ge 0$ are matrix elements of the positive-definite symmetric $|\Lambda|\times |\Lambda|$ matrix $J$, such that
\[
\sum\limits_{j\in\Lambda}J_{jk}=1/W.
\]
The probability law of $H_N$ can be written in the form
\begin{equation}\label{pr_l}
P_N(d H_N)=\exp\Big\{-\dfrac{1}{2}\sum\limits_{j,k\in\Lambda}\sum\limits_{\alpha,\beta=1}^W
\dfrac{|H_{jk,\alpha\beta}|^2}{J_{jk}}\Big\}dH_N,
\end{equation}
where
\[
dH_N=\prod\limits_{j<k}\prod\limits_{\alpha\beta}\dfrac{dH_{jk,\alpha\beta}d\overline{H}_{jk,\alpha\beta}}{2\pi J_{jk}}
\prod\limits_{j}\prod\limits_{\alpha<\beta}\dfrac{dH_{jj,\alpha\beta}d\overline{H}_{jj,\alpha\beta}}{2\pi J_{jj}}
\prod\limits_{j}\prod\limits_{\alpha}\dfrac{dH_{jj,\alpha\alpha}}{\sqrt{2\pi J_{jj}}}.
\]
Such models were first introduced and studied by Wegner (see \cite{S-W:80}, \cite{We:79}).

Note that $P_N(d H_N)$ is invariant under conjugation
$H_N\to U^*H_NU$ by $U\in \mathcal{U}$, where $\mathcal{U}$ is the direct product of all the groups $U(K_j)$ of
unitary transformations in the subspaces:
$$\mathcal{U} = \bigotimes\limits_{j\in\Lambda}U(K_j).$$
This means that the probability distribution $P_N(d H_N)$ has a {\it local gauge
invariance}.

Varying the lattice $\Lambda$, the number of orbitals $W$, and the variances $J_{jk}$, one obtains
a large class of Hermitian random matrix ensembles. For example, putting $|\Lambda|=1$ we
get a zero-dimensional $W$-orbital model which coincides with GUE. From the other hand,
choice $J_{jk}=\varphi(|j-k|)$, where $\varphi$ is a rapidly decreasing positive function, gives
an ensemble of random band matrices.

Here we consider
\begin{equation}\label{J}
J=1/W+\alpha\Delta/W, \quad \alpha<1/4d,
\end{equation}
where $W\gg 1$ and $\Delta$ is the discrete Laplacian on $\Lambda$ with periodic boundary conditions.
This model is one of the possible realizations of the Gaussian random band matrices, for example for
$d=1$ they correspond to the band matrices with the width of the band $2W+1$.

Random band matrices are natural interpolations between random
Schr\"o\-din\-ger matrices $ H_{RS}=-\Delta+\lambda V$, in which the randomness
only appears in the diagonal potential $V$ ($\lambda$ is a small parameter which
measures the strength of the disorder) and
mean-field random matrices such as $N\times N$ Wigner matrices, i.e. Hermitian
random matrices with i.i.d elements. Moreover, random Schr$\ddot{\hbox{o}}$dinger matrices with parameter
$\lambda$ and RBM with the width of the band $W$ are expected to have some similar qualitative
properties when $\lambda\approx W^{-1}$
(for more details on these conjectures see \cite{Sp:12}).

The key physical parameter of these models is the localization length,
which describes the typical length scale of the eigenvectors of random matrices.
The system is called delocalized if the localization length $\ell$ is
comparable with the matrix size, and it is called localized otherwise.
Delocalized systems correspond to electric
conductors, and localized systems are insulators.

In the case of 1D RBM there is a physical conjecture (see \cite{Ca-Co:90}, \cite{FM:91}) stating that $\ell$ is
of order $W^2$ (for the energy in the bulk of the spectrum), which means that varying $W$
we can see the crossover: for $W\gg \sqrt{N}$ the eigenvectors are expected to be
delocalized and for $W\ll \sqrt{N}$ they are localized. In terms of eigenvalues
this means that the local eigenvalue statistics in the bulk of the spectrum changes from Poisson,
for $W\ll \sqrt{N}$, to GUE
(Hermitian matrices with i.i.d Gaussian elements),
for $W\gg \sqrt{N}$. For $d=2$ the localization length is expected to be exponentially
growing in $W$ (and so the critical value is $N\sim \log W$), and $\ell\sim N$ for $d \ge 3$, i.e.
the system is delocalized.
At the present time only some upper and lower
bounds for $\ell$ are proven rigorously. It is known from the paper \cite{S:09} that $\ell\le W^8$
for $d=1$. On the other side, in the resent papers \cite{EK:11}, \cite{Yau:12} it was proven first that $\ell\gg W^{7/6}$,
and then that $\ell\gg W^{5/4}$.

The questions of the order of the localization length are closely related to the universality conjecture
of the bulk local regime of the random matrix theory, which we briefly outline now.

Let $\lambda_1^{(N)},\ldots,\lambda_N^{(N)}$ be the eigenvalues of
$H_N$. Define their Normalized Co\-un\-ting Measure
(NCM) as
\begin{equation} \label{NCM}
\mathcal{N}_N(\sigma)=\sharp\{\lambda_j^{(N)}\in
\sigma,j=1,\ldots,N \}/N,\quad \mathcal{N}_N(\mathbb{R})=1,
\end{equation}
where $\sigma$ is an arbitrary interval of the real axis.
The behavior of $\mathcal{N}_N$ as $N\to\infty$ was studied for many ensembles.
For 1D RBM it was shown
in \cite{BMP:91}, \cite{MPK:92} that $\mathcal{N}_{N}$ converges weakly, as $N,W\to\infty$, to a non-random measure
$\mathcal{\mathcal{N}}$, which is called the limiting NCM of the ensemble.
The measure $\mathcal{N}$ is absolutely continuous
and its density $\rho$ is given by the well-known Wigner semicircle law (the same result
is valid for Wigner ensembles, in particular, for Gaussian ensembles GUE, GOE):
\begin{equation}\label{rho}
\rho(\lambda)=\dfrac{1}{2\pi}\sqrt{4-\lambda^2},\quad \lambda\in[-2,2].
\end{equation}
The same is valid for the matrices (\ref{H}) -- (\ref{J}).

More delicate result about the density of states is proven in
\cite{DPS:02} for 3D RBM, and in \cite{Con:87} for some types of Wegner models.

These results characterize the so-called global distribution of the eigenvalues.

The local regime deals with the behavior of eigenvalues of $N\times N$
random matrices on the intervals whose length is of the order of the mean distance between
nearest eigenvalues. The main objects of the local regime are $k$-point correlation functions
$R_k$ ($k=1,2,\ldots$), which can be defined by the equalities:
\begin{multline} \label{R}
\mathbf{E}\left\{ \sum_{j_{1}\neq ...\neq j_{k}}\varphi_k
(\lambda_{j_{1}}^{(N)},\dots,\lambda_{j_{k}}^{(N)})\right\}\\ =\int_{\mathbb{R}^{k}} \varphi_{k}
(\lambda_{1}^{(N)},\ldots,\lambda_{k}^{(N)})R_{k}(\lambda_{1}^{(N)},\ldots,\lambda_{k}^{(N)})
d\lambda_{1}^{(N)}\ldots d\lambda_{k}^{(N)},
\end{multline}
where $\varphi_{k}: \mathbb{R}^{k}\rightarrow \mathbb{C}$ is
bounded, continuous and symmetric in its arguments and the
summation is over all $k$-tuples of distinct integers $
j_{1},\dots,j_{k}\in\{1,\ldots,N\}$.

According to the Wigner -- Dyson universality conjecture (see e.g. \cite{Me:91}), the local behavior
of the eigenvalues does not depend on the matrix probability
law (ensemble) and is determined only by the symmetry type of matrices (real
symmetric, Hermitian, or quaternion real in the case of real eigenvalues and orthogonal,
unitary or symplectic in the case of eigenvalues on the unit circle).
For example, the conjecture states that for Hermitian random matrices in the bulk of the spectrum
and in the range of parameters for which the eigenvectors are
delocalized
\begin{multline}\label{Un}
\displaystyle\frac{1}{(N\rho(\lambda_0))^k}
R_k\left(\lambda_0+\displaystyle\frac{\xi_1}{\rho(\lambda_0)\,N},
\ldots,\lambda_0+\displaystyle\frac{\xi_k}{\rho(\lambda_0)\,N}\right)\\
\stackrel{w}{\longrightarrow}\det \Big\{\dfrac{\sin \pi(\xi_i-\xi_j)}
{\pi(\xi_i-\xi_j)}\Big\}_{i,j=1}^k,\quad N\to\infty
\end{multline}
for any fixed $k$. This means that the limit coincides with that for GUE.

In this language the conjecture about the crossover for RBM states that we get (\ref{Un})
for $W$, which correspond to delocalized states, and
we get another behavior, which is determined by the Poisson statistics, for $W$
which correspond to localized states. For the general Hermitian Wigner matrices (i.e. $|\Lambda|=1$, but the
distribution of matrix elements are not necessary Gaussian) bulk universality has been proved recently
in \cite{EYY:10}, \cite{TV:11}. However, in the general case
of RBM the question of bulk universality of local spectral statistics
is still open even for 1D Gaussian RBM.

In this paper we prove (\ref{Un}) for the second correlation function $R_2$ of the ensemble
(\ref{H}) -- (\ref{J}), if $W\to\infty$, but the number of sites is fixed (i.e. $m$ is finite).

An additional source of motivation for the current work is the development of the supersymmetric method
(SUSY) in the context of random operators with non-trivial spatial structures. This method is widely used in
the physics literature (see e.g. \cite{Ef},\cite{M:00}) and is potentially very powerful but the rigorous
control of the
integral representations, which can be obtained by this method, is difficult and so far for the band
matrices (and also for some types of the Wegner models) it has been performed only for the density of states
(see \cite{Con:87}, \cite{DPS:02}), but not for the correlation function $R_k$. The
important step in studying of the second correlation was done in \cite{TSh:12}, where the behavior of the second
mixed moment of the characteristic polynomials was considered. It was proved that for 1D RBM
with $W^2\gg N$ this behavior (as $N\to\infty$) in the bulk of the spectrum coincides with
that for the GUE (this is closely related to (\ref{Un})). From the SUSY point of view
characteristic polynomials correspond to the so-called fermionic sector of the supersymmetric full model,
which describes the correlation functions $R_k$. In this paper we do the next step and present the
rigorous SUSY result about the second correlation function of block RBM (i.e. about the SUSY full model),
although with finite number of
blocks (which means that the width of the band is comparable with the matrix size).

\begin{theorem}\label{thm:1}
Let $\Lambda=[1,m]^d\cap \mathbb{Z}^d$ be a periodic
box, $H_N$ be the matrices (\ref{H}) -- (\ref{J}), $N=W|\Lambda|$, and let the number of sites $|\Lambda|$
be fixed. Then
\begin{equation*}
(N\rho(\lambda_0))^{-2}
R_2\left(\lambda_0+\displaystyle\frac{\xi_1}{\rho(\lambda_0)\,N},
\lambda_0+\displaystyle\frac{\xi_2}{\rho(\lambda_0)\,N}\right)\stackrel{w}{\longrightarrow}
1-\dfrac{\sin^2 (\pi(\xi_1-\xi_2))}
{\pi^2(\xi_1-\xi_2)^2},
\end{equation*}
as $W\to\infty$, for any $|\lambda_0|<\sqrt{2}$.
\end{theorem}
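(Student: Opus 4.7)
The plan is to derive a supersymmetric integral representation for $R_2$ and then carry out a saddle-point analysis in the limit $W\to\infty$ with $|\Lambda|$ fixed.

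First, I would reduce the problem to the evaluation of a generating function built from characteristic polynomials and Green's functions. Writing $R_2$ in terms of products of imaginary parts of resolvent traces and polarizing the two spectral parameters into independent sources $\mu_1,\mu_2$, it suffices to analyze
$$
F(\mu_1,\mu_2;\lambda_1,\lambda_2)=\mathbf{E}\Big\{\frac{\det(\mu_1-H_N)\det(\mu_2-H_N)}{\det(\lambda_1^+-H_N)\det(\lambda_2^{-}-H_N)}\Big\},\qquad \lambda_j^\pm=\lambda_j\pm i0,
$$
and then extract the mixed derivative $\partial^2/\partial\mu_1\partial\mu_2$ at $\mu_j=\lambda_j$. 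The numerator is represented as a Grassmann Gaussian integral and the denominator as a convergent bosonic Gaussian integral. After performing the Gaussian average over $H_N$ explicitly, the resulting quartic action in the boson/fermion fields is decoupled via a Hubbard-Stratonovich transformation that introduces one $4\times 4$ supermatrix $Q_j$ per site $j\in\Lambda$. Integrating out the original $W$-component variables yields a finite-dimensional integral over $\{Q_j\}_{j\in\Lambda}$ whose action has the form $W\cdot \mathcal{S}[Q]$, with the hopping $\alpha\Delta/W$ appearing as a quadratic coupling between the $Q_j$ at neighbouring sites.

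Second, I would perform the saddle-point analysis for $W\to\infty$. At $\alpha=0$ the action decouples across sites and each factor has a critical manifold obtained as the orbit of the scalar saddle with eigenvalues $\lambda_0/2 \mp i\sqrt{1-\lambda_0^2/4}$ under the internal supergroup symmetry. The dominant critical manifold of the full $|\Lambda|$-site integral is the diagonal one, $Q_j\equiv Q$, and the hopping term provides a positive quadratic mass to all non-uniform modes. Since $|\Lambda|$ is fixed, the Laplacian has a spectral gap on the complement of constants, so the massive directions can be integrated out by Gaussian approximation with an error $O(W^{-1})$.

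After the integration over the massive modes, the contribution reduces to a zero-dimensional integral on the Goldstone saddle manifold, i.e.\ a single copy of the Cartan-type supersymmetric space associated with the $(2|2)$ structure. This integral is the same one that appears in the zero-dimensional $W$-orbital model ($|\Lambda|=1$, pure GUE), up to the factor $|\Lambda|$ that enters solely through the scaling $N=W|\Lambda|$ of the rescaled variables $\xi_j/(\rho(\lambda_0)N)$. Its explicit evaluation, together with the derivatives in $\mu_1,\mu_2$, reproduces the sine kernel $1-\sin^2\pi(\xi_1-\xi_2)/\pi^2(\xi_1-\xi_2)^2$, since this is the known GUE result re-expressed in the present normalization.

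The main obstacle is the rigorous justification of the saddle-point step. The bosonic part of the integration is non-compact and requires a careful contour deformation onto a path of steepest descent through $Q_*$; one has to verify that $\mathrm{Re}\,\mathcal{S}[Q]<\mathcal{S}[Q_*]$ strictly outside a small neighbourhood of the diagonal saddle manifold, and that the deformation can be performed without crossing singularities of $\log\det$. The restriction $|\lambda_0|<\sqrt{2}$ is precisely the threshold for the relevant Hessian to remain uniformly stable along the deformation path; for larger $|\lambda_0|$ additional competing saddle points emerge and obstruct the simple analysis. Establishing a quantitative lower bound of the form $W\,(\mathcal{S}[Q]-\mathcal{S}[Q_*])\ge c\,\mathrm{dist}(Q,\mathrm{saddle\,manifold})^2$ away from the saddle, combined with the supersymmetric cancellation of the fluctuation Jacobians between boson and fermion sectors, is the technical heart of the proof.
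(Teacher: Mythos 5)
Your overall strategy --- reduce $R_2$ to a supersymmetric generating function via the Stieltjes transform, express it as an integral over a few supermatrix fields per site, and carry out a saddle-point analysis in $W$ with $|\Lambda|$ fixed --- matches the paper's. Two deviations are worth flagging, one of which hides the genuine core difficulty.

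First, you propose Hubbard--Stratonovich decoupling followed by integrating out the $W$-component vectors, producing a $4\times4$ supermatrix $Q_j$ per site. The paper instead uses the superbosonization formula of Littelmann--Sommers--Zirnbauer (Proposition~\ref{p:supboz}), which directly replaces dyadic field bilinears by an integral over a supermatrix $Q_j$ with a unitary block $U_j$, a positive Hermitian block $B_j$, and Grassmann off-diagonals, weighted by $\mathrm{sdet}^W Q_j$. Both routes are in principle viable, but superbosonization gives explicit control of the bosonic contours: the paper moves them to the rays $\mathcal{L}_\pm(\lambda_0)$, and the condition $|\lambda_0|<\sqrt{2}$ enters precisely as the requirement $|\arg a_\pm|<\pi/4$ needed so that the quadratic site coupling has nonnegative real part along the deformed contours (proof of Proposition~\ref{p:int_rep_1}). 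Your attribution of $|\lambda_0|<\sqrt{2}$ to ``Hessian stability along the deformation path'' is in the right spirit but not what actually limits the argument.

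Second, and more substantively: you describe the critical set as a single diagonal Goldstone manifold $\{Q_j\equiv Q\}$, after which one just computes ``the GUE sine kernel''. This misses where the real work lies. In the compact (unitary) sector the diagonal eigenvalues can each sit at $a_+$ or $a_-$, so after diagonalization there are \emph{several} discrete saddles, not a single continuous manifold: the paper classifies them into types I, II, III (Lemma~\ref{l:kont_u}) and then has to show that types II and III are suppressed after the bose--fermi cancellation (Lemmas~\ref{l:p_bound} and~\ref{l:trick}). The Goldstone degrees of freedom are then the residual rotations $P_1\in\mathring{U}(2)$ and $S_1\in\mathring{U}(1,1)$ at a type-I saddle, and the $\sin^2$ structure arises from the interplay of the two type-I eigenvalue assignments with the Goldstone integral and the cancellation of the Jacobian of the massive modes against the superdeterminant factor $\mathcal{P}_N$. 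Simply integrating over ``the'' diagonal saddle manifold would not produce the sine kernel; identifying and controlling the competing discrete saddles is the technical heart of the proof, and your plan glosses over it.

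A smaller bookkeeping gap: the reduction via the Stieltjes transform (equations~(\ref{main})--(\ref{F_expr})) requires both $G_2^{+-}$ (poles on opposite sides of the real axis) and $G_2^{++}$ (poles on the same side). Your $F(\mu_1,\mu_2;\lambda_1^+,\lambda_2^-)$ is only the former; $G_2^{++}$ contributes a non-trivial constant term $(a_+^2+a_-^2)/\rho^2(\lambda_0)$ (Theorem~\ref{thm:++}) that must be combined with the $G_2^{+-}$ contribution to yield $1-\sin^2(\pi(\xi_1-\xi_2))/\pi^2(\xi_1-\xi_2)^2$.
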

\textit{Remark}

\begin{enumerate}
\item \textit{One can consider any finite regular graph instead of $\Lambda\subset \mathbb{Z}^d$.}

\item  \textit{The condition $|\Lambda|< \infty$ is not necessary here. For example, for $m$ which grows
like the small power of $W$ the proof can be repeated almost literally. Moreover, for $d=1$ the
same method is expected to work for $|\Lambda|\ll W^{1/2}$, i.e. $W\gg N^{2/3}$, but this requires
more delicate techniques.}

\item \textit{The condition $|\lambda_0|<\sqrt{2}$ is technical, the result should be the same
for any $\lambda_0\in (-2,2)$.}

\end{enumerate}

The paper is organized as follows. In Section $2$ we reformulate Theorem \ref{thm:1} in terms of the Green's
functions $G(z)$ and obtain a convenient integral
representation for $$\mathbf{E}\{G(z_1)\cdot G(z_2)\},\quad z_1,z_2\in\mathbb{C}$$ using the integration
over the Grassmann variables. Section $3$ deals with the preliminary results needed
for the proof. In Section $4$ we prove Theorem \ref{thm:1}, applying the steepest descent method
to the integral representation. Appendix is devoted to the introduction to the SUSY techniques.


\subsection{Notation}
We denote by $C$, $C_1$, etc. various $W$-independent quantities below, which
can be different in different formulas. Integrals
without limits denote the integration (or the multiple integration) over the whole
real axis, or over the Grassmann variables.

Moreover,
\begin{itemize}

    \item[$\bullet$] $N=W|\Lambda|=W\cdot m^d;$

    \item[$\bullet$] $\mathbf{E}\{\ldots\}$ denotes the the expectation with respect to the
measure (\ref{pr_l});

    \item[$\bullet$] indices $i,j, j^\prime,k$ vary in $\Lambda$ and correspond to the number of the site
    (or the number of the block), index $l$ is always $1$ or $2$ (this is the field index),
    and Greek indices $\beta, \gamma$ vary from $1$ to $W$ and correspond to the position of
    the element in the block;

    \item[$\bullet$] big Latin letters (except $C$, which denotes different constants, and $W$) always denote
    $2\times 2$ matrices;

    \item[$\bullet$] variables $\phi$ and $\Phi$ with different indices are complex variables or vectors
    correspondingly;

    \item[$\bullet$] variables $\psi$ and $\Psi$ with different indices are Grassmann variables or vectors
    correspondingly;

    \item[$\bullet$] if $x_j$ means some variable which corresponds to the site $j\in \Lambda$, then $x$ means
    vector $\{x_j\}_{j\in\Lambda}$;

    \item[$\bullet$] $j\sim j^\prime$ means two adjacent points in $\Lambda$, the boundary conditions are periodic, i.e.
$$(p_1,\ldots,p_{i-1},1,p_{i+1},\ldots, p_m)\sim (p_1,\ldots,p_{i-1},m,p_{i+1},\ldots, p_m),$$
 and

    $$(\nabla x)^2=
    \sum\limits_{j\sim j^\prime} (x_j-x_{j^\prime})^2, \quad(\nabla X)^2=
    \sum\limits_{j\sim j^\prime} \Tr (X_j-X_{j^\prime})^2;$$

    \item[$\bullet$] $\Delta$ is a discrete Laplacian on $\Lambda$ with periodic boundary conditions, i.e.
    \[
    (\Delta x)_j=\sum\limits_{j^\prime: j\sim j^\prime} (x_j-x_{j^\prime});
    \]

  \item[$\bullet$] $J=\alpha\Delta/W+1/W,\quad \alpha<1/4d;$

  \end{itemize}

\begin{fleqn}[5pt]
    \begin{align}\label{a_pm}
    \bullet \,\,\, &L=\hbox{diag}\,\{1,-1\},\quad a_{\pm}=\dfrac{i\lambda_0\pm\sqrt{4-\lambda_0^2}}{2},\\
    \label{L_pm}
    &L_{\pm}=\hbox{diag}\,\{a_+,a_-\},\quad L_{\mp}=\hbox{diag}\,\{a_-,a_+\},\\ \notag
    &L_+=a_+I,\quad L_-=a_-I;
    \end{align}

    \begin{equation}\label{M_pm}
    \bullet \,\,\,M_\pm=\alpha a_{\pm}^2\Delta+(1+a_{\pm}^2)I;
    \end{equation}

    \end{fleqn}
\begin{itemize}
    \item[$\bullet$] $\mathring{U}(2)=U(2)/U(1)\times U(1)$, $\mathring{U}(1,1)=U(1,1)/U(1)\times U(1)$;

%

%

    \item[$\bullet$] if $X$ means a matrix with eigenvalues $x_1$, $x_2$, then $\hat{X}=\hbox{diag}\,\{x_1,x_2\}$;

    \item[$\bullet$] $j=\overline{1}$ means $j=(1,\ldots,1)\in\Lambda$;

    \item[$\bullet$] $G(z)=(H_N-z)^{-1}$ is the Green's function of the matrix (\ref{H}) -- (\ref{J});

    \end{itemize}

    \begin{fleqn}[5pt]
 \begin{align}\label{z}
\bullet \,\,\, z_1&=\lambda_0+i\varepsilon/N+\xi_1/N\rho(\lambda_0),\quad z_2=
\lambda_0+i\varepsilon/N+\xi_2/N\rho(\lambda_0),\\ \notag
z_1^\prime&=\lambda_0+i\varepsilon/N+\xi_1^\prime/N\rho(\lambda_0),\quad z_2^\prime=
\lambda_0+i\varepsilon/N+\xi_2^\prime/N\rho(\lambda_0),
\end{align}
\quad \quad \,where $\lambda_0\in (-2,2)$, $\varepsilon>0$, and $\xi_1,\xi_2, \xi_1^\prime,
\xi_2^\prime\in [-C,C]\subset
\mathbb{R}$;

\begin{align}\label{G_2}
\bullet \,\,\, G_2^{+-}(z,\xi)&=\mathbf{E}\bigg\{\dfrac{\mdet(H_N-z_1^\prime)\mdet(H_N-\overline{z}_2)}
{\mdet(H_N-z_1)\mdet(H_N-\overline{z}_2^\prime)}\bigg\},\\ \notag
G_2^{++}(z,\xi)&=\mathbf{E}\bigg\{\dfrac{\mdet(H_N-z_1^\prime)\mdet(H_N-z_2)}
{\mdet(H_N-z_1)\mdet(H_N-z_2^\prime)}\bigg\}
\end{align}
\quad \quad \,for $z=(z_1,z_2)$ and $\xi=(\xi_1,\xi_2,\xi_1^\prime,\xi_2^\prime)$;

\begin{equation}\label{theta}
\bullet \,\,\,\theta_\varepsilon=-i\varepsilon+(\xi_2-\xi_1)/2\rho(\lambda_0), \quad c_0=
\sqrt{4-\lambda_0^2}=2\pi\rho(\lambda_0);
\end{equation}

\begin{equation}\label{delta}
\bullet \,\,\,\delta=\log W/\sqrt{W};
\end{equation}
\end{fleqn}
    \begin{itemize}

    \item[$\bullet$] $Z_s=\lambda_0I+i\varepsilon L/N+\hat{\xi}_s/N\rho(\lambda_0)$, $s=1,2$ or empty, where
    \begin{equation}\label{xi_hat}
\hat{\xi}=\left(\begin{array}{cc}
\xi_1&0\\
0& \xi_2
\end{array}\right),\quad \hat{\xi}_1=\left(\begin{array}{cc}
\xi_1^\prime&0\\
0& \xi_2
\end{array}\right)
, \quad\hat{\xi}_2=\left(\begin{array}{cc}
\xi_1&0\\
0& \xi_2^\prime
\end{array}\right).
\end{equation}

    \end{itemize}

\section{Integral representation}
According to the property of the Stieltjes transform, to prove Theorem \ref{thm:1}, it suffices to show that
\begin{multline}\label{main}
\dfrac{1}{(2\pi i N\rho(\lambda_0))^2}\lim\limits_{\varepsilon\to 0}\lim\limits_{W\to\infty}\mathbf{E}\Big\{\Tr
\Big(G(z_1)-G(\overline{z}_1)\Big)\cdot \Tr
\Big(G(z_2)-G(\overline{z}_2)\Big)\Big\}\\
=1-\dfrac{\sin^2 (\pi(\xi_1-\xi_2))}
{\pi^2(\xi_1-\xi_2)^2},
\end{multline}
where $G(z)$ is the resolvent of $H_N$.

Since
\begin{align}\notag
&(2\pi i N\rho(\lambda_0))^2 F_2(z_1,z_2):=\mathbf{E}\Big\{\Tr
\Big(G(z_1)-G(\overline{z}_1)\Big)\cdot \Tr
\Big(G(z_2)-G(\overline{z}_2)\Big)\Big\}\\ \label{F_2}
&=\mathbf{E}\Big\{\Tr
G(z_1)\cdot \Tr
G(z_2)\Big\}+\mathbf{E}\Big\{\overline{\Tr
G(z_1)\cdot \Tr
G(z_2)}\Big\}\\ \notag
&-\mathbf{E}\Big\{\Tr
G(z_1)\cdot \Tr
G(\overline{z}_2)\Big\}-\mathbf{E}\Big\{\overline{\Tr
G(z_1)\cdot \Tr
G(\overline{z}_2)}\Big\},
\end{align}
we get
\begin{multline}\label{F_expr}
F_2(z_1,z_2)=(2\pi)^{-2}\dfrac{\partial^2}{\partial \xi_1^\prime \partial \xi_2^\prime}\Big(
G_2^{++}(z,\xi)+\overline{G_2^{++}}(z,\xi)\\-G_2^{+-}(z,\xi)-\overline{G_2^{+-}}(z,\xi)\Big)
\Big|_{\xi^\prime=\xi},
\end{multline}
where $\xi^\prime=\xi$ means $\xi_1^\prime=\xi_1$, $\xi_2^\prime=\xi_2$, and $G_2^{++}$, $G_2^{+-}$ are
defined in (\ref{G_2}).

Thus, we have to find an integral representation for $$\mathbf{E}\Big\{\Tr
G(z_1)\cdot \Tr
G(z_2)\Big\}\,\,\,\hbox{and}\,\,\,\mathbf{E}\Big\{\Tr
G(z_1)\cdot \Tr
G(\overline{z}_2)\Big\}.$$

Note that since the density of states for (\ref{H}) is (\ref{rho}), we have
\begin{multline}\label{pr_St}
\lim\limits_{\varepsilon\to 0}\lim\limits_{W\to \infty}\dfrac{1}{N\rho(\lambda_0)}
\mathbf{E}\Big\{\Tr G(\lambda_0+i\varepsilon)\Big\}\\=\dfrac{1}{\rho(\lambda_0)}\lim\limits_{\varepsilon\to 0}
\int\dfrac{\rho(\lambda)d\lambda}{\lambda-\lambda_0-i\varepsilon}=\dfrac{-\lambda_0+i\sqrt{4-\lambda_0^2}}
{2\rho(\lambda_0)}.
\end{multline}
Hence, we obtain by the construction (see (\ref{G_2}), (\ref{pr_St}), and (\ref{a_pm}))
\begin{align}\label{Ward_id}
&G_2^{+-}(z,\xi)\Big|_{\xi^\prime=\xi}=
G_2^{+-}(z,\xi)\Big|_{\xi^\prime=\xi}=1,\\ \notag
&\dfrac{\partial}{\partial \xi_1^\prime}G_2^{+-}(z,\xi)\Big|_{\xi^\prime=\xi}=
\dfrac{\partial}{\partial \xi_1^\prime}G_2^{++}(z,\xi)\Big|_{\xi^\prime=\xi}\\ \notag
&=-\dfrac{1}{N\rho(\lambda_0)}\mathbf{E}\Big\{\Tr G(z_1)\Big\}=-ia_+/\rho(\lambda_0)+o(1),\\
\notag
&\dfrac{\partial}{\partial \xi_2^\prime}G_2^{++}(z,\xi)\Big|_{\xi^\prime=\xi}=
\dfrac{1}{N\rho(\lambda_0)}\mathbf{E}\Big\{\Tr G(z_2)\Big\}=ia_+/\rho(\lambda_0)+o(1),\\ \notag
&\dfrac{\partial}{\partial \xi_2^\prime}G_2^{+-}(z,\xi)\Big|_{\xi^\prime=\xi}=\dfrac{1}{N\rho(\lambda_0)}
\mathbf{E}\Big\{\Tr \overline{G(z_2)}\Big\}=ia_-/\rho(\lambda_0)+o(1).
\end{align}

We are going to obtain the integral representations for $G_2^{++}(z,\xi)$ and $G_2^{+-}(z,\xi)$
by using rather standard SUSY techniques, i.e. integrals
over the Grassmann variables. Integration over the Grassmann variables was introduced by
Berezin (see \cite{Ber}) and is widely used in the physics literature (see e.g. \cite{Ef} and \cite{M:00}).
Here we use the modification of the method which uses the superbozonization formula (see \cite{SupB:08}).
For the reader convenience we give
a brief outline of the techniques in Appendix.

This method allows us to obtain
the formula for products and ratios of the characteristic
polynomials which is very useful for the
averaging because it is a Gaussian-type integral (see formulas (\ref{G_C}) -- (\ref{G_Gr}) below).
After averaging over the probability measure we can integrate over the Grassmann variables to obtain
an integral representation (in complex variables) which can be studied by the steepest descent method.

Set
\begin{align}\label{P_n}
\mathcal{P}_N(U,B)=\intd&\exp\Big\{\alpha\sum\limits_{j\sim j^\prime}
\Tr (\rho_j-\rho_{j^\prime})
(\tau_j-\tau_{j^\prime})-\sum\limits_{j\in\Lambda} \Tr \rho_j\tau_j\Big\}\\ \notag
&\times\prod\limits_{j\in\Lambda}\mdet^{-W} (1+W^{-1}U_j^{-1}\rho_jB_j^{-1}\tau_j)
\prod\limits_{j\in\Lambda}
d\rho_jd\tau_j,
\end{align}
where $\rho_j$, $\tau_j$ are $2\times 2$ matrices whose entries are independent Grassmann variables,
$U_j\in U(2)$, $B_j\in \mathcal{H}_+L$,
and let
\begin{align}\label{L_cal}
\mathcal{L}_\pm(\lambda_0)&=\Big\{r\Big(\pm i\lambda_0/2+\sqrt{4-\lambda_0^2}/2\Big)|r\in [0,+\infty)\Big\}.
\end{align}
Introduce
\begin{align}\label{K}
K_{m}(V,\hat{U})=&\alpha (\nabla \hat{U})^2/2+\alpha\sum\limits_{j\sim j^\prime} |(V_{j^\prime}V_j^*)_{12}|^2
(u_{j^\prime,1}-u_{j^\prime,2})(u_{j,1}-u_{j,2})
\\ \notag
&-
\sum\limits_{j\in\Lambda}\big(\Tr \hat{U}_j^2/2-i\lambda_0 \Tr \hat{U}_j-
\log \mdet \hat{U}_j\big)-U_*,
\end{align}
\begin{align}\notag
L_{m}(T,\hat{B})=&-\alpha (\nabla \hat{B})^2/2+\alpha\sum\limits_{j\sim j^\prime} |(T_{j^\prime}T_j^{-1})_{12}|^2
(b_{j^\prime,1}+b_{j^\prime,2})(b_{j,1}+b_{j,2})\\ \label{L}
&+
\sum\limits_{j\in\Lambda}\big(\Tr \hat{B}_j^2/2-i\lambda_0 \Tr \hat{B}_j-
\log\mdet\hat{B}_j\big)+U_*
\end{align}
with some constant $U_*$ that will be chosen later (see (\ref{U*})). Here $V_j\in \mathring{U}(2)$, $V_1=I$,
$T_j\in \mathring{U}(1,1)$, $T_1=I$, $b_{j,1}, b_{j,2}\in \mathbb{R}_+$, $u_{j,1}, u_{j,2}\in \mathbb{T}$, and
\[
\hat{U}_j=\left(\begin{array}{cc}
u_{j,1}&0\\
0&u_{j,2}
\end{array}\right),\quad \hat{B}_j=\left(\begin{array}{cc}
b_{j,1}&0\\
0&-b_{j,2}
\end{array}\right).
\]
Define also
\begin{align}\notag
\mathcal{F}_m(P_1, S_1, V, T,\hat{U},\hat{B})&=\dfrac{\varepsilon}{|\Lambda|}
\sum\limits_{j\in\Lambda} \Tr (V_jP_1^*)^*\hat{U}_j(V_jP_1^*)L
\\ \label{F_cal}&-\dfrac{\varepsilon}{|\Lambda|}
\sum\limits_{j\in\Lambda} \Tr (T_jS_1^{-1})^{-1}\hat{B}_j(T_jS_1^{-1})L
\\ \notag
&-\dfrac{i}{|\Lambda|\rho(\lambda_0)}\sum\limits_{j\in\Lambda}\Tr (V_jP_1^*)^*\hat{U}_j(V_jP_1^*)
\hat{\xi}_1\\ \notag &+\dfrac{i}{|\Lambda|\rho(\lambda_0)}\sum\limits_{j\in\Lambda}
\Tr (T_jS_1^{-1})^{-1}\hat{B}_j(T_jS_1^{-1})\hat{\xi}_2,
\end{align}
where $|\Lambda|=m^d$, $\rho(\lambda_0)$ is defined in (\ref{rho}), and
$\hat{\xi}_{1,2}$ is defined in (\ref{xi_hat}).
The purpose of this section is to prove
\begin{proposition}\label{p:int_rep_1}
The function $G_2^{+-}(z,\xi)$ of (\ref{G_2}) can be represented as follows:
\begin{align}\notag
&G_2^{+-}(z,\xi)=\dfrac{W^{4|\Lambda|}}{(8\pi^2)^{|\Lambda|}}\displaystyle\int \prod\limits_{j\in\Lambda\setminus\{\overline{1}\}} d\nu(T_j)d\mu(V_j)
\oint_{\mathbb{T}^{2|\Lambda|}} \prod\limits_{j\in\Lambda}
d u_{j,1}du_{j,2}\\ \notag
&\int_{\mathcal{L}_+(\lambda_0)^{|\Lambda|}} \prod\limits_{j\in\Lambda}d b_{j,1}
\int_{\mathcal{L}_-(\lambda_0)^{|\Lambda|}
}\prod\limits_{j\in\Lambda}db_{j,2}
\cdot \exp\Big\{-W\left(K_{m}(V,\hat{U})+L_{m}(T,\hat{B})\right)\Big\}\\ \label{G_int}
&\times\mathcal{P}_N
\left(V^*\hat{U}V,T^{-1}\hat{B}T\right) \prod\limits_{j\in\Lambda}
(u_{j,1}-u_{j,2})^2\,\prod\limits_{j\in\Lambda}
(b_{j,1}+b_{j,2})^2\\ \notag
&\times \int d\nu(S_1)d\mu(P_1)  \exp\Big\{\mathcal{F}_m(P_1, S_1, V,T,\hat{U},\hat{B})\Big\},
\end{align}
where $d\mu$, $d\nu$ are the Haar measures over $\mathring{U}(2)$ and $\mathring{U}(1,1)$ correspondingly, which
can be parameterized as follows (see (\ref{mu}) -- (\ref{nu}))
\begin{align}\label{mu_j}
&V_j=\left(\begin{array}{ll}
w_j& v_j\,e^{i\theta_{j}}\\
-v_j\,e^{-i\theta_{j}}& w_j\\
\end{array}\right),\quad
w_j=(1-v_j^2)^{1/2}\\ \notag
& d\mu(V_j)=\dfrac{d\theta_{j}}{2\pi}\cdot
(2v_j dv_j),\quad v_j\in [0,1],\quad \theta_{j}\in [0,2\pi],
\end{align}
\begin{align}\label{nu_j}
&T_j=\left(\begin{array}{ll}
s_j& t_j\,e^{i\sigma_j}\\
t_j\,e^{-i\sigma_{j}}& s_j\\
\end{array}\right),\quad
s_j=(1+t_j^2)^{1/2}\\ \notag
& d\mu(T_j)=\dfrac{d\sigma_{j}}{2\pi}\cdot
(2t_j dt_j),\quad t_j\in [0,\infty),\quad \sigma_{j}\in [0,2\pi].
\end{align}

\end{proposition}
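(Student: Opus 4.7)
The target is a standard SUSY integral representation for a ratio of four characteristic polynomials, so the strategy is the now-classical one: replace each determinant by a Gaussian integral, perform the Gaussian average over $H_N$, decouple the resulting quartic interaction with Hubbard--Stratonovich auxiliary matrices, and finally push the matrices to their natural saddle manifolds using the superbozonization formula of \cite{SupB:08}, after which the Haar decomposition supplies the parameterization in (\ref{mu_j})--(\ref{nu_j}).

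\textbf{Step 1: bosonization/fermionization.} I would write
\begin{align*}
\mdet(H_N-z_1')\,\mdet(H_N-\overline{z}_2)
&=\int e^{(\Psi_1^*,(H_N-z_1')\Psi_1)+(\Psi_2^*,(H_N-\overline{z}_2)\Psi_2)}\,d\Psi_1 d\Psi_2,\\
\mdet^{-1}(H_N-z_1)\,\mdet^{-1}(H_N-\overline{z}_2')
&=\int e^{i(\Phi_1^*,(H_N-z_1)\Phi_1)-i(\Phi_2^*,(H_N-\overline{z}_2')\Phi_2)}\,d\Phi_1 d\Phi_2,
\end{align*}
with $\Psi_{1,2}$ Grassmann vectors in $\mathbb{C}^N$ and $\Phi_{1,2}$ complex vectors in $\mathbb{C}^N$; the contour choice (common factor $i$ with opposite signs on the two bosonic modes) is exactly what makes the $+-$ combination convergent and is what eventually forces the bosonic eigenvalues onto the rays $\mathcal{L}_\pm(\lambda_0)$ of (\ref{L_cal}). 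After inserting these and exchanging the order of integration, the dependence on $H_N$ is purely Gaussian and the average against (\ref{pr_l}) is explicit.

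\textbf{Step 2: decoupling and integration over $\Phi,\Psi$.} The tensor structure of the covariance in (\ref{H}) makes the quartic form produced by the Gaussian average depend only on the $2\times 2$ bilinears
$\rho^{(\Psi)}_{j,ll'}=\sum_\beta \Psi_{j,l,\beta}\overline{\Psi}_{j,l',\beta}$ and
$\rho^{(\Phi)}_{j,ll'}=\sum_\beta \Phi_{j,l,\beta}\overline{\Phi}_{j,l',\beta}$ at each site. Using the matrix $J$ of (\ref{J}), I would apply a Hubbard--Stratonovich transformation, introducing Hermitian bosonic matrices $B_j$ (on the $\Phi\Phi^*$ sector) and unitary fermionic matrices $U_j$ (on the $\Psi\Psi^*$ sector), together with the Grassmann off-diagonal blocks $\rho_j,\tau_j$ that tie the two sectors together. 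The resulting Gaussian integrals in $\Phi$ and $\Psi$ can then be performed in closed form, producing $\mdet^{-W}(\cdot)$ and $\mdet^{W}(\cdot)$ factors per site and, after combining with the off-diagonal blocks, precisely the Grassmann integrand $\mathcal{P}_N(U,B)$ defined in (\ref{P_n}).

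\textbf{Step 3: superbozonization and Haar decomposition.} At this point the remaining bosonic/fermionic sectors can be reduced to their radial forms via the superbozonization formula of \cite{SupB:08}: the $U(W)$-invariant integrand on the $W\times W$ blocks collapses to an integral of a supermatrix ``eigenvalue'' $(\hat U_j,\hat B_j)$ against the corresponding Weyl-type Jacobians $\prod_j (u_{j,1}-u_{j,2})^2\prod_j (b_{j,1}+b_{j,2})^2$, with the $u_{j,l}$ on $\mathbb{T}$ and the $b_{j,l}$ on the rays $\mathcal{L}_\pm(\lambda_0)$ dictated by Step 1. The remaining angular parts are parameterized by $V_j\in\mathring{U}(2)$ and $T_j\in\mathring{U}(1,1)$, which appear both in the Laplacian coupling (producing $K_m,L_m$ of (\ref{K})--(\ref{L}) with the quadratic forms involving $|(V_{j'}V_j^*)_{12}|^2$ and $|(T_{j'}T_j^{-1})_{12}|^2$) and in the source terms generated by the $\xi,\xi',\varepsilon$-parts of $Z_s$, giving $\mathcal{F}_m$ of (\ref{F_cal}). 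Because $\mathcal{P}_N$ is invariant under a common rotation of the $U_j$'s and $B_j$'s, one must gauge-fix the overall rotation at a chosen site $\overline{1}$ (hence $V_1=T_1=I$, and the missing $d\nu d\mu$ factor at $j=\overline{1}$); to keep the source term $\mathcal{F}_m$ manifestly symmetric under this gauge, the broken rotation is reinstated as the auxiliary integration $d\nu(S_1)d\mu(P_1)$.

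\textbf{Main obstacle.} Steps 1 and 2 are essentially bookkeeping. The delicate point is Step 3: one must justify the superbozonization formula in this setting, identify the correct contours (the rays $\mathcal{L}_\pm$ for the bosonic eigenvalues and $\mathbb{T}$ for the fermionic ones), produce the Vandermonde-type Jacobians, absorb all constant prefactors into the claimed $W^{4|\Lambda|}(8\pi^2)^{-|\Lambda|}$, and verify that the gauge-fixing at $\overline{1}$ combined with the auxiliary $(S_1,P_1)$-integration reproduces the claimed form of $\mathcal{F}_m$ in (\ref{F_cal}) without spurious factors. Once these are in place, collecting terms yields (\ref{G_int}).
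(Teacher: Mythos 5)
Your high-level outline (SUSY Gaussian representation, averaging, reduction to supermatrices, diagonalization, gauge-fixing at $\overline{1}$) matches the skeleton of the paper's argument, but there are two genuine conceptual gaps.

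First, you conflate the Hubbard--Stratonovich approach with the superbozonization approach; they are alternative tools, not successive steps. The paper does \emph{not} decouple the quartic form with auxiliary Gaussian HS fields. After averaging over $H_N$ one obtains an integrand in $\Phi,\Psi$ that is a function only of the $U(W)$-invariant $2\times 2$ bilinears $\tilde X_j,\tilde Y_j,\tilde\rho_j,\tilde\tau_j$ (see (\ref{G_av})), and the superbozonization formula of Proposition~\ref{p:supboz} is applied \emph{directly at that point} to replace the $\Phi,\Psi$-integrals by an integral over supermatrices $\left(\begin{smallmatrix}X_j&\rho_j\\\tau_j&Y_j\end{smallmatrix}\right)$ with $X_j$ unitary and $Y_j$ positive Hermitian. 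Your Step~2 (HS decoupling producing ``unitary fermionic matrices $U_j$'') is the thing superbozonization is designed to avoid: the unitary/positive-Hermitian contours are an output of Proposition~\ref{p:supboz}, not of a Gaussian HS integral (HS would give Hermitian matrices, and you would then face the notoriously delicate contour deformation for the fermion--fermion sector). Moreover, what you call ``superbozonization'' in your Step~3 is in fact the elementary diagonalization $U_j=P_j^*\hat U_jP_j$, $B_j=S_j^{-1}\hat B_jS_j$ and the corresponding $2\times2$ Jacobian $\prod_j(u_{j,1}-u_{j,2})^2(b_{j,1}+b_{j,2})^2$; it is not an application of \cite{SupB:08}.

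Second, you treat the bosonic eigenvalue contours $\mathcal{L}_\pm(\lambda_0)$ as ``dictated'' by the convergence factors in Step~1. That is not what happens. After superbozonization and diagonalization the eigenvalues $b_{j,1},b_{j,2}$ live on $\mathbb{R}_+$ (see (\ref{G_last})); the passage to $\mathcal{L}_+(\lambda_0)$ and $\mathcal{L}_-(\lambda_0)$ is a separate analytic contour deformation, carried out one $b_{j_0,l}$ at a time by integrating over the boundary contour $\mathcal{C}_R^\pm$ and controlling the arc at infinity via the lower bound $\Re L_m\ge CR^2$. This deformation is the only place where the hypothesis $|\lambda_0|<\sqrt{2}$ enters the proposition (it guarantees $|\arg a_\pm|<\pi/4$ so the cross terms $(b_{j,1}+b_{j,2})(b_{j',1}+b_{j',2})$ retain nonnegative real part). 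Your plan, as written, has no step that would establish this, and without it the statement of Proposition~\ref{p:int_rep_1} — which has the integrals over $\mathcal{L}_\pm(\lambda_0)^{|\Lambda|}$ rather than over $\mathbb{R}_+^{|\Lambda|}$ — is not reached.
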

\begin{proof}
Introduce complex and Grassmann fields:
\begin{align*}
\Phi_l=\{\phi_{jl}\}^t_{j\in\Lambda},&\quad \phi_{j l}=(\phi_{j l 1}, \phi_{j l 2},\ldots,
\phi_{j l W}),\quad l=1,2,
\quad -\quad \hbox{complex},\\
\Psi_l=\{\psi_{jl}\}^t_{j\in\Lambda}, &\quad \psi_{j l}=(\psi_{j l 1}, \psi_{j l 2},\ldots,
\psi_{j l W}),\quad l=1,2, \quad -\quad \hbox{Grassmann}.
\end{align*}
Using (\ref{G_C}) -- (\ref{G_Gr}) (see Appendix) we obtain
\begin{equation*}
\begin{array}{c}
G_2^{+-}(z,\xi)
=\mathbf{E}\Big\{\displaystyle\int \exp\{i\Psi_1^+(z_1^\prime-H_N)\Psi_1
-i\Psi_2^+(\overline{z}_2-H_N)\Psi_2\}\\
\times\exp\{i\Phi_1^+(z_1-H_N)\Phi_1-i\Phi_2^+
(\overline{z}_2^\prime-H_N)\Phi_2\}d\Phi d\Psi\Big\}\\
=\displaystyle\int d\Phi d\Psi\,\, \exp\Big\{i(z_1^\prime\Psi_1^+\Psi_1
+z_1\Phi_1^+\Phi_1)-i(\overline{z}_2\Psi_2^+\Psi_2
+\overline{z}_2^\prime\Phi_2^+\Phi_2)\Big\}\\
\times\mathbf{E}\Big\{\exp\Big\{-\sum\limits_{j\le k}\sum\limits_{\alpha, \beta}
\Big(i\Re H_{jk,\alpha\beta}\chi^+_{jk,\alpha\beta}
-\Im H_{jk,\alpha\beta}\chi^-_{jk,\alpha\beta}\Big)\Big\}\Big\},
\end{array}
\end{equation*}
where $z_l, z_l^\prime$ are defined in (\ref{z}),
\begin{align*}
&d\Phi=\prod\limits_{j\in\Lambda}\prod\limits_{\alpha=1}^W\prod\limits_{l=1}^2\dfrac{d\Re\phi_{jl\alpha}
d\Im\phi_{j l\alpha}}{\pi},\quad
d\Psi=\prod\limits_{j\in\Lambda}\prod\limits_{\alpha=1}^W\prod\limits_{l=1}^2d\overline{\psi}_{jl\alpha}
d\psi_{j l\alpha},\\
&\chi^{\pm}_{jk,\alpha\beta}=\eta_{jk,\alpha\beta}\pm \eta_{kj,\beta\alpha},\\
&\eta_{jk,\alpha\beta}=\overline{\psi}_{j 1\alpha}\psi_{k 1\beta}-
\overline{\psi}_{j 2\alpha}\psi_{k 2\beta}+\overline{\phi}_{j 1\alpha}\phi_{k 1\beta}-
\overline{\phi}_{j 2\alpha}\phi_{k 2\beta},\\
&\eta_{jj,\alpha\alpha}=(\overline{\psi}_{j 1\alpha}\psi_{j 1\alpha}-
\overline{\psi}_{j 2\alpha}\psi_{j 2\alpha}+\overline{\phi}_{j 1\alpha}\phi_{j 1\alpha}-
\overline{\phi}_{j 2\alpha}\phi_{j 2\alpha})/2.
\end{align*}
Averaging over (\ref{pr_l}), we get
\begin{equation*}
\begin{array}{c}
G_2^{+-}(z,\xi)=\displaystyle\int d\Phi d\Psi\,\, \exp\Big\{i(z_1^\prime\Psi_1^+\Psi_1
+z_1\Phi_1^+\Phi_1)-i(\overline{z}_2\Psi_2^+\Psi_2
+\overline{z}_2^\prime\Phi_2^+\Phi_2)\Big\}\\
\times\exp\Big\{-\sum\limits_{j<k, \alpha,\beta} J_{jk}\,\,
\eta_{jk,\alpha\beta}\eta_{kj,\beta\alpha}-\frac{1}{2}\sum\limits_{j, \alpha} J_{jj}\,\,
\eta_{jj,\alpha\alpha}^2\Big\}.
\end{array}
\end{equation*}
Thus, we have
\begin{equation}\label{G_av}
\begin{array}{c}
G_2^{+-}(z,\xi)=\displaystyle\int d\Phi d\Psi\,\, \exp\Big\{i\sum\limits_{j\in \Lambda} \Tr \tilde{X}_jLZ_1+
i\sum\limits_{j\in \Lambda} \Tr \tilde{Y}_jLZ_2\Big\}\\
\times \exp\Big\{\dfrac{1}{2}\sum\limits_{j,k\in\Lambda}J_{jk}\Tr (\tilde{X}_jL)(\tilde{X}_kL)-
\dfrac{1}{2}\sum\limits_{j,k\in\Lambda}J_{jk}\Tr (\tilde{Y}_jL)(\tilde{Y}_kL)\Big\}\\
\times \exp\Big\{-\sum\limits_{j,k\in\Lambda}J_{jk}\Tr (\tilde{\rho}_jL)(\tilde{\tau}_kL)\Big\},
\end{array}
\end{equation}
where $L$ is defined in (\ref{L_pm}),
$$Z_{1,2}=\lambda_0I+i\varepsilon L/N+\hat{\xi}_{1,2}/N\rho(\lambda_0),$$
\begin{align*}
\tilde{X}_j=\left(
\begin{array}{ll}
\psi_{j1}^+\psi_{j1}& \psi_{j1}^+\psi_{j2}\\
\psi_{j2}^+\psi_{j1}& \psi_{j2}^+\psi_{j2}
\end{array}
\right),& \quad \tilde{Y}_j=\left(
\begin{array}{ll}
\phi_{j1}^+\phi_{j1}& \phi_{j1}^+\phi_{j2}\\
\phi_{j2}^+\phi_{j1}& \phi_{j2}^+\phi_{j2}
\end{array}
\right),\\
\tilde{\rho}_j=\left(
\begin{array}{ll}
\psi_{j1}^+\phi_{j1}& \psi_{j1}^+\phi_{j2}\\
\psi_{j2}^+\phi_{j1}& \psi_{j2}^+\phi_{j2}
\end{array}
\right), &\quad \tilde{\tau}_j=\left(
\begin{array}{ll}
\phi_{j1}^+\psi_{j1}& \phi_{j1}^+\psi_{j2}\\
\phi_{j2}^+\psi_{j1}& \phi_{j2}^+\psi_{j2}
\end{array}
\right).
\end{align*}
Applying the superbosonization formula (see Proposition \ref{p:supboz}), we obtain
\begin{align}\notag
G_2^{+-}(z,\xi)&=(-\pi^2)^{-|\Lambda|}\displaystyle\int \prod\limits_{j\in \Lambda} dX_jdY_j
\prod\limits_{j\in \Lambda} d\rho_jd\tau_j  \\
&\times \exp\Big\{i\sum\limits_{j\in \Lambda}\Tr X_jLZ_1+
i\sum\limits_{j\in \Lambda} \Tr Y_jLZ_2\Big\}\label{sup}\\ \notag
&\times \exp\Big\{\dfrac{1}{2}\sum\limits_{j,k\in\Lambda}J_{jk}\Tr (X_jL)(X_kL)-
\dfrac{1}{2}\sum\limits_{j,k}J_{jk}\Tr (Y_jL)(Y_kL)\Big\}\\ \notag
&\times \exp\Big\{-\sum\limits_{j,k\in\Lambda}J_{jk}\Tr (\rho_jL)(\tau_kL)\Big\}
\prod\limits_{j\in\Lambda}\dfrac{\mdet^W Y_j}{\mdet^W (X_j-\rho_jY_j^{-1}\tau_j)},
\end{align}
where $\{X_j\}_{j\in\Lambda}$ are unitary $2\times 2$ matrices, $\{Y_j\}_{j\in\Lambda}$ are the
positive Hermitian matrices,
$\{\rho_j\}_{j\in\Lambda}$, $\{\tau_j\}_{j\in\Lambda}$ are $2\times 2$ matrices with
independent Grassmann variables, and
$dX_j$, $dY_j$, $d\rho_j\,d\tau_j$ is defined in Proposition \ref{p:supboz}.

  Shifting $\rho_jL\to \sqrt{W} \rho_j$, $\tau_jL\to \sqrt{W}\tau_j$ and defining
  $B_j=W^{-1} Y_jL$,\\ $U_j=-W^{-1}X_jL$ as new
variables, we can rewrite (\ref{sup}) as
\begin{align}\notag
G_2^{+-}(z,\xi)&=\dfrac{W^{4|\Lambda|}}{(-\pi^2)^{|\Lambda|}}\displaystyle\int \prod\limits_{j\in\Lambda} dU_jdB_j
\prod\limits_{j\in\Lambda} d\rho_jd\tau_j\\ \notag
&\times  \exp\Big\{-iW\sum\limits_{j\in\Lambda} \Tr U_jZ_1+
iW\sum\limits_{j\in\Lambda} \Tr B_jZ_2\Big\}\\ \label{sup1}
&\times \exp\Big\{\dfrac{W^2}{2}\sum\limits_{j,k}J_{jk}\Tr U_jU_k-
\dfrac{W^2}{2}\sum\limits_{j,k}J_{jk}\Tr B_jB_k\Big\}\\ \notag
&\times \exp\Big\{-W\sum\limits_{j,k}J_{jk}\Tr \rho_j\tau_k\Big\}
\prod\limits_{j\in\Lambda}\dfrac{\mdet^W B_j}{\mdet^W (U_j+W^{-1}\rho_jB_j^{-1}\tau_j)},
\end{align}
where $dU_j$ is defined in Proposition \ref{p:supboz}, and
$$dB_j=\mathbf{1}_{\widetilde{B}_jL>0}\cdot d\Re B_{12,j}\,d\Im B_{12,j}\,d B_{11,j}\,
d B_{22,j}.$$
Change the variables to
\begin{align}\label{diag}
U_j&=P_j^*\hat{U}_jP_j,\,\,\,\,\,\, \hat{U}_j=\hbox{diag}\,\{u_{j,1},u_{j,2}\},\,\,\,\,  P_j\in \mathring{U}(2),
\quad\,\,\, u_{j,1},u_{j,2}\in \mathbb{T},\\ \notag
B_j&=S_j^{-1}\hat{B}_jS_j,\,\, \hat{B}_j=\hbox{diag}\,\{b_{j,1},-b_{j,2}\},\,\,
S_j\in \mathring{U}(1,1),
\,\,\, b_{j,1},b_{j,2}\in \mathbb{R}^+,
\end{align}
and then
\begin{align*}
V_j&=P_jP_1^*,\quad\,\,\, j\in\Lambda\setminus\{\overline{1}\},\quad V_j\in \mathring{U}(2),\,\,\,\,\,\quad V_1=I,\\ \notag
T_j&=S_jS_1^{-1},\quad j\in\Lambda\setminus\{\overline{1}\},\quad T_j\in \mathring{U}(1,1),\quad T_1=I.
\end{align*}
The Jacobian of such a change is
\[
2^{|\Lambda|}(\pi/2)^{2|\Lambda|}\prod\limits_{j\in\Lambda}(u_{j,1}-u_{j,2})^2
\prod\limits_{j\in\Lambda}(b_{j,1}+b_{j,2})^2
\]
Substituting this, the expressions for $d U_j$, $d B_j$, and (\ref{J}), we obtain from~(\ref{sup1})
\begin{align}\label{G_last}
G_2^{+-}(z,\xi)&=\dfrac{W^{4|\Lambda|}}{(8\pi^2)^{|\Lambda|}}\displaystyle
\int \prod\limits_{j\in\Lambda\setminus\{\overline{1}\}} d\nu(T_j)d\mu(V_j)
\oint_{\mathbb{T}^{2|\Lambda|}} \prod\limits_{j\in\Lambda} d u_{j,1}du_{j,2}\\ \notag
&\times \int_{\mathbb{R}_+^{2|\Lambda|}} \prod\limits_{j\in\Lambda} d b_{j,1}db_{j,2}
\exp\Big\{-W\left(K_{m}(V,\hat{U})+L_{m}(T,\hat{B})\right)\Big\}\\ \notag
&\times\mathcal{P}_N
\left(V^*\hat{U}V, T^{-1}\hat{B}T\right) \prod\limits_{j\in\Lambda}
(u_{j,1}-u_{j,2})^2\,\prod\limits_{j\in\Lambda}
(b_{j,1}+b_{j,2})^2\\ \notag
&\times\int d\nu(S_1)d\mu(P_1) \exp\Big\{\mathcal{F}_m(P_1,S_1,V,T,\hat{U},\hat{B})\Big\},
\end{align}
where $d\mu$, $d\nu$ are the Haar measures of (\ref{mu}) and (\ref{nu}) correspondingly.

Here $\mathcal{P}_N$ and $\mathcal{F}_m$ are defined in (\ref{P_n}) and (\ref{F_cal}), and
\begin{multline*}
K_{m}(V,\hat{U})=\frac{\alpha}{2}\left(\nabla (V^*\hat{U}V)\right)^2\\
-
\sum\limits_{j\in\Lambda}\big(\Tr \hat{U}_j^2/2-i\lambda_0 \Tr \hat{U}_j-
\log \mdet \hat{U}_j\big)-U_*,
\end{multline*}
\begin{multline*}
L_{m}(S,\hat{B})=-\frac{\alpha}{2}\left(\nabla (T^{-1}\hat{B}T)\right)^2\\
+
\sum\limits_{j\in\Lambda}\big(\Tr \hat{B}_j^2/2-i\lambda_0 \Tr \hat{B}_j-
\log\mdet\hat{B}_j\big)+U_*,
\end{multline*}
where $U_*$ is a constant that will be chosen below (see (\ref{U*})).
Rewriting
\begin{align*}
\Tr (V_j^*\hat{U}_jV_j-V_{j^\prime}^*
\hat{U}_{j^\prime}V_{j^\prime})^2
=&\Tr (\hat{U}_j-\hat{U}_{j^\prime})^2\\ \notag&+
|(V_{j^{\prime}}V_j^*)_{12}|^2(u_{j,1}-u_{j,2})(u_{j^\prime,1}-u_{j^\prime,2}),\\ \notag
\Tr (T_j^{-1}\hat{B}_jT_j
-T_{j^\prime}^{-1}\hat{B}_{j^\prime}T_{j^\prime})^2=&\Tr (\hat{B}_j-\hat{B}_{j^\prime})^2\\ \notag
&-|(T_{j^\prime} T_j^{-1})_{12}|^2(b_{j,1}+b_{j,2})(b_{j^\prime,1}+b_{j^\prime,2}),
\end{align*}
we get (\ref{K}) and (\ref{L}).

Finally, note that for $|\lambda_0|<\sqrt{2}$ we can move the contour of integration over $\{b_{j,1}\}$,
$\{b_{j,2}\}$ in (\ref{G_last})
from $\mathbb{R}_+$ to $\mathcal{L}_+(\lambda_0)$ and $\mathcal{L}_-(\lambda_0)$ respectively.

Indeed, set $I_R=\{z\in\mathbb{C}: z=R+ix,\,\,x\in \mathbb{R}\}$ and consider the contours
\begin{align*}
\mathcal{C}_R^{+}&=(0,R)\cup\{z\in\mathcal{L}_+(\lambda_0): 0\le \Re z\le R\}\\
&\cup
\{z\in\mathbb{C}: z=R+ix,\,\,x\in \mathbb{R},0\le \arg z\le \arg a_+\},\\
\mathcal{C}_R^{-}&=\{\overline{z}: z\in\mathcal{C}_R^{+}\}.
\end{align*}
Take $b_{j_0,1}$ and fix all other $b_{j,1}\in \mathbb{R}_+\cup\mathcal{L}_+(\lambda_0)$, and
$b_{j,2}\in \mathbb{R}_+\cup\mathcal{L}_-(\lambda_0)$. Since the integrand is analytic,
integrating with respect to $b_{j_0,1}$ over $\mathcal{C}_R^{+}$ we get $0$. Note also
that
\[
\Re (b_{j_0,1}+b_{j_0,2})(b_{j,1}+b_{j,2})\ge 0
\]
for any $b_{j_0,1}\in \mathcal{C}_R^{+}\cap I_R$ and for any $j\in\Lambda$, since
both brackets have arguments from $-\pi/4$ to $\pi/4$ (because $|\lambda_0|< \sqrt{2}$ and thus
$|\arg a_\pm|< \pi/4$ (see (\ref{L_pm}))). Thus, (\ref{L_rew}) -- (\ref{B_pm}) yield
\begin{align*}
\Re L_m(T,\hat{B})\ge CR^2
\end{align*}
for sufficiently big $R$, i.e. the integral with respect to $b_{j_0,1}$ over
$\mathcal{C}_R^{+}\cap I_R$ tends to $0$, as $R\to\infty$.
Hence, we can change the contour of integration over $b_{j_0,1}$ from  $\mathbb{R}_+$ to
$\mathcal{L}_+(\lambda_0)$. Repeating the procedure for all $j_0\in\Lambda$, we get (\ref{G_int}).~$\quad\Box$
\end{proof}

\begin{proposition}\label{p:int_rep_2}
The function $G_2^{++}(z,\xi)$ of (\ref{G_2}) can be represented as follows:
\begin{align*}\notag
G_2^{++}(z,\xi)&=\dfrac{W^{4|\Lambda|}}{(4\pi)^{2|\Lambda|}}\displaystyle\int
\prod\limits_{j\in\Lambda\setminus\{\overline{1}\}} d\mu(\tilde{V}_j)d\mu(V_j)
\oint_{\mathbb{T}^{2|\Lambda|}} \prod\limits_{j\in\Lambda}
d u_{j,1}du_{j,2}\\
&\int_{\mathcal{L}_+(\lambda_0)^{2|\Lambda|}} \prod\limits_{j\in\Lambda}d a_{j,1}da_{j,2}\cdot
\exp\Big\{-W\left(K_{m}(V,\hat{U})+\tilde{L}_{m}(\tilde{V},\hat{A})\right)\Big\}\\ \notag
&\times \mathcal{P}_N
\left(P_1,\tilde{P}_1,V^*\hat{U}V,\tilde{V}^*\hat{A}\tilde{V}\right)
\prod\limits_{j\in\Lambda}
(u_{j,1}-u_{j,2})^2\,\prod\limits_{j\in\Lambda}
(a_{j,1}-a_{j,2})^2 \\ \notag
&\times\int d\mu(\tilde{P}_1)d\mu(P_1)
  \exp\Big\{\mathcal{F}_m(P_1,
\tilde{P}_1, V,\tilde{V},\hat{U},\hat{A})\Big\},
\end{align*}
where $d\mu$ is defined in (\ref{mu_j}), $P_1,\tilde{P}_1\in \mathring{U}(2)$, and
\begin{align}\label{L_til}
\tilde{L}_{m}(\tilde{V},\hat{A})=&-\alpha (\nabla \hat{A})^2/2\\ \notag
&-\alpha\sum\limits_{j\sim j^\prime}
|(\tilde{V}_{j^\prime}\tilde{V}_j^*)_{12}|^2
(a_{j^\prime,1}-a_{j^\prime,2})(a_{j,1}-a_{j,2})\\ \notag
&+
\sum\limits_{j\in\Lambda}\big(\Tr \hat{A}_j^2/2-i\lambda_0 \Tr \hat{A}_j-
\log\mdet\hat{A}_j\big)+U_*
\end{align}
with some constant $U_*$ that will be chosen later (see (\ref{U*})). Here
$V_j,\tilde{V}_j\in \mathring{U}(2)$, $V_1=\tilde{V}_1=I$,
$a_{j,1}, a_{j,2}\in \mathcal{L}_+(\lambda_0)$, $u_{j,1}, u_{j,2}\in \mathbb{T}$, and
\[
\hat{U}_j=\left(\begin{array}{cc}
u_{j,1}&0\\
0&u_{j,2}
\end{array}\right),\quad \hat{A}_j=\left(\begin{array}{cc}
a_{j,1}&0\\
0&a_{j,2}
\end{array}\right).
\]
\end{proposition}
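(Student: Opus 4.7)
The proof follows the same strategy as that of Proposition \ref{p:int_rep_1}, so I shall describe only the points where the two cases genuinely differ. The key observation is that for $G_2^{++}$, both denominators $\mdet(H_N-z_1)$ and $\mdet(H_N-z_2^\prime)$ have spectral parameters with strictly positive imaginary part $\varepsilon/N$, so the corresponding complex Gaussian integrals both converge on a compact symmetry orbit. In particular, the Cartan decomposition of the second bosonic matrix will be $A_j = \tilde P_j^*\hat A_j \tilde P_j$ with $\tilde P_j\in \mathring U(2)$, rather than a $\mathring U(1,1)$ hyperbolic rotation as in the $+-$ case.

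First, by (\ref{G_C}) -- (\ref{G_Gr}), I would write $G_2^{++}(z,\xi)$ as a Gaussian super-integral over Grassmann fields $\Psi_1,\Psi_2$ (coming from the numerators) and complex fields $\Phi_1,\Phi_2$ (coming from the denominators). Averaging over the measure (\ref{pr_l}) produces the quartic form of (\ref{G_av}), except that the spectral parameter $Z_2$ attached to the $\Phi_2$ sector now has positive imaginary part, matching $Z_1$, rather than the opposite sign as in the $+-$ case. Next, apply the superbosonization formula (Proposition \ref{p:supboz}) in each bosonic sector: since both sectors are Gaussians of the same signature, superbosonization delivers two unitary $2\times 2$ matrix variables $X_j$ and $Y_j$. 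After the rescaling $U_j=-W^{-1}X_jL$, $A_j=-W^{-1}Y_jL$, the diagonalizations $U_j=P_j^*\hat U_jP_j$, $A_j=\tilde P_j^*\hat A_j\tilde P_j$ with $P_j,\tilde P_j\in \mathring U(2)$, and the gauge choices $V_j=P_jP_1^*$, $\tilde V_j=\tilde P_j\tilde P_1^*$, the Jacobian of the joint diagonalization reads $\prod_j(u_{j,1}-u_{j,2})^2\prod_j(a_{j,1}-a_{j,2})^2$; the second Vandermonde now appears as a \emph{difference}, rather than the sum $b_{j,1}+b_{j,2}$ of the $+-$ case, precisely because both bosonic eigenvalues live on $\mathbb T$.

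The kinetic terms in the exponent are then obtained by the same computation as at the end of the proof of Proposition \ref{p:int_rep_1}, yielding $K_m(V,\hat U)$ together with $\tilde L_m(\tilde V,\hat A)$ of (\ref{L_til}); the minus sign in front of the off-diagonal piece of $\tilde L_m$ reflects compactness of $\mathring U(2)$ as opposed to the indefinite signature of $\mathring U(1,1)$. The final step is to deform the $a_{j,1}, a_{j,2}$ contours from the unit circle to $\mathcal L_+(\lambda_0)$, one variable at a time, using analyticity of the integrand in the relevant strip and bounding the arc contribution. The point where care is needed is the sign of $\Re\,(a_{j_0,1}-a_{j_0,2})(a_{j,1}-a_{j,2})$ for $a_{j_0,\cdot}$ on the shifting arc and $a_{j,\cdot}$ already on $\mathcal L_+(\lambda_0)$; this is ensured by $|\arg a_\pm|<\pi/4$, which is exactly the content of $|\lambda_0|<\sqrt 2$. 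As in the $+-$ case, this final contour deformation is the principal (but essentially routine) technical obstacle.
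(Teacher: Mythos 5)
Your opening claim --- that in the $++$ case the bosonic matrix is diagonalized by $\tilde P_j\in\mathring U(2)$ rather than by a $\mathring U(1,1)$ rotation --- is exactly the one point where the paper's proof differs from Proposition \ref{p:int_rep_1}, but the details you then give contradict it and are incorrect in themselves. Superbosonization (Proposition \ref{p:supboz}) does not ``deliver two unitary matrices'': independently of the signs of the imaginary parts of the spectral parameters, the block dual to $\bar\psi\psi$ is a unitary matrix $X_j$ and the block dual to $\bar\phi\phi$ is a \emph{positive Hermitian} matrix $Y_j$. What actually changes in the $++$ case is that after averaging the exponent contains no signature matrix $L$ (all four fields enter with $+i$), so the correct new variable is $A_j=W^{-1}Y_j$ with no factor $L$; this matrix is positive definite, is diagonalized by $\tilde P_j\in\mathring U(2)$, and has eigenvalues $a_{j,1},a_{j,2}\in\mathbb R_+$. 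Your substitution $A_j=-W^{-1}Y_jL$ destroys precisely this: $Y_jL$ has signature $(1,1)$ and cannot be diagonalized by a unitary conjugation with eigenvalues of one sign, so with your definition you would be forced back to $\mathring U(1,1)$ and the $(b_{j,1}+b_{j,2})^2$ structure of the $+-$ case, contradicting the statement you are proving.

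The same confusion propagates to the final step. The bosonic eigenvalues do not ``live on $\mathbb T$''; the Jacobian carries $(a_{j,1}-a_{j,2})^2$ because $\hat A_j=\mathrm{diag}\{a_{j,1},a_{j,2}\}$ has two eigenvalues of the same sign, in contrast with $\hat B_j=\mathrm{diag}\{b_{j,1},-b_{j,2}\}$ in the $+-$ case, not because the eigenvalues are unimodular. Consequently the contour deformation is from $\mathbb R_+^{2|\Lambda|}$ to $\mathcal L_+(\lambda_0)^{2|\Lambda|}$, one variable at a time with large arcs closing the contour exactly as at the end of the proof of Proposition \ref{p:int_rep_1}; a deformation ``from the unit circle to $\mathcal L_+(\lambda_0)$'' is not meaningful (a closed compact contour cannot be moved onto an unbounded ray) and is not what is required. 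Your high-level strategy does coincide with the paper's (which simply notes that the averaged expression is the r.h.s.\ of (\ref{G_av}) without $L$, that $A_j$ is then positive definite and diagonalized in $\mathring U(2)$, and that the rest repeats Proposition \ref{p:int_rep_1}), but as written the middle of your argument would not produce the stated representation.
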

\begin{proof}
Again using (\ref{G_C}) -- (\ref{G_Gr}) we get
\begin{equation*}
\begin{array}{c}
G_2^{++}(z,\xi)
=\mathbf{E}\Big\{\displaystyle\int \exp\{i\Psi_1^+(z_1^\prime-H_N)\Psi_1
+i\Psi_2^+(z_2-H_N)\Psi_2\}\\
\times\exp\{i\Phi_1^+(z_1-H_N)\Phi_1+i\Phi_2^+
(z_2^\prime-H_N)\Phi_2\}d\Phi d\Psi\Big\}.
\end{array}
\end{equation*}
After averaging this gives the r.h.s. of (\ref{G_av}), but without $L$.
Further calculations repeat almost literally the proof of Proposition 1.
The only difference is that the matrices $A_j=WY_j$ are positive-definite now,
and thus can be diagonalized by $\tilde{P}_j\in \mathring{U}(2)$ except
$S_j\in \mathring{U}(1,1)$, and hence $\tilde{V}_j=\tilde{P}_j\tilde{P}_1^*\in \mathring{U}(2)$.
\end{proof}

\section{Preliminary results}

Choose
\begin{equation}\label{U*}
U_*=2|\Lambda| \cdot\Re \left(a_+^2/2- i\lambda_0 a_+-
\log a_+\right).
\end{equation}
\begin{lemma}\label{l:kont_b}
Let $b_{j,1}\in \mathcal{L}_+(\lambda_0)$, $b_{j,2}\in \mathcal{L}_-(\lambda_0)$, $j\in\Lambda$
and let $T_j\in \mathring{U}(1,1)$, $\hat{B}_j=\hbox{diag}\,\{b_{j,1}, -b_{j,2}\}$. Suppose also
that $|\lambda_0|<\sqrt{2}$.
Then $$\Re\,L_{m}(T,\hat{B})\ge 0,$$ where $L_m(T,\hat{B})$ is defined in (\ref{L}),
and the equality holds if and only if
$\hat{B}_j=L_{\pm}$, $T_j=I$ for each $j\in\Lambda$.
\end{lemma}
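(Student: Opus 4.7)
The plan is to decompose $\Re\,L_m(T,\hat B)$ into two pieces: an on-site sum depending only on $\{\hat B_j\}_{j\in\Lambda}$, and a coupling piece involving $(\nabla\hat B)^2$ together with the $T$-dependent term. In each piece I would parametrize the rays by $b_{j,1}=r_{j,1}\,a_+$ and $b_{j,2}=r_{j,2}\,\overline{a_+}$ with $r_{j,l}\ge 0$, and exploit the algebraic identities $|a_+|=1$, $a_+\overline{a_+}=1$, $a_+^2=i\lambda_0 a_++1$; in particular $\Re(a_+^2)=1-\lambda_0^2/2>0$ precisely because $|\lambda_0|<\sqrt{2}$.

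For the on-site contribution, the site-$j$ summand equals $f(b_{j,1})+f(-b_{j,2})$ with $f(z):=z^2/2-i\lambda_0 z-\log z$, and a direct calculation along either ray gives $\Re f(r a_+)=\Re f(r a_-)=g(r)$ where
\[
g(r):=\tfrac{r^2}{2}\bigl(1-\tfrac{\lambda_0^2}{2}\bigr)+\tfrac{r\lambda_0^2}{2}-\log r.
\]
One checks $g'(1)=0$ and $g''(r)=(1-\lambda_0^2/2)+r^{-2}>0$ for every $r>0$ under $|\lambda_0|<\sqrt{2}$, so $g$ is strictly convex on $(0,\infty)$ with unique global minimum $g(1)=\tfrac12+\tfrac{\lambda_0^2}{4}=\Re f(a_\pm)$. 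Summing over $j\in\Lambda$ and invoking (\ref{U*}) I obtain $\sum_{j\in\Lambda}\Re[f(b_{j,1})+f(-b_{j,2})]\ge 2|\Lambda|\,\Re f(a_+)=U_*$, with equality iff $r_{j,l}=1$ for every $(j,l)$, i.e., iff $\hat B_j=L_\pm$ for every $j$.

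For the coupling piece, the same parametrization yields
\[
\Re[(b_{j,1}+b_{j,2})(b_{j',1}+b_{j',2})]=(1-\tfrac{\lambda_0^2}{2})(r_{j,1}r_{j',1}+r_{j,2}r_{j',2})+(r_{j,1}r_{j',2}+r_{j,2}r_{j',1})\ge 0,
\]
so the term $\alpha\sum_{j\sim j'}|(T_{j'}T_j^{-1})_{12}|^2(b_{j,1}+b_{j,2})(b_{j',1}+b_{j',2})$ is manifestly non-negative in real part and vanishes iff $(T_{j'}T_j^{-1})_{12}=0$ on every edge. The gradient contribution $-\alpha(\nabla\hat B)^2/2$ is, however, non-positive in real part since $\Re(b_{j,l}-b_{j',l})^2=(r_{j,l}-r_{j',l})^2(1-\lambda_0^2/2)\ge 0$; I would absorb this deficit into the strict-convexity surplus of the on-site part via the quadratic bound $g(r)-g(1)\ge\tfrac12(1-\tfrac{\lambda_0^2}{2})(r-1)^2$ (uniform on $(0,\infty)$, since $g''(r)\ge 1-\lambda_0^2/2$), combined with a spectral bound $\sum_{j\sim j'}(s_j-s_{j'})^2\le C\,d\sum_j s_j^2$ for the discrete Laplacian on $\Lambda$ applied to $s_j=r_{j,l}-1$.

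The main obstacle is exactly this last balancing step: neither the gradient term nor the on-site excess has a useful sign in isolation, and the bound $\alpha<1/(4d)$ together with $|\lambda_0|<\sqrt{2}$ must be used simultaneously to close the inequality. Once this is done, equality in $\Re\,L_m\ge 0$ forces $r_{j,l}=1$ for all $(j,l)$ by strict convexity of $g$ (so $\hat B_j=L_\pm$); at this point the coefficient of $|(T_{j'}T_j^{-1})_{12}|^2$ becomes strictly positive, forcing $(T_{j'}T_j^{-1})_{12}=0$ on every edge. The parametrization (\ref{nu_j}) shows this means $T_j=T_{j'}$ on every edge, and connectedness of $\Lambda$ with $T_1=I$ then propagates to $T_j=I$ for all $j\in\Lambda$, completing the equality characterization.
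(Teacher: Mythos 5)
Your proposal is correct and follows essentially the same route as the paper: the same ray parametrization $b_{j,1}=r_{j,1}a_+$, $b_{j,2}=r_{j,2}\overline{a_+}$, the same computation showing $\Re\big[(b_{j,1}+b_{j,2})(b_{j',1}+b_{j',2})\big]\ge 0$ for $|\lambda_0|<\sqrt{2}$, and the same use of $\alpha<1/4d$ to dominate the gradient term by the on-site quadratic surplus, with the identical equality analysis ($r_{j,l}=1$, then $(T_{j'}T_j^{-1})_{12}=0$ and $T_1=I$). Your convexity bound $g(r)-g(1)\ge\tfrac{2-\lambda_0^2}{4}(r-1)^2$ is exactly the content of the paper's identity $g(r)-g(1)=\tfrac{2-\lambda_0^2}{4}(r-1)^2+(r-\log r-1)$ in (\ref{expr_2}), so the "balancing step" you flag as the main obstacle closes precisely as in the paper.
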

\begin{proof}
Rewrite (\ref{L}) as
\begin{multline}\label{L_rew}
L_m(T,\hat{B})=B_+(b_1,\lambda_0)+B_-(b_2,\lambda_0)\\+\alpha\sum\limits_{j\sim j^\prime}
|(T_{j^\prime}T_j^{-1})_{12}|^2(b_{j,1}+b_{j,2})(b_{j^\prime,1}+b_{j^\prime,2}),
\end{multline}
where for $b\in \mathbb{C}^{|\Lambda|}$
\begin{align}\label{B_pm}
B_{\pm}(b,\lambda_0)&=-\alpha\,(\nabla b)^2/2+
\sum\limits_{j\in\Lambda}\big(b_j^2/2\mp i\lambda_0 b_j-
\log b_j-b_\pm\big),\\ \notag
 b_\pm&=a_\pm^2/2- i\lambda_0 a_\pm-
\log a_\pm.
\end{align}
Since $b_{j,1}\in \mathcal{L}_+(\lambda_0)$, $b_{j,2}\in
\mathcal{L}_-(\lambda_0)$, we have
\[
b_{j,1}=r_{j,1}e^{i\phi_+},\quad b_{j,2}=r_{j,2}e^{-i\phi_+},\quad r_{j,1},r_{j,2}\ge 0
\]
with $\phi_+=\hbox{arg}\,a_+$, and thus
\begin{align*}
\Re (b_{j,1}+b_{j,2})(b_{j^\prime,1}&+b_{j^\prime,2})=\dfrac{4-\lambda_0^2}{4}(r_{j,1}+r_{j,2})
(r_{j^\prime,1}+r_{j^\prime,2})\\
&-\dfrac{\lambda_0^2}{4}(r_{j,1}-r_{j,2})(r_{j^\prime,1}-r_{j^\prime,2})\\
=&(r_{j,1}+r_{j,2})(r_{j^\prime,1}+r_{j^\prime,2})-
\dfrac{\lambda_0^2}{2}(r_{j,1}r_{j^\prime,1}+r_{j,2}r_{j^\prime,2})\ge 0
\end{align*}
for $|\lambda_0|<\sqrt{2}$. Hence,
\begin{align}\label{in}
\Re L_m(T,\hat{B})\ge \Re B_+(b_1,\lambda_0)+\Re B_-(b_2,\lambda_0)
\end{align}
and the equality holds only if $(T_{j^\prime}T_j^{-1})_{12}=0$ for all $j\sim j^\prime$.
Since $T_1=I$ this means that $T_j=I$ for each $j\in\Lambda$.

Rewrite (\ref{B_pm}) as
\begin{equation}\label{expr_2}
\begin{array}{c}
\Re\,B_\pm(b,\lambda_0)=\Re\,B_\pm(r\,e^{\pm i\phi_+},\lambda_0)=\sum\limits_{j\in\Lambda} \Big(\dfrac{\lambda_0^2}{2}r_j-
\log r_j -b_\pm\Big)\\
-\dfrac{\cos 2\phi_+}{2}
\cdot\Big(\alpha\sum\limits_{j\sim j^\prime}
(r_j-r_{j^\prime})^2-\sum\limits_{j\in\Lambda}r_j^2\Big)\\
=-\dfrac{2-\lambda_0^2}{4}\left(\alpha\sum\limits_{j\sim j^\prime}
(r_j-r_{j^\prime})^2-\sum\limits_{j\in\Lambda}(r_j-1)^2\right)
+\sum\limits_{j\in\Lambda} (r_j- \log r_j-1).
\end{array}
\end{equation}
Since $I+\alpha\Delta>C>0$ for any $\alpha<1/4d$, we have
\begin{equation}\label{kv_form_1}
-\alpha\sum\limits_{j\sim j^\prime}(x_j-x_{j^\prime})^2
+\sum\limits_{j\in\Lambda} x_j^2\ge C\sum\limits_{j\in\Lambda} x_j^2.
\end{equation}
and the equality holds only at $x_1=\ldots=x_n=0$. Besides, for $r_j\ge 0$
\begin{equation}\label{pos}
r_j-\log r_j-1\ge 0
\end{equation}
and the equality holds only for $r_j=1$. This, (\ref{kv_form_1}) for $x_j=r_j-1$, and (\ref{in})  prove
Lemma~\ref{l:kont_b}.

\end{proof}
We need the analogous lemma for the function $K_{m}(V,\hat{U})$ of (\ref{K}):
\begin{lemma}\label{l:kont_u}
Let $u_{j,1}, u_{j,2}\in \mathbb{T}$, $j\in\Lambda$,
and let $\hat{U}_j=\mathrm{diag}\,\{u_{j,1}, u_{j,2}\}$, $V_j\in \mathring{U}(2)$.
Then $$\Re\,K_{m}(V,\hat{U})\ge 0,$$ where $K_{m}(V,\hat{U})$ is defined in (\ref{K}),
and the equality holds if and only if one of the following conditions holds
\begin{enumerate}
    \item $\hat{U}_j=L_{\pm}\,$ or $\,\hat{U}_j=L_{\mp}$, $j\in\Lambda$, and
    $$
    |(V_j)_{12}|=\left\{\begin{array}{ll}
    0,& \hat{U}_j=\hat{U}_1,\\
    1,& \mathrm{otherwise}.
    \end{array}\right.
    $$

    \item $\hat{U}_j=L_{+}$, $j\in\Lambda$.

    \item $\hat{U}_j=L_{-}$, $j\in\Lambda$.
\end{enumerate}

\end{lemma}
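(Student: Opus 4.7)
The plan is to mirror the strategy of Lemma~\ref{l:kont_b}, with the additional complication that $u_{j,l}$ lie on the compact torus $\mathbb{T}$ (rather than on a ray), producing two degenerate families of saddle points and hence the three separate equality cases.

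Since $K_m(V,\hat U)$ depends on $V$ and $\hat U$ only through the unitary field $U_j:=V_j^*\hat U_j V_j\in U(2)$, I would first rewrite
\[
K_m \;=\; \tfrac{\alpha}{2}\sum_{j\sim j'}\Tr(U_j-U_{j'})^2 \;-\; \sum_{j\in\Lambda} E(U_j) \;-\; U_*,\qquad E(U):=\tfrac{1}{2}\Tr U^2 - i\lambda_0\Tr U - \log\det U.
\]
For a unitary $U$ with eigenvalues $e^{i\theta_1},e^{i\theta_2}$, a direct computation gives $\Re E(U)=\sum_{l}\bigl(\tfrac{1}{2}\cos 2\theta_l + \lambda_0\sin\theta_l\bigr)$, and analyzing the critical equation $\cos\theta(\lambda_0-2\sin\theta)=0$ shows that the absolute maxima on $\mathbb T$ occur precisely at $\theta\in\{\arg a_+,\arg a_-\}$, both taking value $u_*=(2+\lambda_0^2)/4$. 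Hence $\Re E(U)\le 2u_*$, with equality iff both eigenvalues of $U$ lie in $\{a_+,a_-\}$. This single-site inequality plays the role of $r-\log r-1\ge 0$ from Lemma~\ref{l:kont_b}.

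To upgrade the pointwise bound to $\Re K_m\ge 0$, I would expand $U_j = U_0\, e^{iH_j}$ around any saddle $U_0$ with eigenvalues in $\{a_+,a_-\}$ and compute
$\Re E(U_j)=2u_*-\tfrac{4-\lambda_0^2}{4}\Tr H_j^2 + O(H_j^3)$,
so the site action has a positive mass gap transverse to the saddle manifold. Combining the bound $\Re\Tr(U_j-U_{j'})^2\ge -\|U_j-U_{j'}\|_F^2$ with the positive-definiteness of $I+\alpha\Delta$ for $\alpha<1/4d$ (inequality~(\ref{kv_form_1}) used in Lemma~\ref{l:kont_b}) produces a positive quadratic form for $K_m$ on the tangent space at the saddle manifold, and a compactness argument globalizes this to $\Re K_m\ge 0$, with equality iff $U_j$ is constant in $j$ and its eigenvalues lie in $\{a_+,a_-\}$.

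Finally, to translate ``$U_j$ constant with eigenvalues in $\{a_+,a_-\}$'' back to conditions on $(V,\hat U)$: $U_0=a_+I$ forces all $\hat U_j=L_+$ (case~(2)); $U_0=a_-I$ gives all $\hat U_j=L_-$ (case~(3)); and for $U_0\in\{L_\pm,L_\mp\}$ one finds $\hat U_j\in\{L_\pm,L_\mp\}$ with $V_j$ commuting with $L_\pm$ (i.e.\ diagonal, $|(V_j)_{12}|=0$) when $\hat U_j=\hat U_1$ and off-diagonal ($|(V_j)_{12}|=1$) when $\hat U_j\ne\hat U_1$, which is exactly case~(1). The main obstacle is the global step above: because the site action has two distinct families of minima, configurations interpolating between the $a_+$- and $a_-$-branches must be handled carefully, and unlike Lemma~\ref{l:kont_b} the variables cannot be reduced to positive radial ones on which the strict convexity of $r-\log r-1$ would immediately conclude the argument.
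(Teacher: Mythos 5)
You have correctly identified the relevant per-site potential: writing $U_j=V_j^*\hat U_jV_j$ and $E(U)=\tfrac12\Tr U^2-i\lambda_0\Tr U-\log\det U$, the bound $\Re E(U)\le 2u_*$ with $u_*=(2+\lambda_0^2)/4$ and equality iff both eigenvalues lie in $\{a_+,a_-\}$ is correct and is exactly what (\ref{expr_1}) encodes via $(\sin\phi_j-\lambda_0/2)^2\ge 0$. The translation of the equality set back into the three stated cases is also fine.

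The gap is in the global positivity step, and it is more than a technical detail. The per-site potential $\Re E(U_j)$ is a \emph{spectral} function --- it depends only on the eigenvalues of $U_j$ and is completely flat in the conjugation direction $V_j$. But the coupling term $\Re\Tr(U_j-U_{j'})^2$ is \emph{not} spectral and it genuinely varies along those flat directions (e.g.\ with $\hat U_j=\hat U_{j'}=L_\pm$ and $V_{j'}$ a rotation of angle $v$, one gets $\Tr(U_j-U_{j'})^2=2v^2(a_+-a_-)^2$, which moves while the site energies stay at their extremum). Hence the crude bound $\Re\Tr(U_j-U_{j'})^2\ge -\|U_j-U_{j'}\|_F^2$ cannot be paired with the site bound and $I+\alpha\Delta>0$ to produce a positive definite form: the quantity $\|U_j-U_{j'}\|_F^2$ sees differences in $V$ that the potential cannot compensate. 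A Hessian-at-the-saddle argument plus ``compactness'' does not close this either, since the Hessian is degenerate along the saddle manifold and local positivity there says nothing about far-away configurations; you would at least need to classify all critical points of $\Re K_m$ on the compact space, which you do not do. The paper's proof avoids all of this by splitting $\Tr(U_j-U_{j'})^2=\Tr(\hat U_j-\hat U_{j'})^2+|(V_{j'}V_j^*)_{12}|^2(u_{j,1}-u_{j,2})(u_{j',1}-u_{j',2})$, bounding the $V$-dependent piece from below by replacing $|(V_{j'}V_j^*)_{12}|^2\in[0,1]$ with the indicator $\Delta_{jj'}$ of where the second factor has negative real part, and then reinterpreting the resulting expression as a scalar problem on a \emph{rewired} graph $\tilde G$ where the $\cos$ and $\sin$ components can be separated and only $\sin\phi_j-\lambda_0/2$ needs the $I+\alpha\Delta$ bound. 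That indicator/rewiring device --- which reduces the two-dimensional nonabelian coupling to a scalar Laplacian problem --- is the essential idea of the paper's proof, and it is absent from your plan. Since you explicitly flag the global step as the obstacle without supplying a substitute for this device, the proposal as written does not prove the lemma.
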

\begin{proof}
Rewrite (\ref{K}) as
\begin{multline*}
K_{m}(V,\hat{U})=F(u_{1},\lambda_0)+F(u_{2},\lambda_0)\\+
\alpha\sum\limits_{j\sim j^\prime} |(V_{j^\prime}V_j^{-1})_{12}|^2(u_{j,1}-u_{j,2})(u_{j^\prime,1}-
u_{j^\prime,2}),
\end{multline*}
where for $u\in \mathbb{T}^{|\Lambda|}$
\begin{equation*}
F(u,\lambda_0)=\alpha\,(\nabla u)^2/2-
\sum\limits_{j\in\Lambda}\Big(u_j^2/2
-i\lambda_0 u_j-
\log u_j-b_\pm\Big),
\end{equation*}
and $b_\pm$ is defined in (\ref{B_pm}).

Set
\[
\Delta_{jj^\prime}=
\left\{\begin{array}{ll}
0,& \Re (u_{j,1}-u_{j,2})(u_{j^\prime,1}-u_{j^\prime,2})\ge 0,\\
1,& \Re (u_{j,1}-u_{j,2})(u_{j^\prime,1}-u_{j^\prime,2})<0,\\
\end{array}\right.
\]
Then
\begin{multline}\label{Re_in}
\Re K_{m}(V,\hat{U})\ge \Re F(u_{1},\lambda_0)+\Re F(u_{2},\lambda_0)\\+
\alpha\sum\limits_{j\sim j^\prime} \Delta_{jj^\prime} \cdot \Re (u_{j,1}-u_{j,2})(u_{j^\prime,1}-
u_{j^\prime,2}),
\end{multline}
end the equality holds if and only if
\begin{equation*}
|(V_{j^\prime}V_j^{-1})_{12}|^2=
\left\{\begin{array}{ll}
\Delta_{jj^\prime},& \Re (u_{j,1}-u_{j,2})(u_{j^\prime,1}-u_{j^\prime,2})\ne 0,\\
\hbox{any},& \Re (u_{j,1}-u_{j,2})(u_{j^\prime,1}-u_{j^\prime,2})=0.\\
\end{array}\right.
\end{equation*}
The expression
\begin{equation}\label{lapl}
\dfrac{1}{2}\sum\limits_{j\sim j^\prime} (u_{j,1}-u_{j^\prime,1})^2+\dfrac{1}{2}
\sum\limits_{j\sim j^\prime} (u_{j,2}-u_{j^\prime,2})^2
\end{equation}
is a discrete Laplacian operator on the graph $G$ which consists of
two connected components $\Lambda^{(1)}$ (it corresponds to $u_{j,1}$) and $\Lambda^{(2)}$
(it corresponds to $u_{j,2}$), which are two copies of the box $\Lambda$.

Note that
\begin{multline*}
(u_{j,1}-u_{j^\prime,1})^2/2+(u_{j,2}-u_{j^\prime,2})^2/2+
(u_{j,1}-u_{j,2})(u_{j^\prime,1}-u_{j^\prime,2})\\
=(u_{j,2}-u_{j^\prime,1})^2/2+
(u_{j,1}-u_{j^\prime,2})^2/2,
\end{multline*}
and thus adding $(u_{j,1}-u_{j,2})(u_{j^\prime,1}-u_{j^\prime,2})$ to (\ref{lapl})
we change edges $$(\{1,j\},\{1,j^\prime\}) \,\,\hbox{and}\,\,(\{2,j\},\{2,j^\prime\})$$ in the graph $G$
to edges
$$(\{1,j\}, \{2,j^\prime\}) \,\,\hbox{and}\,\,(\{2,j\}, \{1,j^\prime\}).$$

Hence, the r.h.s. of (\ref{Re_in}) is equal to
\begin{multline}\label{graph}
\Re F_{\tilde{G}}(u,\lambda_0):=\alpha\,\Re \sum\limits_{j\sim j^\prime\in \tilde{G}}
(u_j-u_{j^\prime})^2/2\\-
\Re\sum\limits_{j\in\tilde{G}}\Big(u_j^2/2
-i\lambda_0 u_j-
\log u_j-b_\pm\Big),
\end{multline}
where $\tilde{G}$ is a graph obtained from $G$ after all edges changes, which correspond to the
pair $j\sim j^\prime$ such that $\Delta_{jj^\prime}=1$.

Consider $u_j=e^{i\phi_j}\in \mathbb{T}$, $j\in \tilde{G}$ and write
\begin{equation}\label{expr_1}
\begin{array}{c}
\Re\,F_{\tilde{G}}(u,\lambda_0)=\dfrac{\alpha}{2}\sum\limits_{j\sim j^\prime\in \tilde{G}}
\Big((\cos\phi_j-\cos\phi_{j^\prime})^2-
(\sin\phi_j-\sin\phi_{j^\prime})^2\Big)\\
-\sum\limits_{j\in \tilde{G}}\left(
(\cos^2\phi_j-\sin^2\phi_j)/2+\lambda_0\sin\phi_j-\dfrac{2+\lambda_0^2}{4}\right)\\
=\dfrac{\alpha}{2}\sum\limits_{j\sim j^\prime\in \tilde{G}}\Big((\cos\phi_j-\cos\phi_{j^\prime})^2-
(\sin\phi_j-\sin\phi_{j^\prime})^2\Big)\\
+\sum\limits_{j\in\tilde{G}}(\sin\phi_j-\lambda_0/2)^2.
\end{array}
\end{equation}
Since $I+\alpha\Delta>C>0$ for any $\alpha<1/4d$ on any finite graph of degree $2d$, we have
\begin{equation*}
-\dfrac{\alpha}{2}\sum\limits_{j\sim j^\prime\in \tilde{G}}(x_j-x_{j^\prime})^2
+\sum\limits_{j\in \tilde{G}} x_j^2\ge C\sum\limits_{j\in\Lambda}x_j^2
\end{equation*}
and the equality holds only if $x_1=\ldots=x_n=0$. Using this for $x_j=\sin\phi_j-\lambda_0/2$,
we get that (\ref{expr_1}) (and thus the r.h.s. of (\ref{Re_in})) is non-negative and it is zero if
and only if
\begin{align*}
&\sin\phi_j=\lambda_0/2,\quad j\in\tilde{G},\\
&\cos\phi_j=\hbox{const}\,\, \hbox{for\,\,each\,\,connected\,\,component\,\,of}\,\, \tilde{G},
\end{align*}
which means that $u_{j,1}$ and $u_{j,2}$ are equal to $a_\pm$ and are the same for
each connected component of $\tilde{G}$. Thus, $\Re\,K_{m}(V,\hat{U})\ge 0$ and at the minimum point $$u_{j,1}-u_{j,2}\in \mathbb{R}.$$
Taking into account that the Haar measure over $\mathring{U}(2)$ is invariant with respect to the shifting
\[
P_j\to \left(\begin{array}{ll}
0&1\\
1&0
\end{array}
\right)P_j,
\]
where $P_j$ is defined in (\ref{diag}), we can assume without loss of generality that at the minimum point
$$u_{j,1}-u_{j,2}\ge 0.$$
Then $\Delta_{jj^\prime}=0$, thus $\tilde{G}=G$, and hence (\ref{expr_1}) is zero if and only
if fields $u_{j,1}$ and $u_{j,2}$ are constant, and this constants equal to $a_{\pm}$.
This gives the assertion of the lemma.

\end{proof}

\begin{lemma}\label{l:kont_b_+}
Let $a_{j,1}, a_{j,2}\in \mathcal{L}_+(\lambda_0)$, $j\in\Lambda$
and let $\hat{A}_j=\hbox{diag}\,\{a_{j,1}, a_{j,2}\}$, $\tilde{V}_j\in \mathring{U}(2)$. Suppose also
that $|\lambda_0|<\sqrt{2}$.
Then $$\Re\,\tilde{L}_{m}(\tilde{V},\hat{A})\ge 0,$$ where $\tilde{L}_m(\tilde{V},\hat{A})$
is defined in (\ref{L_til}), and the equality holds if and only if
$\hat{A}_j=L_{+}$ for each $j\in\Lambda$.
\end{lemma}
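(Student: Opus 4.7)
Since $a_{j,l}\in\mathcal{L}_+(\lambda_0)$ for both $l=1,2$, parametrize $a_{j,l}=r_{j,l}e^{i\phi_+}$ with $r_{j,l}\ge 0$ and $\phi_+=\hbox{arg}\,a_+$. Using $\sin\phi_+=\lambda_0/2$, $\cos(2\phi_+)=1-\lambda_0^2/2$ (which is positive precisely where $|\lambda_0|<\sqrt{2}$), and $|a_+|=1$ (so $\log a_+$ is purely imaginary), taking the real part of (\ref{L_til}) and substituting the value of $U_*$ from (\ref{U*}) yields, after the same bookkeeping that led to (\ref{expr_2}) in Lemma \ref{l:kont_b},
\begin{equation*}
\Re\tilde L_m = \frac{1-\lambda_0^2/2}{2}\Big[\sum_{j,l}(r_{j,l}-1)^2-\alpha Q_c(r)\Big]+\sum_{j,l}\bigl(r_{j,l}-\log r_{j,l}-1\bigr),
\end{equation*}
where $c_{jj'}:=|(\tilde V_{j'}\tilde V_j^*)_{12}|^2\in[0,1]$ and
\begin{equation*}
Q_c(r)=\sum_{l=1,2}\sum_{j\sim j'}(r_{j,l}-r_{j',l})^2+2\sum_{j\sim j'}c_{jj'}(r_{j,1}-r_{j,2})(r_{j',1}-r_{j',2}).
\end{equation*}

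The key new ingredient (beyond Lemmas \ref{l:kont_b}--\ref{l:kont_u}, where the cross terms had a definite sign) is the \emph{swap identity}
\begin{align*}
(r_{j,1}&-r_{j',1})^2+(r_{j,2}-r_{j',2})^2+2c_{jj'}(r_{j,1}-r_{j,2})(r_{j',1}-r_{j',2})\\
&=(1-c_{jj'})\bigl[(r_{j,1}-r_{j',1})^2+(r_{j,2}-r_{j',2})^2\bigr]+c_{jj'}\bigl[(r_{j,1}-r_{j',2})^2+(r_{j,2}-r_{j',1})^2\bigr],
\end{align*}
which is a direct expansion. It exhibits $Q_c$ as the Dirichlet form of a weighted graph $G_c$ on $2|\Lambda|$ vertices indexed by $(l,j)$, in which every original edge $j\sim j'$ contributes a ``straight'' pair of weighted edges $\bigl((1,j),(1,j')\bigr),\bigl((2,j),(2,j')\bigr)$ with weight $1-c_{jj'}$ and a ``swapped'' pair $\bigl((1,j),(2,j')\bigr),\bigl((2,j),(1,j')\bigr)$ with weight $c_{jj'}$. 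Since $(1-c_{jj'})+c_{jj'}=1$, each vertex of $G_c$ has total incident weight $2d$, and the elementary bound used already in Lemma \ref{l:kont_b} (cf. (\ref{kv_form_1})) gives
\begin{equation*}
\sum_{j,l}x_{j,l}^2-\alpha Q_c(x)\ge(1-4d\alpha)\sum_{j,l}x_{j,l}^2\ge 0
\end{equation*}
for every real $x$, with equality only when $x_{j,l}\equiv 0$. Applying this to $x_{j,l}=r_{j,l}-1$, which is legitimate because $Q_c$ depends only on differences of its arguments, shows the first bracket in the decomposition above is $\ge 0$, with equality iff all $r_{j,l}=1$.

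The second summand $\sum_{j,l}(r_{j,l}-\log r_{j,l}-1)$ is non-negative by (\ref{pos}), with equality again iff all $r_{j,l}=1$. Combining, $\Re\tilde L_m\ge 0$, and equality forces $r_{j,l}=1$ for every $(j,l)$, i.e.\ $a_{j,l}=a_+$, i.e.\ $\hat A_j=a_+I=L_+$. No condition on $\tilde V_j$ emerges at the minimum, consistent with the statement of the lemma. The main (and only genuinely new) obstacle is recognizing that although the cross term in (\ref{L_til}) is not sign-definite when both eigenvalues lie on the same ray $\mathcal{L}_+$, it can be absorbed together with $(\nabla\hat A)^2$ into a weighted Laplacian form with bounded weighted degree $2d$; the hypothesis $\alpha<1/4d$ then suffices, exactly as in Lemmas \ref{l:kont_b} and \ref{l:kont_u}.
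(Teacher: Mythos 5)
Your proof is correct and follows essentially the paper's route: parametrize on the ray $\mathcal{L}_+(\lambda_0)$, take real parts using $\cos 2\phi_+=1-\lambda_0^2/2>0$, absorb the non-sign-definite cross term into a discrete Dirichlet form on a doubled graph via the swap identity, bound the resulting degree-$2d$ Laplacian form by $4d\sum x^2$ using $\alpha<1/4d$, and finish with $r-\log r-1\ge 0$. The only (cosmetic) difference is that the paper first lower-bounds by pushing each weight $c_{jj'}=|(\tilde V_{j'}\tilde V_j^*)_{12}|^2$ to a worst-case endpoint in $\{0,1\}$ and then swaps edges on the resulting unweighted graph $\tilde G$, whereas you keep the weights in $[0,1]$ and exhibit $Q_c$ directly as a convex combination of the straight and swapped Laplacians, which is a slightly cleaner presentation of the same estimate.
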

\begin{proof}
Putting
\[
a_{j,1}=r_{j,1}e^{i\phi_+},\quad a_{j,2}=r_{j,2}e^{i\phi_+},\quad r_{j,1},r_{j,2}\ge 0
\]
and rewriting $\Re\,\tilde{L}_{m}(\tilde{V},\hat{A})$, we get
\begin{multline*}
\Re\,\tilde{L}_{m}(\tilde{V},\hat{A})=\tilde{B}(r_{1},\lambda_0)+\tilde{B}(r_{2},\lambda_0)\\-
\dfrac{\alpha (2-\lambda_0^2)}{2}\sum\limits_{j\sim j^\prime} |(V_{j^\prime}V_j^{-1})_{12}|^2(r_{j,1}-r_{j,2})(r_{j^\prime,1}-
r_{j^\prime,2}),
\end{multline*}
where
\[
\tilde{B}(r,\lambda_0)=-\dfrac{\alpha(2-\lambda_0^2)}{4}(\nabla r)^2+
\sum\limits_{j\in\Lambda}\big((2-\lambda_0^2)r_j^2/4+\lambda_0^2 r_j/2-
\log r_j-b_+\big),
\]
and $b_+$ is defined in (\ref{B_pm}). Similarly to (\ref{Re_in}) -- (\ref{graph}), we get
\begin{align*}
\Re\,\tilde{L}_{m}(\tilde{V},\hat{A})&\ge -
\dfrac{\alpha(2-\lambda_0^2)}{4} \sum\limits_{j\sim j^\prime\in \tilde{G}} (r_j-r_{j^\prime})^2/2\\
&+\sum\limits_{j\in\tilde{G}}\big((2-\lambda_0^2)r_j^2/4+\lambda_0^2 r_j/2-
\log r_j-b_+\big)\\
&=\dfrac{2-\lambda_0^2}{2}\Big(-\alpha \sum\limits_{j\sim j^\prime\in \tilde{G}} (r_j-r_{j^\prime})^2/2
+\sum\limits_{j\in\tilde{G}}(r_j-1)^2/2\big)\\
&+\sum\limits_{j\in\tilde{G}}(r_j-\log r_j-1),
\end{align*}
where $\tilde{G}$ is a graph obtained from two copies of $\Lambda$ by all changes which correspond
to $$(r_{j,1}-r_{j,2})(r_{j^\prime,1}-
r_{j^\prime,2})>0.$$
This and (\ref{pos}) yield the lemma.
\end{proof}

\section{Proof of Theorem \ref{thm:1}}
According to (\ref{main}) -- (\ref{F_expr}), we can rewrite Theorem \ref{thm:1} as
\begin{equation}\label{reform}
\lim\limits_{\varepsilon\to 0}\,\lim\limits_{W\to\infty} F_2(z_1,z_2)
=1-\dfrac{\sin^2 (\pi(\xi_1-\xi_2))}
{\pi^2(\xi_1-\xi_2)^2}.
\end{equation}
The proof of (\ref{reform}) can be divided into two theorems:
\begin{theorem}\label{thm:+-}
We have for $G_2^{+-}(z,\xi)$ of (\ref{G_2})
\begin{multline}\label{+-}
\lim\limits_{\varepsilon\to 0}\,\lim\limits_{W\to\infty}\dfrac{\partial^2}
{\partial\xi_1^\prime\partial\xi_2^\prime}\left(G_2^{+-}(z,\xi)+\overline{G}_2^{\,+-}(z,\xi)\right)
\Big|_{\xi^\prime=\xi}\\
=-\dfrac{2}{\rho^2(\lambda_0)}+\dfrac{4\sin^2(\pi(\xi_1-\xi_2))}{(\xi_1-\xi_2)^2},
\end{multline}
where $\xi^\prime=\xi$ means $\xi_1^\prime=\xi_1$, $\xi_2^\prime=\xi_2$.
\end{theorem}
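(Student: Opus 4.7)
\medskip
\noindent
\textbf{Proof proposal for Theorem \ref{thm:+-}.}

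The plan is to insert the integral representation (\ref{G_int}) of Proposition \ref{p:int_rep_1} for $G_2^{+-}(z,\xi)$, together with its complex conjugate, and evaluate it by the steepest descent method as $W\to\infty$, using Lemmas \ref{l:kont_b} and \ref{l:kont_u} to locate the saddle manifolds. The factor $W$ in front of $K_m+L_m$ is the large parameter, while the $\xi,\varepsilon$-dependence sits entirely in the ``outer'' piece $\mathcal{F}_m(P_1,S_1,V,T,\hat U,\hat B)$ and in the Grassmann insertion $\mathcal{P}_N$, both of which are of order $|\Lambda|/N=1/W$. Thus the differentiation $\partial^2/\partial \xi_1'\partial \xi_2'$ at $\xi'=\xi$ pulls down exactly the terms needed to produce the right-hand side of (\ref{+-}), once we identify the leading saddle contributions.

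First I would localize the $(T,\hat B)$-integral. By Lemma \ref{l:kont_b}, on the contour $\mathcal{L}_+(\lambda_0)\times\mathcal{L}_-(\lambda_0)$ the function $\Re L_m(T,\hat B)$ is non-negative and vanishes only at the single point $T_j=I$, $\hat B_j=L_\pm$ for all $j\in\Lambda$. Hence outside a $\delta$-neighbourhood of that point the contribution is exponentially small in $W$, and inside we expand $L_m$ to second order. The Hessian is non-degenerate (it factorizes into the $B_\pm$ quadratic forms controlled by $I+\alpha\Delta>C>0$, together with a positive quadratic form in the off-diagonal coordinates of $T_j$), so standard Gaussian asymptotics give a clean prefactor. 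Next I would carry out the analogous analysis for $(V,\hat U)$: Lemma \ref{l:kont_u} identifies three families of minima of $\Re K_m$, namely the two ``diagonal'' points $\hat U_j\equiv L_+I$ and $\hat U_j\equiv L_-I$ (with $V$ arbitrary), and the ``off-diagonal'' family $\hat U_j\in\{L_\pm,L_\mp\}$ with $|(V_j)_{12}|$ determined by whether $\hat U_j=\hat U_{\overline 1}$ or not. The diagonal saddles produce only an $\hat U$-independent contribution, while the off-diagonal family is the one that generates the oscillatory piece.

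At each saddle I would compute the three local ingredients: (i) the Gaussian fluctuation determinant transverse to the saddle, which combines with the $(u_{j,1}-u_{j,2})^2$ and $(b_{j,1}+b_{j,2})^2$ Jacobians and the constant $U_*$ of (\ref{U*}) to cancel the $W^{4|\Lambda|}/(8\pi^2)^{|\Lambda|}$ prefactor; (ii) the Grassmann integral $\mathcal{P}_N(V^*\hat UV,T^{-1}\hat BT)$ evaluated at the saddle, which reduces to a finite-dimensional determinant that tends to $1$ as $W\to\infty$ (uniformly in the slow variables) because of the $\mathrm{det}^{-W}(1+W^{-1}\cdots)$ expansion; and (iii) the ``slow'' integral $\int d\nu(S_1)d\mu(P_1)\,e^{\mathcal{F}_m}$ over the Goldstone modes $P_1,S_1$. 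Ingredient (iii) is the key one: at the off-diagonal saddle with $\hat U_j$ alternating between $L_\pm$ and $L_\mp$, the $\mathrm{Tr}$'s in (\ref{F_cal}) produce linear combinations of $a_+\pm a_-$ with coefficients determined by $|(P_1)_{12}|^2$ and $|(S_1)_{12}|^2$, and the Haar integral in those two variables gives a one-dimensional oscillatory integral in $\theta_\varepsilon$ of (\ref{theta}). Differentiating twice in $\xi_1',\xi_2'$ and taking $\varepsilon\to 0$ turns this into exactly $4\sin^2(\pi(\xi_1-\xi_2))/(\xi_1-\xi_2)^2$, while the diagonal saddles contribute the constant $-2/\rho^2(\lambda_0)$ (the two independent one-point Stieltjes transforms coming from (\ref{Ward_id})).

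The main obstacle I anticipate is controlling the off-diagonal saddle family rigorously: the saddle manifold is not a single point but a union of $2^{|\Lambda|}$ components indexed by the labeling $j\mapsto \mathbf 1\{\hat U_j=\hat U_{\overline 1}\}$, and on each component the minimum is attained along a non-trivial submanifold in the $V$-variables (the constraint $|(V_j)_{12}|\in\{0,1\}$ forces $V_j$ to lie either near $I$ or near the permutation matrix). One must therefore perform the Gaussian expansion in normal directions to this manifold, show that the contributions from the $2^{|\Lambda|}$ components combine into a single clean expression via the binomial/Haar symmetry, and verify that all cross terms between $(V,\hat U)$, $(T,\hat B)$ and the Grassmann block $\mathcal{P}_N$ are $o(1)$. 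Once this is done, the $\overline{G}_2^{+-}$ contribution is handled by the symmetry $\lambda_0\to\lambda_0$, $a_+\leftrightarrow a_-$, and the two pieces add up to (\ref{+-}).
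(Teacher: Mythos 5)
Your overall strategy (steepest descent on the representation (\ref{G_int}), localization via Lemmas \ref{l:kont_b} and \ref{l:kont_u}, separate treatment of the diagonal and off-diagonal saddle families) is the right frame, but your ingredient (ii) is wrong in a way that breaks the whole computation. The Grassmann factor $\mathcal{P}_N$ does \emph{not} reduce to a determinant tending to $1$ at the saddles: since $a_+a_-=-1$, the determinant $\mdet\left(\alpha\Delta+I+\mathrm{diag}\{U_j^{-1}\otimes B_j^{-1}\}_{j\in\Lambda}\right)$ contains factors $\mdet(\alpha\Delta+I-I)=\mdet(\alpha\Delta)=0$ at every saddle point, and its first partial derivatives in the rescaled variables vanish as well, so in the $\delta$-neighborhood of each saddle $|\mathcal{P}_N|\le C\log^2W/W$ (this is the paper's Lemma \ref{l:p_bound}). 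This smallness is not a refinement but the mechanism itself: after rescaling the fast variables by $W^{-1/2}$ the surviving prefactor is $W^{2}$ at the dominant saddle (and $W^{|\Lambda|+1}$ at the diagonal ones), so with $\mathcal{P}_N\approx 1$ your leading contribution would diverge rather than converge; the finite limit, including the oscillatory piece, comes precisely from the \emph{second-order} Taylor coefficient of this determinant (the term coupling $\tilde u_{j,2}$ with $\tilde b_{k,2}$) multiplied by the Itzykson--Zuber factors $e^{\pm ic_0\theta_\varepsilon}$ of $f_u,f_b$ at $\hat U_j=\hat B_j=L_\pm$. Consequently your claim in ingredient (i) that the Gaussian fluctuation determinant and the Jacobians alone cancel $W^{4|\Lambda|}$ is also not correct; part of that cancellation must be supplied by the vanishing of $\mathcal{P}_N$.

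A second, related gap is the absence of the paper's subtraction device and the misattribution of the constant term. You assign $-2/\rho^2(\lambda_0)$ to the diagonal saddles, but the paper never evaluates those saddles directly: it uses the exact relations (\ref{Ward_id}) for $G_2^{+-}$ and its first $\xi'$-derivatives at $\xi'=\xi$ to rewrite $\langle f_u^{(1)}f_b^{(1)}\mathcal{P}_N\rangle$ in terms of the centered product $(f_u^{(1)}-f^{(1)}_{u,*})(f_b^{(1)}-f^{(1)}_{b,*})$ plus explicitly known terms (Lemma \ref{l:trick}); each centered factor is $O(\log W/\sqrt W)$ near the saddles, which renders the type II/III contributions $o(1)$ and reduces type I to the single point $1$ (times $2^{|\Lambda|}$ by symmetry), and the constant then emerges from the $f^{(1)}_{u,*},f^{(1)}_{b,*}$ terms combined with the point-$1$ integral. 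Without this trick, your plan requires a direct evaluation of the diagonal saddles, where the degenerate Jacobian $\prod_j(u_{j,1}-u_{j,2})^2$ together with only quadratic vanishing of $\mathcal{P}_N$ leaves crude bounds of size $\log^{C}W$, so you would have to push the expansions to higher order and exhibit further cancellations that your outline does not provide. A minor symptom of the same confusion: $\mathcal{P}_N$ of (\ref{P_n}) does not depend on $\xi$ or $\varepsilon$ at all, so the derivatives $\partial^2/\partial\xi_1'\partial\xi_2'$ act only on $e^{\mathcal{F}_m}$, i.e. produce the factors $f_u^{(1)},f_b^{(1)}$, not any contribution from the Grassmann block.
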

\begin{theorem}\label{thm:++}
We have for $G_2^{++}(z,\xi)$ of (\ref{G_2})
\begin{equation}\label{++}
\lim\limits_{\varepsilon\to 0}\,\lim\limits_{W\to\infty}\dfrac{\partial^2}
{\partial\xi_1^\prime\partial\xi_2^\prime}\left(G_2^{++}(z,\xi)+\overline{G}_2^{\,++}(z,\xi)\right)
=\dfrac{a_+^2+a_-^2}{\rho^2(\lambda_0)}.
\end{equation}
\end{theorem}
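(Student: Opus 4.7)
The proof proceeds by the saddle-point method applied to the integral representation of $G_2^{++}(z,\xi)$ given in Proposition \ref{p:int_rep_2}. The crucial structural feature is that in the $++$ case the source function $\mathcal{F}_m$ separates cleanly: the $\xi_1^\prime$-dependence enters only via $\hat{\xi}_1$ and lives entirely on the $(V, P_1, \hat{U})$-side, while the $\xi_2^\prime$-dependence enters only via $\hat{\xi}_2$ and lives entirely on the $(\tilde{V}, \tilde{P}_1, \hat{A})$-side. Together with Lemmas \ref{l:kont_u} and \ref{l:kont_b_+}, this makes the whole integrand a product of a $\hat{U}$-block and an $\hat{A}$-block, coupled only through the Grassmann piece $\mathcal{P}_N$.

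By Lemma \ref{l:kont_b_+} the minimum of $\Re\,\tilde{L}_m$ is unique at $\hat{A}_j = L_+ = a_+ I$; since then $\tilde{V}^*\hat{A}\tilde{V} = a_+ I$ is gauge-invariant, the $\tilde{V}$-integration trivialises and the $\hat{A}$-block produces a finite prefactor after Gaussian expansion around $a_+ I$ (the vanishing Jacobian $(a_{j,1}-a_{j,2})^2$ being absorbed by the Gaussian fluctuations). By Lemma \ref{l:kont_u} the minima of $\Re\,K_m$ split into three families: the uniform saddles $\hat{U}_j \equiv a_+ I$ and $\hat{U}_j \equiv a_- I$, and the mixed family (case 1) where $\hat{U}_j \in \{L_\pm, L_\mp\}$ with constrained $V_j$. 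I would carry out steepest descent on each separately and sum.

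For the uniform $\hat{U}$-saddles, differentiation of $\exp(\mathcal{F}_m)$ factors cleanly: $\partial/\partial\xi_1^\prime$ at saddle produces $-i a_\pm/\rho(\lambda_0)$, and $\partial/\partial\xi_2^\prime$ at the $\hat{A}$-saddle produces $i a_+/\rho(\lambda_0)$. The Gaussian fluctuation integrals and the Grassmann factor $\mathcal{P}_N$ contribute only overall normalisation constants, which by the Ward identities (\ref{Ward_id}) applied to the first derivatives $\partial_{\xi_l^\prime}G_2^{++}|_{\xi^\prime=\xi}$ must combine to select the $\hat{U} \equiv a_+ I$ saddle with full weight while the $\hat{U} \equiv a_- I$ uniform saddle is eliminated. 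Thus, in the limit, the second derivative factorises as a product of the first derivatives, yielding $(-i a_+/\rho)(i a_+/\rho) = a_+^2/\rho^2(\lambda_0)$. The complex conjugate $\overline{G_2^{++}}$ is treated by the same method on the conjugate contours $\mathcal{L}_-$: now the $\hat{A}$-saddle is $-a_- I$ and the dominant $\hat{U}$-saddle is $a_- I$, giving $a_-^2/\rho^2(\lambda_0)$, and summing yields the claimed limit.

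The principal obstacle is the mixed $\hat{U}$-saddle of case 1 of Lemma \ref{l:kont_u}. In the analogous $+-$ analysis (Theorem \ref{thm:+-}) the mixed saddle of $\hat{U}$ resonates with the mixed saddle of $\hat{B}$ (allowed by Lemma \ref{l:kont_b}), and this resonance is precisely what produces the oscillatory $\sin^2$-kernel. In the $++$ case, because $\hat{A}$ admits only the uniform saddle $a_+ I$, there is no such resonance partner; the oscillations generated by the mixed $\hat{U}$-configuration couple to the uniform $\hat{A}$-background only through $\mathcal{P}_N$ and are suppressed. Showing rigorously that this suppression eliminates the mixed saddle's contribution to order $O(1)$ in the second derivative is the technical heart of the argument and requires careful asymptotic analysis of $\mathcal{P}_N$ together with control of non-Gaussian corrections near the saddle.
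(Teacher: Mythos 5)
Your final answer is right, but the mechanism you propose does not match what the integral actually does, and the step you flag as the ``technical heart'' dissolves once the correct structural observation is made.

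The Ward identities (\ref{Ward_id}) do not ``select'' the uniform saddle $\hat{U}\equiv a_+I$. At that saddle the Jacobian $\prod_j(u_{j,1}-u_{j,2})^2$ is $O(\log^{2|\Lambda|}W/W^{|\Lambda|})$, which, together with Lemma \ref{l:p_bound}, makes the type-II/III contributions $o(1)$; the leading contribution to the first derivative $\langle f_u^{(1)}\cdot\mathcal{P}_N\rangle$ actually comes from the mixed (type-I) $\hat{U}$-saddles. The Ward identity only pins down the total value $\langle f_u^{(1)}\cdot\mathcal{P}_N\rangle=-ia_+/\rho(\lambda_0)+o(1)$; it is silent about which saddle achieves it, so the rationale you give for the factorisation does not hold.

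What your proposal misses is the structural point that makes the factorisation legitimate: since $\hat{A}$ has the unique saddle $L_+=a_+I$ and no mixed option, the function $f_a^{(1)}(\tilde{V},\hat{A},\hat{\xi})$ takes the \emph{same} constant value $f_a^*$ at every saddle of the full integral, irrespective of which of the three $\hat{U}$-configurations one sits at. This is precisely what is false for $f_b^{(1)}$ in the $G_2^{+-}$ analysis, and it is the key difference between the two theorems. Exactly as in Lemma \ref{l:trick} one then writes
\begin{equation*}
\langle f_u^{(1)}f_a^{(1)}\cdot\mathcal{P}_N\rangle
=\langle f_u^{(1)}(f_a^{(1)}-f_a^*)\cdot\mathcal{P}_N\rangle
+f_a^*\,\langle f_u^{(1)}\cdot\mathcal{P}_N\rangle.
\end{equation*}
The first term is $O(\log W/\sqrt{W})$ uniformly over all $\delta$-neighborhoods because $f_a^{(1)}-f_a^*$ vanishes at every saddle, while the second is evaluated directly from the Ward identity and gives $a_+^2/\rho^2(\lambda_0)+o(1)$. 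This disposes of the mixed $\hat{U}$-saddle automatically: its whole contribution is already summed inside $\langle f_u^{(1)}\cdot\mathcal{P}_N\rangle$, so no separate asymptotic analysis of $\mathcal{P}_N$ is required, contrary to the programme you sketch in your last paragraph. The mixed saddle is not ``suppressed''; it genuinely carries the leading order of $\langle f_u^{(1)}\cdot\mathcal{P}_N\rangle$, and what is small is only the extra factor $(f_a^{(1)}-f_a^*)$. Adding the complex conjugate then yields $(a_+^2+a_-^2)/\rho^2(\lambda_0)$.
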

Indeed, (\ref{F_expr}) and (\ref{+-}) -- (\ref{++}) give
\begin{multline*}
\lim\limits_{\varepsilon\to 0}\,\lim\limits_{W\to\infty} F_2(z_1,z_2)\\=
\dfrac{a_+^2+a_-^2+2}{4\pi^2\rho^2(\lambda_0)}-\dfrac{\sin^2(\pi(\xi_1-\xi_2))}{\pi^2(\xi_1-\xi_2)^2}
=1-\dfrac{\sin^2(\pi(\xi_1-\xi_2))}{\pi^2(\xi_1-\xi_2)^2},
\end{multline*}
which yields (\ref{reform}), thus Theorem \ref{thm:1}.

\subsection{Proof of Theorem \ref{thm:+-}}

According to Lemmas \ref{l:kont_b} -- \ref{l:kont_u}, the saddle-points of (\ref{G_int}) can be divided
in the following three subsets:
\begin{description}
    \item[\textbf{type I.}] for each $j\in\Lambda$: $\hat{U}_j=L_{\pm}\,$ or $\,\hat{U}_j=L_{\mp}$,  and
    $$
    |(V_j)_{12}|=\left\{\begin{array}{ll}
    0,& \hat{U}_j=\hat{U}_1,\\
    1,& \mathrm{otherwise},
    \end{array}\right.
    $$
    $\hat{B}_j=L_\pm$, $T_j=I$;
    \item[\textbf{type II.}] for each $j\in\Lambda$: $\hat{U}_j=L_+$, $V_j\in\mathring{U}(2)$, $\hat{B}_j=L_\pm$, $T_j=I$;

    \item[\textbf{type III.}] for each $j\in\Lambda$: $\hat{U}_j=L_-$, $V_j\in\mathring{U}(2)$,
    $\hat{B}_j=L_\pm$, $T_j=I$,
\end{description}
where $L_\pm$, $L_+$ and $L_-$ are defined in (\ref{L_pm}).

Set
\begin{align}\notag
\exp\{f_u(V,\hat{U},\hat{\xi})\}&:=\int
e^{-|\Lambda|^{-1}\sum_{j\in\Lambda}\Tr (V_jP_1^*)^*\hat{U}_j(V_jP_1^*)
(i\hat{\xi}/\rho(\lambda_0)-\varepsilon L)}d\mu(P_1),\\ \notag
\exp\{f_b(T,\hat{B},\hat{\xi})\}&:=\int
e^{|\Lambda|^{-1}\sum_{j\in\Lambda}\Tr (T_jS_1^{-1})^{-1}
\hat{B}_j(T_jS_1^{-1})
(i\hat{\xi}/\rho(\lambda_0)-\varepsilon L)}d\nu(S_1),\\ \label{f_sm}
f_u^{(1)}(V,\hat{U},\hat{\xi})&:=\dfrac{\partial}{\partial \xi_1^\prime}
f_u(V,\hat{U},\hat{\xi}_1)\Big|_{\xi_1^\prime=\xi_1},\\ \notag
f_b^{(1)}(T,\hat{B},\hat{\xi})&:=\dfrac{\partial}{\partial \xi_2^\prime}
f_u(T,\hat{B},\hat{\xi}_2)\Big|_{\xi_2^\prime=\xi_2},
\end{align}
where $\hat{\xi}_1$, $\hat{\xi}_2$ are defined in (\ref{xi_hat}).

Introduce
\begin{align}\label{mes}
\langle  F\rangle_{\delta, k}&=
\dfrac{W^{4|\Lambda|}}{(8\pi^2)^{|\Lambda|}}\displaystyle\int \prod\limits_{j\in\Lambda\setminus\{\overline{1}\}} d\mu(V_j)d \nu(T_j)
\int\limits_{U_{\delta,k}} \prod\limits_{j\in\Lambda} d u_{j,1}du_{j,2}
d b_{j,1}db_{j,2}\\ \notag
&\times F\cdot \exp\Big\{-W\left(K_{m}(V,\hat{U})+L_{m}(T,\hat{B})\right)\Big\}\cdot
\prod\limits_{j\in\Lambda}
(b_{j,1}+b_{j,2})^2\\ \notag
&\times \int d\mu(P_1)d \nu(S_1) \exp\Big\{\mathcal{F}_m(P_1,S_1,V,T,\hat{U},\hat{B})\Big\}
\cdot\prod\limits_{j\in\Lambda}
(u_{j,1}-u_{j,2})^2.
\end{align}
where $U_{\delta,k}$ is a $\delta$-neighborhood of the saddle-point $k$, and let
also $\langle \ldots \rangle_\delta$ be the sum of all $\langle\ldots\rangle_{\delta,k}$,
and $\delta$ is defined in (\ref{delta}).

Then Lemmas \ref{l:kont_b} -- \ref{l:kont_u} yield
\begin{align}\notag
G_2^{+-}(z,\xi)\Big|_{\xi^\prime=\xi}&=\langle \mathcal{P}_N\rangle_\delta+o(1),\\ \label{okr}
\dfrac{\partial^2}{\partial \xi_1^\prime\partial \xi_2^\prime}G^{+-}_2(z,\xi)
\Big|_{\xi=\xi^\prime}&=\langle f^{(1)}_u\cdot f^{(1)}_b\cdot \mathcal{P}_N\rangle_\delta+o(1),\\ \notag
\dfrac{\partial}{\partial \xi_1^\prime}G_2^{+-}(z,\xi)\Big|_{\xi^\prime=\xi}&=
\langle  f^{(1)}_u\cdot \mathcal{P}_N\rangle_\delta+o(1),\\ \notag
\dfrac{\partial}{\partial \xi_2^\prime}G_2^{+-}(z,\xi)\Big|_{\xi^\prime=\xi}&=\langle
f^{(1)}_b\cdot \mathcal{P}_N\rangle_\delta+o(1)
\end{align}
with $\delta=\log W/W^{1/2}$. To simplify the formulas below, we will omit the index $\delta$ in further calculations.

%


Note also that the contributions of all points of type I are the same since
one of such points can be obtained from another one by the rotation
\[
P_j\to \left(\begin{array}{ll}
0&1\\
1&0
\end{array}
\right)P_j
\]
for some $j$, where $P_j$ is defined in (\ref{diag}). Hence,
\begin{equation*}
\langle f^{(1)}_u\cdot f^{(1)}_b\cdot \mathcal{P}_N\rangle_I=2^{|\Lambda|}
\langle f^{(1)}_u\cdot f^{(1)}_b\cdot \mathcal{P}_N\rangle_1,
\end{equation*}
where $\langle\ldots\rangle_1$ is $\langle\ldots\rangle_k$ for the saddle-point
$\hat{U_j}=\hat{B}_j=L_{\pm}$, $V_j=T_j=I$ (we will denote this saddle-point $1$).

Take now the $\delta$-neighborhood of one saddle-point $\hat{U_j}=L_s$ ($L_s$ can be equal to
$L_\pm$, $L_+$ or $L_-$), $\hat{B}_j=L_{\pm}$, $T_j=I$ and change variables as
\begin{align*}
&u_{j,1}=a_{s_1}\exp\{i\tilde{u}_{j,1}/\sqrt{W}\},\quad u_{j,2}=a_{s_2}\exp\{i\tilde{u}_{j,2}/\sqrt{W}\}, \\
&b_{j,1}=a_+(1+\tilde{b}_{j,1}/\sqrt{W}),\quad \quad b_{j,2}=-a_-(1+\tilde{b}_{j,2}/\sqrt{W}),\\
&t_j=\tilde{t}_j/\sqrt{W},
\end{align*}
where $L_s=\hbox{diag}\,\{a_{s_1}, a_{s_2}\}$, $|\tilde{u}_{j,1}|, |\tilde{u}_{j,2}|\le \log W $,
$|\tilde{b}_{j,1}|, |\tilde{b}_{j,2}|\le \log W$, and $\tilde{t}_j \in [0,\log W]$.
This change is chosen to kill the big parameter $W$ in front of $K_m$, $L_m$.

Then
\begin{align}\label{repr_okr}
\hat{U}_j&=L_s+iL_s\cdot\hbox{diag}\,\{\tilde{u}_{j,1},\tilde{u}_{j,2}\}/\sqrt{W}+R_j/W,\\ \notag
\hat{B}_j&=L_\pm+L_\pm\cdot\hbox{diag}\,\{\tilde{b}_{j,1},\tilde{b}_{j,2}\}/\sqrt{W},\\ \notag
T_j&=I+\dfrac{\tilde{t}_j}{\sqrt{W}}\left(\begin{array}{cc} 0&e^{i\sigma_j}\\
e^{-i\sigma_j}&0\end{array}\right)+r_jI/W,
\end{align}
where
\begin{align*}
\notag R_j&=L_s\left(-\hbox{diag}\,\{\tilde{u}_{j,1}^2,
\tilde{u}^2_{j,2}\}/2-i\cdot\hbox{diag}\,\{\tilde{u}_{j,1}^3,
\tilde{u}^3_{j,2}\}/6\sqrt{W}+\ldots\right)=O(\log^2 W),\\
r_j&=W((1+\tilde{t}_j^{\phantom{\,\,} 2}/W)^{1/2}-1)=O(\log^2 W).
\end{align*}
If $L_s=L_\pm$, then we also change $v_j=\tilde{v}_j/\sqrt{W}$, $\tilde{v}_j\in [0,\log W]$, which gives
\begin{align}\label{V}
V_j&=I+\dfrac{\tilde{v}_j}{\sqrt{W}}\left(\begin{array}{cc} 0&e^{i\theta_j}\\
e^{-i\theta_j}&0\end{array}\right)+p_jI/W,\\ \notag
p_j&=W((1+\tilde{v}_j^2/W)^{1/2}-1)=O(\log^2 W).
\end{align}
Substituting this to $K_m$, $L_m$ of (\ref{K}) -- (\ref{L}), we get
\begin{align*}
&WK_{m}(V,\hat{U})=K_m^{(s)}(V,\hat{U})+
O(\log^3 W/\sqrt{W})\\
&WL_{m}(T,\hat{B})=L_m^{(0)}(\tilde{t},\tilde{B})+
O(\log^3 W/\sqrt{W}),
\end{align*}
where
\begin{align}\label{KL_s}
&K_m^{(s)}(V,\hat{U})=(M_{s_1}\tilde{u}_1,\tilde{u}_1)/2+
(M_{s_2}\tilde{u}_2,\tilde{u}_2)/2
+\alpha W \sum\limits_{j\sim j^\prime}
|(V_{j^\prime}V_j^*)_{12}|^2\\ \notag
&\times\Big(a_{s_1}-a_{s_2}+\dfrac{i(a_{s_1}\tilde{u}_{j,1}-a_{s_2}\tilde{u}_{j,2})}{\sqrt{W}}\Big)
\Big(a_{s_1}-a_{s_2}+\dfrac{i(a_{s_1}\tilde{u}_{j^\prime,1}-a_{s_2}\tilde{u}_{j^\prime,2})}{\sqrt{W}}\Big)\\ \notag
&L_m^{(0)}(\tilde{t},\tilde{B})=(M_+\tilde{b}_1,\tilde{b}_1)/2+
(M_-\tilde{b}_2,\tilde{b}_2)/2\\ \notag
&\quad\quad\quad\quad\quad+\alpha (a_+-a_-)^2\sum\limits_{j\sim j^\prime}
|\tilde{t}_je^{i\sigma_j}-\tilde{t}_{j^\prime}e^{i\sigma_{j^\prime}}|^2,
\end{align}
and $M_\pm=\alpha a_\pm^2\Delta+(1+a_\pm^2)I$. Here $s_1$, $s_2$ equal to $+$ or $-$,
$L_s=\hbox{diag}\,\{a_{s_1},a_{s_2}\}$.

Hence, we have from (\ref{mes})
\begin{align}\notag
&\big\langle f^{(1)}_u\cdot f^{(1)}_b\cdot \mathcal{P}_N \big\rangle_{k}=
-\dfrac{W^{|\Lambda|+1}}{(8\pi^2)^{|\Lambda|}}\displaystyle\int \prod\limits_{j\in\Lambda\setminus\{\overline{1}\}}
d\mu(V_j)
\int\limits_{-\log W}^{\log W} d\tilde{\vphantom{b}u}d\tilde{b}\\ \notag
&\times\int\limits_{0}^{\log W}\prod\limits_{j\in\Lambda\setminus\{\overline{1}\}}(2\tilde{t}_jd\tilde{t}_j)
\int\limits_{0}^{2\pi}\prod\limits_{j\in\Lambda\setminus\{\overline{1}\}}\dfrac{d\sigma_j}{2\pi}\cdot
\prod\limits_{j\in\Lambda}\big(a_{s_1}a_{s_2}\cdot e^{\frac{i(\tilde{u}_{j,1}+
\tilde{u}_{j,2})}{\sqrt{W}}}\big)\\ \label{int_okr}
&\times f^{(1)}_u(V,\hat{U},\hat{\xi})\cdot f^{(1)}_b(T,\hat{B},\hat{\xi})\cdot
\mathcal{P}_N(V^*\hat{U}V,T^{-1}\hat{B}T)\\ \notag
&\times
\exp\Big\{-K_{m}^{(s)}(V,\hat{U})-L_{m}^{(0)}(\tilde{t},\tilde{B})+O(\log^3W/\sqrt{W})\Big\}
 \\ \notag
&\times \exp\Big\{f_u(V,\hat{U},\hat{\xi})+f_b(T,\hat{B},\hat{\xi})\Big\}\cdot \prod\limits_{j\in\Lambda}
\Big (a_+-a_-+\dfrac{a_+\tilde{b}_{j,1}+a_-\tilde{b}_{j,2}}{\sqrt{W}}\Big)^2\\ \notag
&\times\prod\limits_{j\in\Lambda}
\Big(a_{s_1}-a_{s_2}+\dfrac{ia_{s_1}\tilde{u}_{j,1}-ia_{s_2}\tilde{u}_{j,2}}{\sqrt{W}}+O(\log^2W/W)\Big)^2
+o(1),
\end{align}
where
\[
d\tilde{u}=\prod\limits_{j\in\Lambda} d\tilde{u}_{j,1}d\tilde{u}_{j,2},\quad
d\tilde{b}=\prod\limits_{j\in\Lambda} d\tilde{b}_{j,1}d\tilde{b}_{j,2},
\]
$k$ is the saddle-point
$\hat{U}_j=L_s$, $\hat{B}_j=L_\pm$, $T_j=I$, and $V_j=I$ for $s=\pm$ and $V_j\in \mathring{U}(2)$
for $s\ne \pm$.

Moreover, (\ref{P_n}) yields
\begin{align}\label{P_n_1}
\mathcal{P}_N(U,B)&=\intd\exp\Big\{\alpha \sum\limits_{j\sim j^\prime}
\Tr (\rho_j-\rho_{j^\prime})
(\tau_j-\tau_{j^\prime})-\sum\limits_{j\in\Lambda} \Tr \rho_j\tau_j\Big\}\\ \notag
&\times\exp\Big\{-W\sum\limits_{j\in\Lambda}\log \det(1+W^{-1}U_j^{-1}\rho_jB_j^{-1}\tau_j)\Big\}
\prod\limits_{j\in\Lambda}
d\rho_jd\tau_j\\ \notag
&=\intd\exp\Big\{\alpha \sum\limits_{j\sim j^\prime}
\Tr (\rho_j-\rho_{j^\prime})
(\tau_j-\tau_{j^\prime})-\sum\limits_{j\in\Lambda} \Tr \rho_j\tau_j\Big\}\\ \notag
&\times\exp\Big\{-\sum\limits_{j\in\Lambda}\Tr U_j^{-1}\rho_jB_j^{-1}\tau_j\Big\}
(1+W^{-1}P_N(\tau,\rho, U,B))
\prod\limits_{j\in\Lambda}
d\rho_jd\tau_j\\ \notag
&=\mdet \left(\alpha\Delta+I+\hbox{diag}\{U_j^{-1}\otimes B_j^{-1}\}_{j\in\Lambda}\right)+W^{-1}
\tilde{\mathcal{P}}_N(U,B),
\end{align}
where $P_N$ is a polynomial of finite degree of $\rho_j$, $\tau_j$, $U_j^{-1}$ and
$B_j^{-1}$ with bounded coefficients, and
\begin{align*}
\tilde{\mathcal{P}}_N(U,B)&=\intd\exp\Big\{\alpha \sum\limits_{j\sim j^\prime}
\Tr (\rho_j-\rho_{j^\prime})
(\tau_j-\tau_{j^\prime})-\sum\limits_{j\in\Lambda} \Tr \rho_j\tau_j\Big\}\\ \notag
&\times\exp\Big\{-\sum\limits_{j\in\Lambda}\Tr U_j^{-1}\rho_jB_j^{-1}\tau_j\Big\}
P_N(\tau,\rho, U,B)
\prod\limits_{j\in\Lambda}
d\rho_jd\tau_j.
\end{align*}
Thus, we have in the $\delta$-neighborhood of each saddle-point
\begin{equation}\label{p_cal_til}
|W^{-1}\tilde{\mathcal{P}}_N(U,B)|\le C\log^2 W/W,
\end{equation}
where $\delta$ is defined in (\ref{delta}).
\begin{lemma}\label{l:p_bound}
We have in the $\delta$-neighborhood of each saddle point
\[
|\mathcal{P}_N(V^*\hat{U}V,T^{-1}\hat{B}T)|\le C\log^2 W/W.
\]
\end{lemma}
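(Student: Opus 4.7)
The plan is to invoke the decomposition (\ref{P_n_1}), which writes $\mathcal{P}_N(U,B) = \det M(U,B) + W^{-1}\tilde{\mathcal{P}}_N(U,B)$ with
\[
M(U,B) := \alpha\Delta\otimes I_4 + I_{|\Lambda|}\otimes I_4 + \mathrm{diag}\{U_j^{-1}\otimes B_j^{-1}\}_{j\in\Lambda},
\]
viewed as a $4|\Lambda|\times 4|\Lambda|$ matrix on $\mathbb{C}^{|\Lambda|}\otimes\mathbb{C}^4$. By (\ref{p_cal_til}) the second term is already $O(\log^2 W/W)$ in the $\delta$-neighborhood of any saddle, so it suffices to establish the same bound for $|\det M|$.

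Next I would evaluate $M$ at each saddle from Lemmas \ref{l:kont_b}--\ref{l:kont_u} and show it has a two-dimensional kernel. Using the identity $a_+a_- = -1$ one checks directly that at every type-I saddle $U_j = V_j^*\hat U_j V_j = L_\pm$ for all $j$ (the constraint $|(V_j)_{12}|=1$ when $\hat U_j = L_\mp$ produces $\sigma L_\mp\sigma = L_\pm$ for $\sigma$ the off-diagonal swap), while at type-II and type-III saddles $U_j = a_+I$ and $U_j = a_-I$ respectively, independently of $V_j$; in every case $B_j = L_\pm$. The resulting on-site $4\times 4$ matrix $I_4 + U_j^{-1}\otimes B_j^{-1}$ is then diagonal, $j$-independent, and has exactly two zero entries and two entries equal to $1+a_\pm^2$. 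Hence $M$ decouples into four scalar operators on $\mathbb{C}^{|\Lambda|}$: two copies of $M_\pm$ (from (\ref{M_pm})) and two copies of $\alpha\Delta$. Since $\mathrm{Re}(1+a_\pm^2) = (4-\lambda_0^2)/2 > 0$ the operators $M_\pm$ are invertible, whereas $\alpha\Delta$ has a one-dimensional kernel (the constant mode); thus $\dim\ker M = 2$ and $\det M = 0$ at every saddle.

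For the final step I would use a perturbative bound on $\det M$. In the $\delta$-neighborhood with $\delta = \log W/\sqrt W$, the parametrization (\ref{repr_okr})--(\ref{V}) gives $U_j$ and $B_j$ deviating from their saddle values by $O(\delta)$, and since these matrices stay uniformly invertible ($|\det L_\pm| = 1$), so does $U_j^{-1}\otimes B_j^{-1}$; thus $M = M_0 + E$ where $M_0$ is the saddle value and $\|E\| = O(\delta)$. A standard Schur-complement expansion shows that for any $M_0$ with a $k$-dimensional kernel and any $E$ with $\|E\| = O(\delta)$ one has $\det(M_0 + E) = O(\delta^k)$; applied with $k=2$ this yields $|\det M| \le C\delta^2 = C\log^2 W/W$, which combined with (\ref{p_cal_til}) completes the proof. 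The main technical point will be verifying the uniform rank-deficiency count across all the saddle families in Lemmas \ref{l:kont_b}--\ref{l:kont_u}; this relies on the arithmetic identity $a_+a_- = -1$ (equivalently $a_\pm^{-1} = -a_\mp$) and on the fact that although types II and III form flat $\mathring U(2)$-families in $V_j$, the on-site quantity $U_j = V_j^*(a_\pm I)V_j = a_\pm I$ is $V_j$-independent, so no additional zero modes appear.
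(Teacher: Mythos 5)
Your proof is correct and takes a genuinely different, more conceptual route than the paper. Both arguments begin the same way: reduce to bounding $\det M$ via the decomposition~(\ref{P_n_1}) and the trivial estimate~(\ref{p_cal_til}). The paper then works directly with the Taylor expansion: it factors the $4|\Lambda|\times 4|\Lambda|$ determinant into two $2|\Lambda|\times 2|\Lambda|$ pieces, observes that the constant term vanishes (since $1+a_+^{-1}a_-^{-1}=0$), and runs a case-by-case check that every first partial derivative in $\tilde u$, $\tilde b$, $\tilde v$, $\tilde t$ vanishes at each saddle (for type I both factors are zero so the Leibniz rule kills the gradient; for types II/III only one factor vanishes and its first derivative must be checked directly). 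You instead identify the \emph{structural} reason: at every saddle the on-site $4\times 4$ matrix $I_4+U_j^{-1}\otimes B_j^{-1}$ is diagonal and $j$-independent with exactly two zero entries, so $M$ decomposes into two invertible scalar blocks and two copies of $\alpha\Delta$, yielding a two-dimensional kernel; then a singular-value perturbation bound (Weyl's inequality, equivalently your Schur-complement expansion) gives $|\det(M_0+E)|=O(\|E\|^2)=O(\delta^2)$. This is cleaner: it avoids the parameter-by-parameter derivative check, is uniform in the perturbation direction (not tied to the specific parametrization~(\ref{repr_okr})--(\ref{V})), and simultaneously explains why the paper's first-derivative computations must all come out zero. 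Two small cosmetic points: the invertible blocks that arise are $\alpha\Delta+(1+a_\pm^{-2})I$, which equal $a_\pm^{-2}M_\pm$ rather than $M_\pm$ from~(\ref{M_pm}) verbatim (a harmless nonzero scalar), and your invertibility argument is cleanest phrased for exactly that form: $\mathrm{Re}\,(1+a_\pm^{-2})=(4-\lambda_0^2)/2>0$ together with $\alpha\Delta\ge 0$ (real symmetric) forces every eigenvalue to have positive real part. You also correctly note that for types II/III the saddle is a full $\mathring U(2)^{|\Lambda|-1}$ family but $U_j=V_j^*(a_\pm I)V_j=a_\pm I$ is $V_j$-independent, so $M_0$ is literally the same matrix along the whole family and the perturbation constants are trivially uniform; for type I, the identity $a_+a_-=-1$ and the swap computation show $U_j$ is $j$-independent as well. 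The argument is sound.
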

\begin{proof}
Taking into account (\ref{p_cal_til}), we are left to bound the determinant in the r.h.s. of (\ref{P_n_1}).

Expand the determinant according to (\ref{repr_okr}). Since
\[
\mdet(\alpha\Delta+I+\hbox{diag}\{L_s^{-1}\otimes L_{\pm}^{-1}\}_{j\in\Lambda})=0
\]
for each $L_s$ (because $1+a_+^{-1}a_-^{-1}=0$), it is sufficient to prove
that all first partial derivatives of the determinant with respect to $\tilde{u}$, $\tilde{b}$, $\tilde{v}$ or
$\tilde{t}$ are zero at each saddle-point.
Consider first the case when we differentiate over $\tilde{u}$ or $\tilde{v}$.
Then we can put
\[
T^{-1}_j\hat{B}_j^{-1}T_j=L_{\pm}^{-1},\quad j\in\Lambda,
\]
and differentiate the expression
\begin{multline}\label{det_1}
\mdet(\alpha\Delta+I+\hbox{diag}\{V^*_j\hat{U}_j^{-1}V_j\otimes L_{\pm}^{-1}\}_{j\in\Lambda})\\
=
\mdet(\alpha\Delta+I+a_+^{-1}\hbox{diag}\{V^*_j\hat{U}_j^{-1}V_j\}_{j\in\Lambda})
\\ \times
\mdet(\alpha\Delta+I+a_-^{-1}\hbox{diag}\{V^*_j\hat{U}_j^{-1}V_j\}_{j\in\Lambda}).
\end{multline}
Note that at the saddle-point $1$ both determinant in the r.h.s. are zero and so the first
derivative of the determinant in the l.h.s.
with respect to $\tilde{u}_{j,1}$, $\tilde{u}_{j,2}$ and $\tilde{v}_j$ at the saddle-point $1$ is zero.
Thus, we are left to consider the case $L_s=L_+$ (the case $L_s=L_-$
is similar). Then
\[
V^*_j\hat{U}_j^{-1}V_j=a_+^{-1}(I+V^*_j\tilde{U}_jV_j)^{-1},
\]
where
\[
\tilde{U}_j=\hat{U}_j^{-1}-
L_+^{-1}=O(\log W/\sqrt{W}).
\]
Since for $L_s=L_+$ the second determinant in the r.h.s of (\ref{det_1}) is zero, we must
differentiate
\begin{multline*}
\mdet(\alpha\Delta+I+a_-^{-1}\hbox{diag}\{V^*_j\hat{U}_j^{-1}V_j\}_{j\in\Lambda})\\=
\mdet(\alpha\Delta+I-\hbox{diag}\{(I+V^*_j\tilde{U}_jV_j)^{-1}\}_{j\in\Lambda})\\
=\mdet(\alpha\Delta+\hbox{diag}\{V^*_j\tilde{U}_jV_j/\sqrt{W}\}_{j\in\Lambda})+\hbox{higher\,\,orders}.
\end{multline*}
But it is easy to see that the first derivative of the r.h.s. with respect to $\tilde{u}_{j,1}$,
$\tilde{u}_{j,2}$ or $\tilde{v}_j$ is zero, and so in this case the lemma is also proven.

Let now differentiate with respect to $\tilde{b}_{j,1}$,
$\tilde{b}_{j,2}$ or $\tilde{t}_j$. Similarly to (\ref{det_1})
we need to differentiate
\begin{multline}\label{det_2}
\mdet(\alpha\Delta+I+\hbox{diag}\{L_s\otimes T^{-1}_j\hat{B}_j^{-1}T_j\}_{j\in\Lambda})\\
=
\mdet(\alpha\Delta+I+a_{s_1}^{-1}\hbox{diag}\{T^{-1}_j\hat{B}_j^{-1}T_j\}_{j\in\Lambda})\\ \times
\mdet(\alpha\Delta+I+a_{s_2}^{-1}\hbox{diag}\{T^{-1}_j\hat{B}_j^{-1}T_j\}_{j\in\Lambda}).
\end{multline}
Since at each saddle-point both determinants in the r.h.s. of (\ref{det_2}) are zero, the first
derivatives with respect
to $\tilde{b}_{j,1}$, $\tilde{b}_{j,2}$ and $\tilde{t}_j$ at any saddle-point
are zero, which completes the proof of the lemma.$\quad\Box$
\end{proof}
Set
\begin{align}\label{f*}
f^{(1)}_{u,*}:=f^{(1)}_{u}(I,L_\pm,\hat{\xi}),\quad f^{(1)}_{b,*}:=f^{(1)}_{b}(I,L_\pm,\hat{\xi}).
\end{align}

The next step is to prove
\begin{lemma}\label{l:trick}
In the notations (\ref{mes}) and (\ref{f*}) we have
\begin{align*}
\langle f^{(1)}_u f^{(1)}_b\cdot \mathcal{P}_N\rangle &=
2^{|\Lambda|}\langle (f^{(1)}_u(V,\hat{U},\hat{\xi})-f^{(1)}_{u,*}) (f^{(1)}_b(T,\hat{B},\hat{\xi})-f^{(1)}_{b,*})
\cdot \mathcal{P}_N\rangle_1\\
&+ia_-f^{(1)}_{u,*}/\rho(\lambda_0)
-ia_+f^{(1)}_{b,*}/\rho(\lambda_0)-f^{(1)}_{b,*}f^{(1)}_{u,*}+o(1),
\end{align*}
where $1$ is the saddle-point $\hat{U}_j=\hat{B}_j=L_\pm$, $V_j=T_j=I$.
\end{lemma}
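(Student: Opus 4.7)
The plan is to split $\langle\cdot\rangle = \langle\cdot\rangle_I + \langle\cdot\rangle_{II} + \langle\cdot\rangle_{III}$ according to the three families of saddle points identified in Lemmas \ref{l:kont_b}--\ref{l:kont_u}, to exploit the scalar structure of $\hat U_j$ at types II and III in order to simplify $f^{(1)}_u$ and $f^{(1)}_b$, and then to close the identity using the three Ward relations packaged in (\ref{Ward_id}).

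By the swap symmetry on $P_{j_0}$ already invoked just before the lemma, the $2^{|\Lambda|}$ saddles of type I yield equal contributions, so $\langle\cdot\rangle_I = 2^{|\Lambda|}\langle\cdot\rangle_1$. At a type II saddle $\hat U_j = L_+ = a_+ I$ is a scalar matrix, so $(V_jP_1^*)^*\hat U_j(V_jP_1^*) = a_+ I$ and the $P_1$-integral in $f_u$ collapses, yielding $f^{(1)}_u(V, L_+, \hat\xi) = -ia_+/\rho(\lambda_0) =: c_+$; analogously $f^{(1)}_u \equiv c_- := -ia_-/\rho(\lambda_0)$ at type III. Taylor expansion of $f_u$ in $\hat U$ then gives $f^{(1)}_u = c_\pm + o(1)$ uniformly throughout the corresponding $\delta$-neighborhoods. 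Moreover, at the saddle values of types II and III one has $(T, \hat B) = (I, L_\pm)$, exactly the values entering $f^{(1)}_{b,*}$; since $f^{(1)}_b$ depends only on $(T, \hat B, \hat\xi)$, its smoothness at $(I, L_\pm)$ gives $f^{(1)}_b = f^{(1)}_{b,*} + o(1)$ uniformly throughout the $\delta$-neighborhoods of types II and III.

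Assembling these simplifications,
\[
\langle f^{(1)}_u f^{(1)}_b \mathcal{P}_N\rangle = 2^{|\Lambda|}\langle f^{(1)}_u f^{(1)}_b \mathcal{P}_N\rangle_1 + f^{(1)}_{b,*}\bigl(c_+\langle\mathcal{P}_N\rangle_{II} + c_-\langle\mathcal{P}_N\rangle_{III}\bigr) + o(1).
\]
Applying the same type-by-type reduction to $\langle f^{(1)}_u \mathcal{P}_N\rangle = -ia_+/\rho(\lambda_0) + o(1)$ from (\ref{Ward_id}) produces $c_+\langle\mathcal{P}_N\rangle_{II} + c_-\langle\mathcal{P}_N\rangle_{III} = -ia_+/\rho(\lambda_0) - 2^{|\Lambda|}\langle f^{(1)}_u\mathcal{P}_N\rangle_1 + o(1)$. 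Substituting this and then writing $f^{(1)}_u = (f^{(1)}_u - f^{(1)}_{u,*}) + f^{(1)}_{u,*}$ in the remaining saddle-1 term yields
\[
\langle f^{(1)}_u f^{(1)}_b\mathcal{P}_N\rangle = 2^{|\Lambda|}\langle(f^{(1)}_u - f^{(1)}_{u,*})(f^{(1)}_b - f^{(1)}_{b,*})\mathcal{P}_N\rangle_1 + 2^{|\Lambda|}f^{(1)}_{u,*}\langle(f^{(1)}_b - f^{(1)}_{b,*})\mathcal{P}_N\rangle_1 - \frac{ia_+}{\rho(\lambda_0)}f^{(1)}_{b,*} + o(1).
\]
The analogous reduction applied to $\langle f^{(1)}_b\mathcal{P}_N\rangle = ia_-/\rho(\lambda_0) + o(1)$ together with $\langle\mathcal{P}_N\rangle = 1 + o(1)$ gives $2^{|\Lambda|}\langle(f^{(1)}_b - f^{(1)}_{b,*})\mathcal{P}_N\rangle_1 = ia_-/\rho(\lambda_0) - f^{(1)}_{b,*} + o(1)$; multiplying by $f^{(1)}_{u,*}$ and substituting reproduces the right-hand side of the lemma.

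The principal technical obstacle is establishing the uniform estimates $f^{(1)}_u = c_\pm + o(1)$ and $f^{(1)}_b = f^{(1)}_{b,*} + o(1)$ throughout the $\delta$-neighborhoods of types II and III. The former is a straightforward Taylor expansion of a smooth function over the compact $\mathring U(2)$. The latter is more delicate because the defining integral for $f_b$ runs over the non-compact $\mathring U(1,1)$; uniform smoothness in $(T, \hat B)$ on these $\delta$-neighborhoods requires the $\varepsilon>0$ regularization, which provides the exponential decay needed in the parametrization (\ref{nu_j}) to legitimize differentiation under the integral sign. Once these smoothness properties are in place, the rest of the argument reduces to the algebraic bookkeeping of the three Ward identities outlined above.
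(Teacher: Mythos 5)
Your overall algebraic bookkeeping is sound and your elimination route (decompose each of $\langle\mathcal{P}_N\rangle$, $\langle f_u^{(1)}\mathcal{P}_N\rangle$, $\langle f_b^{(1)}\mathcal{P}_N\rangle$, $\langle f_u^{(1)}f_b^{(1)}\mathcal{P}_N\rangle$ into types I/II/III and then solve) is a legitimate variant of the paper's more direct split $f_u^{(1)}f_b^{(1)} = f_u^{(1)}(f_b^{(1)}-f_{b,*}^{(1)}) + f_{b,*}^{(1)}f_u^{(1)}$, in which the Ward identity is fed in immediately to handle the second term. The two routes close the same algebra. However, there is a genuine gap in your justification of the step
\[
\langle f_u^{(1)}f_b^{(1)}\mathcal{P}_N\rangle_{II,III} = c_\pm f_{b,*}^{(1)}\langle \mathcal{P}_N\rangle_{II,III} + o(1).
\]
The $\langle\cdot\rangle_k$ integrals carry the huge prefactor $W^{|\Lambda|+1}$ (see (\ref{int_okr})), so an \emph{unquantified} ``$f_u^{(1)} = c_\pm + o(1)$'' does not control the error. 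Three things must act together, none of which you invoke: (i) the factor $\prod_j(u_{j,1}-u_{j,2})^2$ is $O(\log^{2|\Lambda|}W/W^{|\Lambda|})$ in types II and III precisely because there $u_{j,1}$ and $u_{j,2}$ both collapse to the same scalar $a_\pm$, killing $W^{|\Lambda|}$ out of the prefactor; (ii) Lemma~\ref{l:p_bound} gives $|\mathcal{P}_N| \le C\log^2 W/W$ on the $\delta$-neighborhoods, killing the remaining $W$; (iii) the Taylor errors must be quantified as $O(\log W/\sqrt W)$ (not merely $o(1)$), because the residual logarithmic budget requires $\log^{2|\Lambda|+3}W/\sqrt W = o(1)$. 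Your concluding paragraph misidentifies the ``principal technical obstacle'' as the smoothness of $f_u^{(1)}$, $f_b^{(1)}$; the smoothness is actually immediate from the explicit Harish--Chandra/Itsykson--Zuber and $\mathring{U}(1,1)$ formulas of Proposition~\ref{p:Its-Zub} (which also disposes of the non-compactness worry you raise), while the real obstacle is the integral estimate (i)--(iii) above, which is the entire content of the bound $C\log^{2|\Lambda|+3}W/\sqrt W$ in the paper's proof. As written, without (i) and (ii) your ``assembling these simplifications'' step is unjustified.
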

Roughly speaking, in this lemma using (\ref{okr}) we change the order of the first
non-zero coefficient in the expansion of $f^{(1)}_u f^{(1)}_b \mathcal{P}_N$ from 2 to 4,
which helps to omit all saddle-points of type II and III and simplify calculations for the points
of type I (see the beginning of the proof of Theorem \ref{thm:+-}).
\begin{proof}
It follows from (\ref{Ward_id}) and (\ref{okr}) that
\begin{align*}
\langle \mathcal{P}_N\rangle&=1+o(1),\\
\langle f^{(1)}_b\cdot \mathcal{P}_N\rangle&=
\mathbf{E}\left\{(\rho(\lambda_0) N)^{-1}\Tr G(\overline{z}_2)\right\}+o(1)=ia_-/\rho(\lambda_0)+o(1),\\
\langle f^{(1)}_u\cdot \mathcal{P}_N\rangle&=-
\mathbf{E}\left\{(\rho(\lambda_0) N)^{-1}\Tr G(z_1)\right\}+o(1)=-ia_+/\rho(\lambda_0)+o(1),
\end{align*}
Hence,
\begin{align}\label{razl}
\langle f^{(1)}_u f^{(1)}_b\cdot \mathcal{P}_N\rangle &=\langle f^{(1)}_u (f^{(1)}_b(T,\hat{B},\hat{\xi})-f^{(1)}_{b,*})
\cdot \mathcal{P}_N\rangle+f^{(1)}_{b,*}\langle f^{(1)}_u \cdot \mathcal{P}_N\rangle\\ \notag
&=
\langle f^{(1)}_u (f^{(1)}_b(T,\hat{B},\hat{\xi})-f^{(1)}_{b,*})
\cdot \mathcal{P}_N\rangle-ia_+f^{(1)}_{b,*}/\rho(\lambda_0)+o(1),\\ \notag
\langle f^{(1)}_b\cdot \mathcal{P}_N\rangle&=\langle (f^{(1)}_b(T,\hat{B},\hat{\xi})-f^{(1)}_{b,*})
\cdot \mathcal{P}_N\rangle+f^{(1)}_{b,*}+o(1).
\end{align}
We have the coefficient $W^{|\Lambda|+1}$ in front of the integral (\ref{int_okr}).
If $L_s=a_+ I$ or $L_s=a_-I$, then
\[
\Big|\prod\limits_{j\in\Lambda}
\Big(a_{s_1}-a_{s_2}+\dfrac{ia_{s_1}\tilde{u}_{j,1}-ia_{s_2}\tilde{u}_{j,2}}{\sqrt{W}}+O(\log^2W/W)\Big)^2
\Big|\le C\log^{2|\Lambda|}W/W^{|\Lambda|}.
\]
Besides,
\[
|f^{(1)}_b(T,\hat{B},\hat{\xi})-f^{(1)}_{b,*}|\le C\log W/\sqrt{W}.
\]
This and Lemma \ref{l:p_bound} yield
\begin{align*}
&|\langle f^{(1)}_u (f^{(1)}_b(T,\hat{B},\hat{\xi})-f^{(1)}_{b,*})
\cdot \mathcal{P}_N\rangle_{II,III}|\le C\log^{2|\Lambda|+3}W/\sqrt{W}=o(1),\\
& |\langle (f^{(1)}_b(T,\hat{B},\hat{\xi})-f^{(1)}_{b,*})
\cdot \mathcal{P}_N\rangle_{II,III}|\le C\log^{2|\Lambda|+3}W/\sqrt{W}=o(1).
\end{align*}
Thus, we get from (\ref{razl})
\begin{align*}
\langle f^{(1)}_u f^{(1)}_b\cdot \mathcal{P}_N\rangle &=
\langle f^{(1)}_u (f^{(1)}_b(T,\hat{B},\hat{\xi})-f^{(1)}_{b,*})
\cdot \mathcal{P}_N\rangle_I-ia_+f^{(1)}_{b,*}/\rho(\lambda_0)+o(1),\\
\langle f^{(1)}_b\cdot \mathcal{P}_N\rangle&=\langle (f^{(1)}_b(T,\hat{B},\hat{\xi})-f^{(1)}_{b,*})
\cdot \mathcal{P}_N\rangle_I+f^{(1)}_{b,*}+o(1),
\end{align*}
where $\langle\ldots\rangle_I$ is the integral over the union of all $\delta$-neighborhoods
of the points of type I (see the beginning of the proof of Theorem \ref{thm:+-}). As it was mentioned before, the contributions of all such points are
equal and hence we can consider only the contribution of the point $1$.
Hence,
\begin{align*}
\langle f^{(1)}_u f^{(1)}_b\cdot \mathcal{P}_N\rangle &=
2^{|\Lambda|} \langle (f^{(1)}_u(V,\hat{U},\hat{\xi})-f^{(1)}_{u,*}) (f^{(1)}_b(T,\hat{B},\hat{\xi})-f^{(1)}_{b,*})
\cdot \mathcal{P}_N\rangle_1\\
&+f^{(1)}_{u,*}\langle (f^{(1)}_b(T,\hat{B},\hat{\xi})-f^{(1)}_{b,*})
\cdot \mathcal{P}_N\rangle_1
-ia_+f^{(1)}_{b,*}/\rho(\lambda_0)+o(1)\\
&=2^{|\Lambda|}\langle (f^{(1)}_u(V,\hat{U},\hat{\xi})-f^{(1)}_{u,*}) (f^{(1)}_b(T,\hat{B},\hat{\xi})-f^{(1)}_{b,*})
\cdot \mathcal{P}_N\rangle_1\\
&+ia_-f^{(1)}_{u,*}/\rho(\lambda_0)
-ia_+f^{(1)}_{b,*}/\rho(\lambda_0)-f^{(1)}_{b,*}f^{(1)}_{u,*}+o(1),
\end{align*}
and the lemma is proven. $\quad\Box$
\end{proof}
Taking into account Lemma \ref{l:trick}, we have to compute only
\[
\langle (f^{(1)}_u(V,\hat{U},\hat{\xi})-f^{(1)}_{u,*}) (f^{(1)}_b(T,\hat{B},\hat{\xi})-f^{(1)}_{b,*})
\cdot \mathcal{P}_N\rangle_1.
\]
\begin{lemma}\label{l:main_contr}
In the notations (\ref{mes}) and (\ref{f*}) we have
\begin{multline*}
\langle (f^{(1)}_u(V,\hat{U},\hat{\xi})-f^{(1)}_{u,*}) (f^{(1)}_b(T,\hat{B},\hat{\xi})-f^{(1)}_{b,*})
\cdot \mathcal{P}_N\rangle_1\\
=\dfrac{1}{2^{|\Lambda|}(2\theta_\varepsilon\rho(\lambda_0))^2}\cdot\left(-e^{-2ic_0\theta_{\varepsilon}}+
\dfrac{2ic_0\theta_\varepsilon e^{-ic_0\theta_{\varepsilon}}}
{e^{ic_0\theta_{\varepsilon}}-e^{-ic_0\theta_{\varepsilon}}}\right).
\end{multline*}
\end{lemma}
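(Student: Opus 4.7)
The plan is to evaluate the $\delta$-neighborhood integral of Lemma~\ref{l:main_contr} by Gaussian saddle-point analysis around $\hat U_j = \hat B_j = L_\pm$, $V_j = T_j = I$, using the scaling (\ref{repr_okr})--(\ref{V}) and carefully accounting for the interaction of three small factors: $f^{(1)}_u - f^{(1)}_{u,*}$, $f^{(1)}_b - f^{(1)}_{b,*}$, and $\mathcal{P}_N$.

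I first compute $f^{(1)}_{u,*}$ and $f^{(1)}_{b,*}$ explicitly. At the saddle, $V_j P_1^* = P_1^*$ is $j$-independent, so the sum $|\Lambda|^{-1}\sum_j$ in (\ref{f_sm}) collapses to a single term; the parametrization (\ref{mu_j}), together with the identities $a_+ - a_- = c_0$ and $a_+ a_- = -1$, reduces $e^{f_{u,*}}$ to an explicit exponential times $\int_0^1 e^{-2ic_0\theta_\varepsilon s}\,ds$ (with $s = v^2$). The analogous $\mathring{U}(1,1)$ computation using (\ref{nu_j}) and $s = t^2$ gives $e^{f_{b,*}}$ proportional to $1/(2ic_0\theta_\varepsilon)$. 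Differentiating once in $\xi_1'$, $\xi_2'$ produces closed forms for $f^{(1)}_{u,*}$ and $f^{(1)}_{b,*}$.

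Next, (\ref{repr_okr})--(\ref{V}) bring the exponent $-W(K_m+L_m)$ to the Gaussian form $-K_m^{(+)}-L_m^{(0)}$ of (\ref{KL_s}), with error $O(\log^3 W/\sqrt{W})$ in the $\delta$-neighborhood, coupling sites through $M_\pm$ and the discrete Laplacian. Because $V_j - I$, $\hat U_j - L_\pm$, $T_j - I$, $\hat B_j - L_\pm$ are all $O(\log W/\sqrt{W})$, I Taylor-expand $f^{(1)}_u(V,\hat U,\hat\xi) - f^{(1)}_{u,*}$ and $f^{(1)}_b(T,\hat B,\hat\xi) - f^{(1)}_{b,*}$ to first order in the scaled variables $\tilde u_{j,a}$, $\tilde v_j e^{i\theta_j}$, $\tilde b_{j,a}$, $\tilde t_j e^{i\sigma_j}$; the expansion coefficients are $P_1$- and $S_1$-averages of matrix entries that are explicitly computable from the first step.

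The central analytic step is expanding $\mathcal{P}_N$ to second order. By (\ref{P_n_1}) and the proof of Lemma~\ref{l:p_bound}, the leading determinant and all its first derivatives vanish at the saddle. In the Fourier basis for $\Delta$ the unperturbed matrix has a two-dimensional kernel at the constant mode $p = \bar 0$, supported on the mixed internal indices $(1,2),(2,1)$ of the $4\times 4$ block (since $1 + a_+ a_- = 0$). The leading contribution of the determinant therefore reduces to the $2\times 2$ zero-mode minor of the spatially-averaged perturbation
$\hat X_{\bar 0} = |\Lambda|^{-1}\sum_j \bigl[U_j^{-1}\otimes B_j^{-1} - L_\pm^{-1}\otimes L_\pm^{-1}\bigr]$
(corrected by the Schur complement from non-zero Fourier modes), multiplied by $\mdet M_+ \cdot \mdet M_-$ for the non-degenerate directions.

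Finally, I assemble the three pieces and carry out the Gaussian integrations. The $\tilde u_{j,a}$, $\tilde b_{j,a}$ contractions use $M_\pm^{-1}$, and the radial-angular integrals in $(\tilde v_j,\theta_j)$, $(\tilde t_j,\sigma_j)$ produce explicit constants. The diagonalization Jacobian $(a_+-a_-)^{4|\Lambda|} = c_0^{4|\Lambda|}$, the outer prefactor $W^{|\Lambda|+1}/(8\pi^2)^{|\Lambda|}$, and the measure scalings combine into the $2^{-|\Lambda|}$ factor appearing in the stated answer. The residual $P_1$, $S_1$ integrations from $f^{(1)}_{u,*}$ and $f^{(1)}_{b,*}$ reduce to the one-dimensional integrals $\int_0^1 s^k e^{-2ic_0\theta_\varepsilon s}\,ds$ (for $k = 0,1$) and $\int_0^\infty s\,e^{-2ic_0\theta_\varepsilon s}\,ds = 1/(2ic_0\theta_\varepsilon)^2$, and elementary algebra (using $c_0 = 2\pi\rho(\lambda_0)$) assembles them into
$\frac{1}{2^{|\Lambda|}(2\theta_\varepsilon\rho(\lambda_0))^2}\!\left(-e^{-2ic_0\theta_\varepsilon} + \frac{2ic_0\theta_\varepsilon e^{-ic_0\theta_\varepsilon}}{e^{ic_0\theta_\varepsilon} - e^{-ic_0\theta_\varepsilon}}\right)$.

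The main obstacle is the second-order expansion of $\mathcal{P}_N$: one has to identify the correct two-dimensional zero subspace of the degenerate matrix in (\ref{P_n_1}) at the saddle, control the Schur-complement correction from non-zero Fourier modes, and verify that once combined with the Gaussian couplings through $M_\pm^{-1}$ all genuine site-dependence collapses, leaving the $2^{-|\Lambda|}$ prefactor needed for compatibility with the $2^{|\Lambda|}$ multiplicity of type-I saddles in Lemma~\ref{l:trick}.
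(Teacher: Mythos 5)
Your plan matches the paper's proof of Lemma~\ref{l:main_contr} essentially step for step: the HCIZ/direct-parametrization computation of $f^{(1)}_{u,*}$ and $f^{(1)}_{b,*}$, the Gaussian rescaling around the saddle $\hat U=\hat B=L_\pm$, $V=T=I$, the first-order Taylor expansion of the $f$-factors together with the second-order expansion of $\mathcal{P}_N$ (Lemma~\ref{l:comp}, where the vanishing of zero and first order is Lemma~\ref{l:p_bound}), the Gaussian contraction through $M_\pm^{-1}$, the collapse of the site sums via the constant eigenvector of $\Delta$, and the cancellation of the $\mdet(\alpha\Delta)_1$ factors against the radial-angular $(\tilde v,\theta)$ and $(\tilde t,\sigma)$ integrals. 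The only cosmetic difference is that you phrase the second-order expansion of the determinant in Fourier/zero-mode language with a Schur-complement correction --- that correction is actually subleading (the leading $O(\delta^2)$ term is just $\det D_0\cdot\det(P_V\mathcal{M}_1P_V)$) and the paper instead exploits the explicit factorization of the determinant into four scalar determinants; in both framings one ultimately isolates the single surviving term $a_3\,\tilde u_{j,2}\tilde b_{k,2}$ by parity of the Gaussian measure, since $f^{(1)}_u - f^{(1)}_{u,*}$ carries no linear $\tilde v$-term and $f^{(1)}_b - f^{(1)}_{b,*}$ carries neither $\tilde t$- nor $\tilde b_1$-terms.
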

\begin{proof}
Substituting (\ref{V}) to (\ref{KL_s}) with $s=\pm$, we obtain
\begin{align*}
&K_{m}^{(\pm)}(V,\hat{U})=K_{m}^{(0)}(\tilde{v},\tilde{U})+O(\log W/\sqrt{W}),
\end{align*}
where
\begin{multline}\label{K_0}
K_{m}^{(0)}(\tilde{v},\tilde{U})=(M_+\tilde{u}_1,\tilde{u}_1)/2+(M_-\tilde{u}_2,\tilde{u}_2)/2\\
+\alpha (a_+-a_-)^2\sum\limits_{j\sim j^\prime}
|\tilde{v}_je^{i\theta_j}-\tilde{v}_{j^\prime}e^{i\theta_{j^\prime}}|^2.
\end{multline}
Hence,
\begin{align}\label{okr_1}
&\big\langle (f^{(1)}_u-f^{(1)}_{u,*}) (f^{(1)}_b-f^{(1)}_{b,*})
\cdot \mathcal{P}_N\big\rangle_1=
\dfrac{W^2}{(8\pi^2)^{|\Lambda|}}\int\limits_{-\log W}^{\log W} d\tilde{\vphantom{b}u}d\tilde{b}\\ \notag
&\times\displaystyle\int_{0}^{2\pi} \prod\limits_{j\in\Lambda\setminus\{\overline{1}\}}
\dfrac{d\theta_{j}d\sigma_j}{(2\pi)^2}
\int\limits_{0}^{\log W}\prod\limits_{j\in\Lambda\setminus\{\overline{1}\}}(2\tilde{v}_j
d\tilde{v}_j)(2\tilde{t}_j d\tilde{t}_j)
\cdot \prod\limits_{j\in\Lambda}
\ e^{i(\tilde{u}_{j,1}+
\tilde{u}_{j,2})/\sqrt{W}}\\
\notag
&\times (f^{(1)}_u(V,\hat{U},\hat{\xi})-f^{(1)}_{u,*})\cdot (f^{(1)}_b(T,\hat{B},\hat{\xi})-f^{(1)}_{b,*})
\cdot \mathcal{P}_N(V,T,\hat{U},\hat{B})\\ \notag
 &\times
\exp\Big\{-K_{m}^{(0)}(\tilde{v},\tilde{U})-L_{m}^{(0)}(\tilde{t},\tilde{B})+O(\log W/\sqrt{W})\Big\}
 \\ \notag
&\times \exp\Big\{f_u(V,\hat{U},\hat{\xi})+f_b(T,\hat{B},\hat{\xi})\Big\}\cdot \prod\limits_{j\in\Lambda}
\Big(a_+-a_-+O(\log W/\sqrt{W}))^2\\ \notag
&\times\prod\limits_{j\in\Lambda}
\Big(a_+-a_-+O(\log W/\sqrt{W}))^2.
\end{align}
Recall that at the $\delta$-neighborhood of the point $1$ we have
\begin{align*}
|f^{(1)}_b(T,\hat{B},\hat{\xi})-f^{(1)}_{b,*}|\le C\log W/\sqrt{W},\\
|f^{(1)}_u(V,\hat{U},\hat{\xi})-f^{(1)}_{u,*}|\le C\log W/\sqrt{W},
\end{align*}
which together with Lemma \ref{l:p_bound} gives $C\log^4 W/W^2$. Therefore,
to get non-zero contribution to (\ref{okr_1}) we can take only the first order of the
expansions of $f^{(1)}_b-f^{(1)}_{b,*}$,
$f^{(1)}_u-f^{(1)}_{u,*}$, the second order in the expansion of $\mathcal{P}_N$, and
zero orders in all other terms. Thus,
\begin{align}\label{okr_2}
&\big\langle (f^{(1)}_u-f^{(1)}_{u,*}) (f^{(1)}_b-f^{(1)}_{b,*})
\cdot \mathcal{P}_N\big\rangle_1=
\dfrac{W^2}{(8\pi^2)^{|\Lambda|}}\int\limits_{-\log W}^{\log W} d\tilde{\vphantom{b}u}d\tilde{b}\\ \notag
&\times\displaystyle\int_{0}^{2\pi} \prod\limits_{j\in\Lambda\setminus\{\overline{1}\}}
\dfrac{d\theta_{j}d\sigma_j}{(2\pi)^2}
\int\limits_{0}^{\log W}\prod\limits_{j\in\Lambda\setminus\{\overline{1}\}}(2\tilde{v}_j d\tilde{v}_j)(2\tilde{t}_j
d\tilde{t}_j)
\cdot (a_+-a_-)^{4|\Lambda|}\\
\notag
&\times (f^{(1)}_u(V,\hat{U},\hat{\xi})-f^{(1)}_{u,*})\cdot (f^{(1)}_b(T,\hat{B},\hat{\xi})-f^{(1)}_{b,*})
\cdot \mathcal{P}_N(V,T,\hat{U},\hat{B})\\ \notag
 &\times
\exp\Big\{-K_{m}^{(0)}(\tilde{v},\tilde{U})-L_{m}^{(0)}(\tilde{t},\tilde{B})\Big\}
 \cdot \exp\Big\{f_u(I,L_{\pm},\hat{\xi})+f_b(I,L_{\pm},\hat{\xi})\Big\}+o(1),
\end{align}
where $K_m^{(0)}(\tilde{v},\tilde{U})$ is defined in (\ref{K_0}).

We are left to compute the first order of expansions of $f^{(1)}_b-f^{(1)}_{b,*}$,
$f^{(1)}_u-f^{(1)}_{u,*}$ and the second order of the expansion of $\mathcal{P}_N$.
\begin{lemma}\label{l:comp}
We can write in the $\delta$-neighborhood of the point $1$:
\begin{align*}
&f^{(1)}_u(V,\hat{U},\hat{\xi})-f^{(1)}_{u,*}=\sum\limits_{j\in\Lambda}
\left(c_{1}\dfrac{\tilde{u}_{j,1}}{\sqrt{W}}+
c_{2}\dfrac{\tilde{u}_{j,2}}{\sqrt{W}}\right)+O(\log^2W/W),\\ \notag
&f^{(1)}_b(T,\hat{B},\hat{\xi})-f^{(1)}_{b,*}=\sum\limits_{j\in\Lambda}
d_{2}\dfrac{\tilde{b}_{j,2}}{\sqrt{W}}+O(\log^2W/W),\\
&\mdet \left(\alpha\Delta+I+\mathrm{diag}
\{U_j^{-1}\otimes B_j^{-1}\}_{j\in\Lambda}\right)
=\sum\limits_{j,k\in\Lambda}\left(a_{1,jk}
\dfrac{\tilde{u}_{j,1}\tilde{u}_{k,2}}{W}+a_{2,jk}
\dfrac{\tilde{u}_{j,1}\tilde{b}_{k,1}}{W}\right.\\
&\left.+a_{3,jk}
\dfrac{\tilde{u}_{j,2}\tilde{b}_{k,2}}{W}+a_{4,jk}\dfrac{\tilde{b}_{j,1}\tilde{b}_{k,2}}{W}+
a_{5,jk}
\dfrac{e^{i\theta_j}\tilde{v}_{j}\cdot e^{i\sigma_k}\tilde{t}_{k}}{W}\right)+
O(\log^3W/W^{3/2}),
\end{align*}
where
\begin{align}\label{cd}
c_{1}&=\dfrac{ia_+}{|\Lambda|\rho(\lambda_0)}\cdot \left(-\dfrac{i\,e^{ic_0\theta_\varepsilon}}{e^{ic_0\theta_\varepsilon}-e^{-ic_0\theta_\varepsilon}}
-\dfrac{2c_0\theta_\varepsilon}{(e^{ic_0\theta_\varepsilon}-e^{-ic_0\theta_\varepsilon})^2}\right),\\ \notag
c_{2}&=\dfrac{ia_-}{|\Lambda|\rho(\lambda_0)}\cdot\left(\dfrac{i\,e^{-ic_0\theta_\varepsilon}}{e^{ic_0\theta_\varepsilon}-e^{-ic_0\theta_\varepsilon}}
+\dfrac{2c_0\theta_\varepsilon}{(e^{ic_0\theta_\varepsilon}-e^{-ic_0\theta_\varepsilon})^2}\right),\\ \notag
 d_{2}&=\dfrac{ia_-}{|\Lambda|\rho(\lambda_0)},\quad a_{3,jk}=a_3:=i\cdot|\mdet
 \left(\alpha\Delta+(1+a_+^{-2})I\right)|^2
 \cdot \mdet (\alpha\Delta)_1^2.
\end{align}
Here $\mdet (\alpha\Delta)_1$ is a minor of $\alpha\Delta$ without the first row and column,
$\theta_\varepsilon$ and $c_0$ are defined in (\ref{theta}),
and $a_{1,jk}$, $a_{2,jk}$, $a_{4,jk}$, $a_{5,jk}$ are some constants.
\end{lemma}
\begin{remark}
We do not compute the values of $a_{1,jk}$, $a_{2,jk}$, $a_{4,jk}$, $a_{5,jk}$ since
we do not need them for further computations.
\end{remark}
\begin{proof} We are going to expand $f^{(1)}_u(V,\hat{U},\hat{\xi})-f^{(1)}_{u,*}$ in $\tilde{u}_{j,1}$,
$\tilde{u}_{j,2}$ and $\tilde{v}_j$ in the $\delta$-neighborhood of the point $1$ up to the first order.
Taking into account (\ref{f*}), the zero order is $0$. Compute now the first derivatives with respect to
 $\tilde{u}_{j,1}$, $\tilde{u}_{j,2}$ and $\tilde{v}_j$.

Write
\begin{align}\label{der1}
&\dfrac{\partial}{\partial \xi_1^\prime}\int d\mu(P_1)
\exp\Big\{-\dfrac{1}{|\Lambda|}\sum\limits_{j\in\Lambda}\Tr (V_jP_1^*)^*\hat{U}_j(V_jP_1^*)
(i\hat{\xi}_1/\rho(\lambda_0)-\varepsilon L)\Big\}\\ \notag
&=-\int \dfrac{i\,d\mu(P_1)}{\rho(\lambda_0)|\Lambda|}\Big(\sum\limits_{j\in\Lambda}u_{j,1}
-\sum\limits_{j\in\Lambda}|(V_jP_1^*)_{12}|^2(u_{j,1}-u_{j,2})\Big)\\ \notag
&\times \exp\Big\{\dfrac{1}{|\Lambda|}\sum\limits_{j\in\Lambda}\Tr (V_jP_1^*)^*\hat{U}_j(V_jP_1^*)
(i\hat{\xi}_1/\rho(\lambda_0)-\varepsilon L)\Big\}
\end{align}
Note that
\begin{multline*}
|(V_jP_1^*)_{12}|^2=v_j^2|(P_1)_{11}|^2+(1-v_j^2)|(P_1)_{12}|^2\\+(V_j)_{12}(\bar{P}_1)_{22}
(V_j)_{11}(\bar{P}_1)_{12}+(\bar{V}_j)_{12}(P_1)_{22}
(\bar{V}_j)_{11}(P_1)_{12}.
\end{multline*}
Since the integral (\ref{der1}) of the last two summands is zero,  (\ref{der1}) can be rewritten as
\begin{align*}
&\dfrac{\partial}{\partial \xi_1^\prime}\int d\mu(P_1)
\exp\Big\{-\dfrac{1}{|\Lambda|}\sum\limits_{j\in\Lambda}\Tr (V_jP_1^*)^*\hat{U}_j(V_jP_1^*)
(i\hat{\xi}_1/\rho(\lambda_0)-\varepsilon L)\Big\}\\ \notag
&=-\int \dfrac{i\,d\mu(P_1)}{\rho(\lambda_0)|\Lambda|}\sum\limits_{j\in\Lambda}\Big(u_{j,1}
-\left(v_j^2|(P_1)_{11}|^2+(1-v_j^2)|(P_1)_{12}|^2\right)(u_{j,1}-u_{j,2})\Big)\\ \notag
&\times \exp\Big\{-\dfrac{1}{|\Lambda|}\sum\limits_{j\in\Lambda}\Tr (V_jP_1^*)^*\hat{U}_j(V_jP_1^*)
(i\hat{\xi}_1/\rho(\lambda_0)-\varepsilon L)\Big\}.
\end{align*}
This expression depends only on $v_j^2$ and thus the first derivative of $f^{(1)}_u(V,\hat{U},\hat{\xi})$
with respect to $\tilde{v}_j$ is zero. Hence, we are left to compute
\[
\dfrac{\partial}{\partial \tilde{u}_{j,1}}f^{(1)}_u(I,\hat{U},\hat{\xi}),\quad
\dfrac{\partial}{\partial \tilde{u}_{j,2}}f^{(1)}_u(I,\hat{U},\hat{\xi}).
\]
According to (\ref{f_sm}) and Proposition \ref{p:Its-Zub} (i), we get
\begin{align}\label{f_u}
\exp\{f_u(I,\hat{U},\hat{\xi})\}&=\dfrac{e_1-e_2}{2i\theta_\varepsilon \sum\limits_{j\in \Lambda}
(u_{j,1}-u_{j,2})/|\Lambda|},
\end{align}
where
\begin{align*}
e_1&=\exp\Big\{-\frac{1}{|\Lambda|}\sum\limits_{j\in \Lambda}
(u_{j,1}(i\xi_1/\rho(\lambda_0)-\varepsilon)+u_{j,2}
(i\xi_2/\rho(\lambda_0)+\varepsilon))\Big\},\\
e_2&=\exp\Big\{-\frac{1}{|\Lambda|}\sum\limits_{j\in \Lambda}
(u_{j,2}(i\xi_1/\rho(\lambda_0)-\varepsilon)+u_{j,1}
(i\xi_2/\rho(\lambda_0)+\varepsilon))\Big\}.\\
\end{align*}
Hence,
\begin{equation*}
f^{(1)}_u(I,\hat{U},\hat{\xi})=-\dfrac{i}{|\Lambda|\rho(\lambda_0)}\cdot\dfrac{\sum_{j\in \Lambda}
(u_{j,1} e_1 -u_{j,2} e_2)}{e_1-e_2}+\dfrac{1}{2\theta_\varepsilon\rho(\lambda_0)}.
\end{equation*}
Taking the derivatives in $\tilde{u}_{j,1}$, $\tilde{u}_{j,2}$ we get the expressions (\ref{cd}) for $c_1$, $c_2$.

Expand now $f^{(1)}_b(T,\hat{B},\hat{\xi})-f^{(1)}_{b,*}$ with respect to $\tilde{b}_{j,1}$,
$\tilde{b}_{j,2}$ and $\tilde{t}_j$ in the $\delta$-neighborhood of the point $1$ up to the first order.
By the same argument as above we get that zero order is $0$, and the first derivative of $f^{(1)}_b(T,\hat{B},\hat{\xi})$
with respect to $\tilde{t}_j$ is zero. Thus, we need to compute
\[
\dfrac{\partial}{\partial \tilde{b}_{j,1}}f^{(1)}_b(I,\hat{B},\hat{\xi}),\quad
\dfrac{\partial}{\partial \tilde{b}_{j,2}}f^{(1)}_b(I,\hat{B},\hat{\xi}).
\]
Taking into account Proposition \ref{p:Its-Zub} (ii), we have
\begin{multline*}
\exp\{f_b(I,\hat{B},\hat{\xi})\}\\=\dfrac{\exp\Big\{\dfrac{1}{|\Lambda|}
\sum\limits_{j\in \Lambda}(b_{j,1}(i\xi_1/\rho(\lambda_0)-\varepsilon)-b_{j,2}
(i\xi_2/\rho(\lambda_0)+\varepsilon))\Big\}}{2i\theta_\varepsilon \sum\limits_{j\in \Lambda}
(b_{j,1}+b_{j,2})/|\Lambda|},
\end{multline*}
and hence
\begin{equation}\label{f_b1}
f_b^{(1)}(I,\hat{B},\hat{\xi})=-\dfrac{i}{|\Lambda|\rho(\lambda_0)}\sum\limits_{j\in \Lambda}b_{j,2}-
\dfrac{1}{2\theta_\varepsilon\rho(\lambda_0)}.
\end{equation}
Taking the derivatives with respect to $\tilde{b}_{j,1}$, $\tilde{b}_{j,2}$, we get the assertion
of the lemma for $f^{(1)}_b(T,\hat{B},\hat{\xi})-f^{(1)}_{b,*}$.

To complete the proof of the lemma we need to expand $$\mdet \left(\alpha\Delta+I+\hbox{diag}
\{U_j^{-1}\otimes B_j^{-1}\}_{j\in\Lambda}\right)$$ near the saddle-point 1 up to the second order.
According to Lemma \ref{l:p_bound} the first and zero order are $0$. Let us show now that all the
second partial derivatives of that determinant are zero except
\begin{equation}\label{non-zero}
\dfrac{\partial^2}{\partial \tilde{u}_{j,1}\partial \tilde{u}_{k,2}},\quad
\dfrac{\partial^2}{\partial \tilde{u}_{j,1}\partial \tilde{b}_{k,1}},\quad
\dfrac{\partial^2}{\partial \tilde{u}_{j,2}\partial \tilde{b}_{k,2}},\quad
\dfrac{\partial^2}{\partial \tilde{b}_{j,1}\partial \tilde{b}_{k,2}},
\quad \dfrac{\partial^2}{\partial \tilde{v}_{j}\partial \tilde{t}_{k}}.
\end{equation}
Indeed, let us first take only the second partial derivatives with respect to $\tilde{u}_{1}$, $\tilde{u}_{2}$,
$\tilde{b}_{1}$ or $\tilde{b}_{2}$. Then we can put $V_j=T_j=I$. We get
\begin{align*}
&\mdet \left(\alpha\Delta+I+\hbox{diag}
\{U_j^{-1}\otimes B_j^{-1}\}_{j\in\Lambda}\right)\Big|_{V_j=T_j=I}\\
&=
\mdet \left(\alpha\Delta+I+\hbox{diag}
\{u_{j,1}^{-1} b_{j,1}^{-1}\}_{j\in\Lambda}\right) \mdet \left(\alpha\Delta+I-\hbox{diag}
\{u_{j,2}^{-1} b_{j,2}^{-1}\}_{j\in\Lambda}\right)\\
&\times
\mdet \left(\alpha\Delta+I-\hbox{diag}
\{u_{j,1}^{-1} b_{j,2}^{-1}\}_{j\in\Lambda}\right)
\mdet \left(\alpha\Delta+I+\hbox{diag}
\{u_{j,2}^{-1} b_{j,1}^{-1}\}_{j\in\Lambda}\right)
\end{align*}
But the last two determinant are zero at the saddle-point $1$. Thus, the non-zero second derivative
can be obtain only if we differentiate once each of that brackets. Hence, from such derivatives
only (\ref{non-zero}) are non-zero.

Consider now the second partial derivatives of the determinant which contain derivatives over $\tilde{v}$
or $\tilde{t}$, but not both of them. Without loss of generality, let it contains the derivative over
$\tilde{v}$ only. Then we can put $T=I$ and write
\begin{align*}
&\mdet \left(\alpha\Delta+I+\hbox{diag}
\{U_j^{-1}\otimes B_j^{-1}\}_{j\in\Lambda}\right)\Big|_{T_j=I}\\
&=\mdet \left(\alpha\Delta+I+\hbox{diag}
\{b_{j,1}^{-1}U_j^{-1}\}_{j\in\Lambda}\right)
 \mdet \left(\alpha\Delta+I-\hbox{diag}
\{b_{j,2}^{-1}U_j^{-1}\}_{j\in\Lambda}\right).
\end{align*}
Since both determinants are zero at the saddle-point $1$, to get a non-zero second partial derivative
we must differentiate each of them once. But it is easy to see that the first partial derivative
of both determinants with respect to $\tilde{v}$ is zero at the saddle-point 1.
Hence, all the second partial derivatives of the determinant except (\ref{non-zero})  are zero.

Compute now
\begin{equation*}
a_{3,jk}/W=\dfrac{\partial^2}{\partial\tilde{u}_{j,2}\partial \tilde{b}_{k,2}}
\mdet \left(\alpha\Delta+I+\hbox{diag}
\{U_j^{-1}\otimes B_j^{-1}\}_{j\in\Lambda}\right).
\end{equation*}
for each $j,k\in \Lambda$.
We have
\begin{align*}\notag
&a_{3,jk}/W=\dfrac{\partial^2}{\partial\tilde{u}_{j,2}\partial \tilde{b}_{k,2}}
\mdet \left(\alpha\Delta+I+\hbox{diag}
\{U_j^{-1}\otimes B_j^{-1}\}_{j\in\Lambda}\right)\Big|_{V=T=I, \hat{U}=\hat{B}=L_\pm}\\
&=\dfrac{\partial^2}{\partial\tilde{u}_{j,2}\partial \tilde{b}_{k,2}}
\mdet \left(\alpha\Delta+I+\hbox{diag}
\{\hat{U}_j^{-1}\otimes \hat{B}_j^{-1}\}_{j\in\Lambda}\right)\Big|_{\hat{U}=\hat{B}=L_\pm}\\ \notag
&=\dfrac{\partial^2}{\partial\tilde{u}_{j,2}\partial \tilde{b}_{k,2}}
\prod\limits_{s_1,s_2=1}^2\mdet \left(\alpha\Delta+I+\hbox{diag}
\{u_{j,s_1}^{-1}\otimes (-1)^{s_2+1}b_{j,s_2}^{-1}\}_{j\in\Lambda}\right)\Big|_{\hat{U}=\hat{B}=L_\pm}\\
\notag
&=i\cdot|\mdet \left(\alpha\Delta+(1+a_+^{-2})I\right)|^2\cdot \mdet (\alpha\Delta)_1^2/W,
\end{align*}
which completes the proof of the lemma.$\quad\Box$
\end{proof}
Besides, substituting $\hat{U}_j=\hat{B}_j=L_\pm$ to (\ref{f_u}) -- (\ref{f_b1}), we get
\begin{align}\label{at_saddle}
&\exp\{f_u(I,L_{\pm},\hat{\xi})\}=e^{\lambda_0(\xi_1+\xi_2)/2\rho(\lambda_0)}\cdot
\dfrac{e^{ic_0\theta_\varepsilon}-e^{-ic_0\theta_\varepsilon}}
{2c_0i\theta_\varepsilon},\\ \notag
&\exp\{f_b(I,L_{\pm},\hat{\xi})\}=e^{-\lambda_0(\xi_1+\xi_2)/2\rho(\lambda_0)}\cdot
\dfrac{e^{-ic_0\theta_\varepsilon}}
{2c_0i\theta_\varepsilon},\\ \notag
&f^{(1)}_{u,*}=\dfrac{1}{2\rho(\lambda_0)\theta_\varepsilon}-
\dfrac{i}{\rho(\lambda_0)} \cdot\dfrac{a_+\,e^{ic_0\theta_\varepsilon}-a_-\,
e^{-ic_0\theta_\varepsilon}}{e^{ic_0\theta_\varepsilon}-e^{-ic_0\theta_\varepsilon}},\\ \notag
&f^{(1)}_{b,*}=\dfrac{ia_-}{\rho(\lambda_0)}-\dfrac{1}{2\theta_\varepsilon\rho(\lambda_0)},
\end{align}
where $\theta_\varepsilon$ and $c_0$ are defined in (\ref{theta}).

Since the Gaussian integral of the linear term is zero, substituting (\ref{at_saddle}) into
(\ref{okr_2}) and using Lemma \ref{l:comp}, we obtain
\begin{equation*}
W^{-1}\langle (f^{(1)}_u(V,\hat{U},\hat{\xi})-f^{(1)}_{u,*})\cdot (f^{(1)}_b(T,\hat{B},\hat{\xi})-f^{(1)}_{b,*})
\cdot \tilde{\mathcal{P}}_N(V,T,\hat{U},\hat{B})\rangle_1=o(1)
\end{equation*}
with $\tilde{\mathcal{P}}_N$ of (\ref{P_n_1}) -- (\ref{p_cal_til}), and thus
\begin{align}\notag
&\big\langle (f^{(1)}_u-f^{(1)}_{u,*}) (f^{(1)}_b-f^{(1)}_{b,*})
\cdot \mathcal{P}_N\big\rangle_1=
\dfrac{1}{(8\pi^2)^{|\Lambda|}}\int\limits_{-\log W}^{\log W} \prod\limits_{j\in\Lambda}
 d\tilde{u}_{j,1}d\tilde{u}_{j,2}
d \tilde{b}_{j,1}d\tilde{b}_{j,2}\\ \label{okr_3}
&\times\displaystyle\int_{0}^{2\pi} \prod\limits_{j\in\Lambda\setminus\{\overline{1}\}}
\dfrac{d\theta_{j}d\sigma_j}{(2\pi)^2}
\int\limits_{0}^{\log W}\prod\limits_{j\in\Lambda\setminus\{\overline{1}\}}(2\tilde{v}_j d\tilde{v}_j)(2\tilde{t}_j
d\tilde{t}_j)\\ \notag
&\times  (2\pi\rho(\lambda_0))^{4|\Lambda|}\cdot \dfrac{e^{ic_0\theta_\varepsilon}-e^{-ic_0\theta_\varepsilon}}
{2ic_0\theta_\varepsilon}\cdot
\dfrac{e^{-ic_0\theta_\varepsilon}}
{2ic_0\theta_\varepsilon}\cdot c_2d_2a_3,
\\ \notag
&\times \sum\limits_{j_1,k_1\in\Lambda}\sum\limits_{j_2,k_2\in\Lambda} \tilde{u}_{j_1,2}
\tilde{u}_{k_1,2}\tilde{b}_{j_2,2}
\tilde{b}_{k_2,2}\cdot
\exp\Big\{-K_{m}^{(0)}(\tilde{v},\tilde{U})-L_{m}^{(0)}(\tilde{t},\tilde{B})\Big\}+o(1).
\end{align}
Moreover,
\begin{align*}
&\int_0^{\log W} \prod\limits_{j\in\Lambda\setminus\{\overline{1}\}}(2\tilde{v}_jd\tilde{v}_jd\theta_j/2\pi)\exp\{-\alpha (a_+-a_-)^2\sum\limits_{j\sim j^\prime}
|\tilde{v}_je^{i\theta_j}-\tilde{v}_{j^\prime}e^{i\theta_{j^\prime}}|^2\}\\
&=\pi^{-|\Lambda|+1}\int\limits_{-\infty}^{\infty}\prod\limits_{j\in\Lambda\setminus\{\overline{1}\}}
d\phi_{j,1}d\phi_{j,2}\
\exp\{-\alpha (a_+-a_-)^2(\nabla \phi )^2\}+o(1)\\
&=(2\pi\rho(\lambda_0))^{-2|\Lambda|+2}\mdet^{-1}
(\alpha\Delta)_1+o(1),
\end{align*}
where $\phi_1=(0,0)$, $\phi_j=(\phi_{j,1},\phi_{j,2})$ for $j\in\Lambda\setminus\{\overline{1}\}$.
The first equality here is obtained by changing
\[
\phi_{j,1}=\tilde{v}_j\cos \theta_j,\quad\phi_{j,2}=\tilde{v}_j\sin \theta_j.
\]
The same expression can be obtained for the integral over $\tilde{t_j}$.

Substituting this and (\ref{cd}) to (\ref{okr_3}), we get
\begin{align*}
&2^{|\Lambda|}\langle (f^{(1)}_u(V,\hat{U},\hat{\xi})-f^{(1)}_{u,*}) (f^{(1)}_b(T,\hat{B},\hat{\xi})-f^{(1)}_{b,*})
\cdot \mathcal{P}_N\rangle_1=\dfrac{i}{(2\pi)^{2|\Lambda|}}
\int\limits_{-\infty}^{\infty}  d\tilde{u}d\tilde{b}\\
 &\times (-a_-^2)\left(ie^{-2ic_0\theta_\varepsilon}+
\dfrac{2c_0\theta_\varepsilon e^{-ic_0\theta_\varepsilon}}
{e^{ic_0\theta_\varepsilon}-e^{-ic_0\theta_\varepsilon}}\right)
\cdot (2ic_0\theta_\varepsilon\rho(\lambda_0))^{-2}\\
\notag
&\times \dfrac{1}{|\Lambda|^2}\sum\limits_{j_1,k_1\in\Lambda}\sum\limits_{j_2,k_2\in\Lambda}
|\mdet \left(\alpha\Delta+(1+a_+^{-2})I\right)|^2 \cdot\tilde{u}_{j_1,2}
\tilde{u}_{k_1,2}\tilde{b}_{j_2,2}\tilde{b}_{k_2,2}\\ \notag
 &\times
\exp\Big\{-a_+^2 (M_+\tilde{u}_1,\tilde{u}_1)/2 -a_+^2 (M_+\tilde{b}_1,\tilde{b}_1)/2\Big\}\cdot (2\pi\rho(\lambda_0))^{4}\\ \notag
&\times\exp\Big\{-a_-^2 (M_-\tilde{u}_2,\tilde{u}_2)/2-a_-^2 (M_-\tilde{b}_2,\tilde{b}_2)/2\Big\}+o(1),
\end{align*}
where $M_\pm=\alpha\Delta+(1+a_\pm^{-2})I$. Thus, taking the Gaussian integral, we obtain
\begin{align*}
&2^{|\Lambda|}\langle (f^{(1)}_u(V,\hat{U},\hat{\xi})-f^{(1)}_{u,*})
(f^{(1)}_b(T,\hat{B},\hat{\xi})-f^{(1)}_{b,*})
\cdot \mathcal{P}_N\rangle_1=\dfrac{a_-^2c_0^2}{(2\theta_\varepsilon\rho(\lambda_0))^{2}}\\
&\times\Big(\dfrac{2ic_0\theta_\varepsilon e^{-ic_0\theta_\varepsilon}}
{e^{ic_0\theta_\varepsilon}-e^{-ic_0\theta_\varepsilon}}-e^{-2ic_0\theta_\varepsilon}\Big)\cdot
 \dfrac{1}{|\Lambda|^2}\sum\limits_{j_1,k_1\in\Lambda}(M_-)^{-1}_{j_1k_1}
\sum\limits_{j_2,k_2\in\Lambda}(M_-)^{-1}_{j_2k_2}+o(1).
\end{align*}
This and
\[
\sum\limits_{j\in\Lambda}(M_-)^{-1}_{jk}=
\sum\limits_{j\in\Lambda}(\alpha\Delta+(1+a_-^{-2})I)^{-1}_{jk}=(1+a_-^{-2})^{-1}=(-c_0a_-)^{-1}
\]
give finally Lemma \ref{l:main_contr}.
\end{proof}
In addition, substituting (\ref{at_saddle}), we obtain
\begin{align*}
&f^{(1)}_{u,*}ia_-/\rho(\lambda_0)
-f^{(1)}_{b,*}ia_+/\rho(\lambda_0)-f^{(1)}_{b,*}f^{(1)}_{u,*}\\ \notag
&=-\dfrac{1}{\rho^2(\lambda_0)}-\dfrac{ic_0e^{-ic_0\theta_\varepsilon}}
{2\theta_\varepsilon\rho(\lambda_0)^2(e^{ic_0\theta_\varepsilon}-e^{-ic_0\theta_\varepsilon})}+
\dfrac{1}{(2\rho(\lambda_0)\theta_\varepsilon)^2}.
\end{align*}
Combining this with Lemma \ref{l:main_contr} we get
\begin{align*}
\dfrac{\partial^2}{\partial\xi_1^\prime\partial\xi_2^\prime}G_2^{+-}(z,\xi)\Big|_{\xi^\prime=\xi}=
-\dfrac{1}{\rho^2(\lambda_0)}
+\dfrac{1-e^{-2ic_0\theta_\varepsilon}}{(2\rho(\lambda_0)\theta_\varepsilon)^2}
+o(1),
\end{align*}
thus
\begin{multline*}
\dfrac{\partial^2}{\partial\xi_1^\prime\partial\xi_2^\prime}\Big(G_2^{+-}(z,\xi)+
\overline{G}_2^{\,+-}(z,\xi)\Big)\Big|_{\xi^\prime=\xi}\\=
-\dfrac{2}{\rho^2(\lambda_0)}
-\dfrac{\big(e^{ic_0\theta_\varepsilon}-e^{-ic_0\theta_\varepsilon}\big)^2}{(2\rho(\lambda_0)\theta_\varepsilon)^2}
+o(1),
\end{multline*}
which gives (\ref{+-}).

\subsection{Proof of Theorem \ref{thm:++}}
Now let us compute $G_2^{++}(z_1,z_2)$. Again Lemmas \ref{l:kont_u} -- \ref{l:kont_b_+} yield,
that the main contribution is given by the $\delta$-neighborhoods of the points
\begin{enumerate}
    \item $\hat{A}_j=L_{+}$, $\hat{U}_j=L_{\pm}\,$ or $\,\hat{U}_j=L_{\mp}$, $j\in\Lambda$, and
    $$
    |(V_j)_{12}|=\left\{\begin{array}{ll}
    0,& \hat{U}_j=\hat{U}_1,\\
    1,& \mathrm{otherwise}.
    \end{array}\right.
    $$

    \item $\hat{A}_j=L_{+}$, $\hat{U}_j=L_{+}$, $j\in\Lambda$.

    \item $\hat{A}_j=L_{+}$, $\hat{U}_j=L_{-}$, $j\in\Lambda$.
\end{enumerate}
Using the same idea as for $G_2^{+-}(z_1,z_2)$, we get
\begin{align*}
\langle f_u^{(1)}f_a^{(1)}\cdot \mathcal{P}_N\rangle^{(+)}&=
\langle f_u^{(1)}(f_a^{(1)}-f_a^*)\cdot \mathcal{P}_N\rangle^{(+)}+f_a^*
\langle f_u^{(1)}\cdot \mathcal{P}_N\rangle^{+}\\
&=
\langle f_u^{(1)}(f_a^{(1)}-f_a^*)\cdot \mathcal{P}_N\rangle^{(+)}+f_a^*
(-ia_-+o(1)),
\end{align*}
where $\langle\ldots\rangle^{(+)}$ is $\langle\ldots\rangle$ for $G_2^{++}(z_1,z_2)$, and
\begin{align*}\notag
\exp\{f_a(\tilde{V},\hat{A},\hat{\xi})\}&:=\int
e^{-|\Lambda|^{-1}\sum_{j\in\Lambda}\Tr (\tilde{V}_j\tilde{P}_1^*)^*\hat{A}_j
(\tilde{V}_j\tilde{P}_1^*)
(i\hat{\xi}/\rho(\lambda_0)-\varepsilon)}d\mu(\tilde{P}_1),\\
f_a^{(1)}(\tilde{V},\hat{A},\hat{\xi})&:=\dfrac{\partial}{\partial \xi_2^\prime}
f_a(\tilde{V},\hat{A},\hat{\xi}_1)\Big|_{\xi^\prime=\xi}.
\end{align*}
Note that for all saddle-points 1 -- 3
\[
f_a^{(1)}(\tilde{V},\hat{A},\hat{\xi})\Big|_s=f_a^*=-\dfrac{ia_+}{\rho(\lambda_0)}
\]
Repeating almost literally the proof of Lemma \ref{l:trick}, we get
\[
|\langle f_u^{(1)}(f_a^{(1)}-f_a^*)\cdot \mathcal{P}_N\rangle^{(+)}_{1,2,3}|\le \log W/\sqrt{W},
\]
and hence
\begin{align*}
\dfrac{\partial^2}{\partial\xi_1^\prime\partial\xi_2^\prime} G_2^{++}(z,\xi)
\Big|_{\xi^\prime=\xi}&=\langle f_u^{(1)}f_a^{(1)}\cdot \mathcal{P}_N\rangle^{(+)}+o(1)=f_a^*\cdot
\langle f_u^{(1)}\cdot \mathcal{P}_N\rangle^{(+)}+o(1)\\
&
=ia_+f_a^*+o(1)=a_+^2/\rho(\lambda)^2+o(1),
\end{align*}
which yields (\ref{++}).

\section{Appendix}
\subsection{Grassmann integration}
Let us consider two sets of formal variables
$\{\psi_j\}_{j=1}^n,\{\overline{\psi}_j\}_{j=1}^n$, which satisfy the anticommutation
conditions
\begin{equation}\label{anticom}
\psi_j\psi_k+\psi_k\psi_j=\overline{\psi}_j\psi_k+\psi_k\overline{\psi}_j=\overline{\psi}_j\overline{\psi}_k+
\overline{\psi}_k\overline{\psi}_j=0,\quad j,k=1,\ldots,n.
\end{equation}
Note that this definition implies $\psi_j^2=\overline{\psi}_j^2=0$.
These two sets of variables $\{\psi_j\}_{j=1}^n$ and $\{\overline{\psi}_j\}_{j=1}^n$ generate the Grassmann
algebra $\mathfrak{A}$. Taking into account that $\psi_j^2=0$, we have that all elements of $\mathfrak{A}$
are polynomials of $\{\psi_j\}_{j=1}^n$ and $\{\overline{\psi}_j\}_{j=1}^n$ of degree at most one
in each variable. We can also define functions of
the Grassmann variables. Let $\chi$ be an element of $\mathfrak{A}$, i.e.
\begin{equation}\label{chi}
\chi=a+\sum\limits_{j=1}^n (a_j\psi_j+ b_j\overline{\psi}_j)+\sum\limits_{j\ne k}
(a_{j,k}\psi_j\psi_k+
b_{j,k}\psi_j\overline{\psi}_k+
c_{j,k}\overline{\psi}_j\overline{\psi}_k)+\ldots.
\end{equation}
For any
sufficiently smooth function $f$ we define by $f(\chi)$ the element of $\mathfrak{A}$ obtained by substituting $\chi-a$
in the Taylor series of $f$ at the point $a$. Since $\chi$ is a polynomial of $\{\psi_j\}_{j=1}^n$,
$\{\overline{\psi}_j\}_{j=1}^n$ of the form (\ref{chi}), according to (\ref{anticom}) there exists such
$l$ that $(\chi-a)^l=0$, and hence the series terminates after a finite number of terms and so $f(\chi)\in \mathfrak{A}$.

For example, we have
\begin{align}\notag
 &\exp\{a\,\overline{\psi}_1\psi_1\}=1+a\,\overline{\psi}_1\psi_1+(a\,\overline{\psi}_1\psi_1)^2/2+\ldots
 =1+a\,\overline{\psi}_1\psi_1,\\ \notag
&\exp\{a_{11}\overline{\psi}_1\psi_1+a_{12}\overline{\psi}_1\psi_2+
a_{21}\overline{\psi}_2\psi_1+a_{22}\overline{\psi}_2\psi_2\}=1+ a_{11}\overline{\psi}_1\psi_1\\
\label{ex_12} &+a_{12}\overline{\psi}_1\psi_2+ a_{21}\overline{\psi}_2\psi_1+a_{22}\overline{\psi}_2\psi_2+
(a_{11}\overline{\psi}_1\psi_1+a_{12}\overline{\psi}_1\psi_2\\ \notag &+
a_{21}\overline{\psi}_2\psi_1+a_{22}\overline{\psi}_2\psi_2)^2/2+\ldots=1+
a_{11}\overline{\psi}_1\psi_1+a_{12}\overline{\psi}_1\psi_2+ a_{21}\overline{\psi}_2\psi_1\\ \notag
&+a_{22}\overline{\psi}_2\psi_2+(a_{11}a_{22}-a_{12}a_{21})\overline{\psi}_1\psi_1\overline{\psi}_2\psi_2.
\end{align}
Following Berezin \cite{Ber}, we define the operation of
integration with respect to the anticommuting variables in a formal
way:
\begin{equation}\label{int_gr}
\intd d\,\psi_j=\intd d\,\overline{\psi}_j=0,\quad \intd
\psi_jd\,\psi_j=\intd \overline{\psi}_jd\,\overline{\psi}_j=1,
\end{equation}
and then extend the definition to the general element of $\mathfrak{A}$ by
the linearity. A multiple integral is defined to be a repeated
integral. Assume also that the ``differentials'' $d\,\psi_j$ and
$d\,\overline{\psi}_k$ anticommute with each other and with the
variables $\psi_j$ and $\overline{\psi}_k$. Thus, according to the definition, if
$$
f(\psi_1,\ldots,\psi_k)=p_0+\sum\limits_{j_1=1}^k
p_{j_1}\psi_{j_1}+\sum\limits_{j_1<j_2}p_{j_1,j_2}\psi_{j_1}\psi_{j_2}+
\ldots+p_{1,2,\ldots,k}\psi_1\ldots\psi_k,
$$
then
\begin{equation}\label{int}
\intd f(\psi_1,\ldots,\psi_k)d\,\psi_k\ldots d\,\psi_1=p_{1,2,\ldots,k}.
\end{equation}

   Let $A$ be an ordinary Hermitian matrix with positive real part. The following Gaussian
integral is well-known
\begin{equation}\label{G_C}
\intd \exp\Big\{-\sum\limits_{j,k=1}^nA_{jk}z_j\overline{z}_k\Big\} \prod\limits_{j=1}^n\dfrac{d\,\Re
z_jd\,\Im z_j}{\pi}=\dfrac{1}{\mdet A}.
\end{equation}
One of the important formulas of the Grassmann variables theory is the analog of this formula for the
Grassmann algebra (see \cite{Ber}):
\begin{equation}\label{G_Gr}
\int \exp\Big\{-\sum\limits_{j,k=1}^nA_{jk}\overline{\psi}_j\psi_k\Big\}
\prod\limits_{j=1}^nd\,\overline{\psi}_jd\,\psi_j=\mdet A,
\end{equation}
where $A$ now is any $n\times n$ matrix.

For $n=1$ and $2$ this formula  follows immediately from (\ref{ex_12}) and (\ref{int}).

Let
\[
F=\left(\begin{array}{cc}
a&\rho\\
\tau&b\\
\end{array}\right),\quad
\]
where $a$ and $b>0$ are Hermitian complex $k\times k$ matrices and $\rho$, $\tau$ are $k\times k$ matrices
of independent
anticommuting Grassmann variables, and let
\[
\Phi=(\psi_{1},\ldots,\psi_{k},z_{1},\ldots,z_{k})^t,
\]
where $\{\psi_j\}_{j=1}^k$ are independent Grassmann variables and $\{z_j\}_{j=1}^k$ are complex variables.
Combining (\ref{G_C}) -- (\ref{G_Gr}) we obtain (see \cite{Ber})
\begin{equation}\label{G_comb}
\intd \exp\{-\Phi^+F\Phi\}\prod\limits_{j=1}^kd\overline{\psi}_j\, d\psi_j \prod\limits_{j=1}^k\dfrac{\Re z_j\Im
z_j}{\pi}=\hbox{sdet}\, F,
\end{equation}
where
\begin{equation}\label{sdet}
 \hbox{sdet}\, F=\dfrac{\det\,(a-\rho\, b^{-1}\,\tau)}{\det\, b}.
\end{equation}
We will also need
\begin{proposition}({\bf see \cite{SupB:08} and references therein})\label{p:supboz}\\
Let $F$ be some function that depends only on combinations
\begin{align*}
\bar{\psi}\psi&:=\Big\{\sum\limits_{\alpha=1}^n \bar{\psi}_{j\alpha}\psi_{k\alpha}\Big\}_{j,k=1}^p,\quad
\bar{\psi}\phi:=\Big\{\sum\limits_{\alpha=1}^n \bar{\psi}_{j\alpha}\phi_{k\alpha}\Big\}_{j,k=1}^p,\\
\bar{\phi}\psi&:=\Big\{\sum\limits_{\alpha=1}^n \bar{\phi}_{j\alpha}\psi_{k\alpha}\Big\}_{j,k=1}^p,\quad
\bar{\phi}\phi:=\Big\{\sum\limits_{\alpha=1}^n \bar{\phi}_{j\alpha}\phi_{k\alpha}\Big\}_{j,k=1}^p,
\end{align*}
and set
\[
d\Psi=\prod\limits_{j=1}^p\prod\limits_{\alpha=1}^nd\bar{\psi}_{j\alpha} d\psi_{j\alpha},\quad
d\Phi=\prod\limits_{j=1}^p\prod\limits_{\alpha=1}^n\pi^{-1} d\Re \phi_{j\alpha} d\Im \phi_{j\alpha}.
\]
Assume also that $n\ge p$. Then
\begin{equation*}
\int F\left(\begin{array}{cc}
\bar{\psi}\psi & \bar{\psi}\phi\\
\bar{\phi}\psi & \bar{\phi}\phi
\end{array}
\right)d\Phi d\Psi=(i\pi)^{-p(p-1)}\int F(Q)\cdot \hbox{sdet\,}^n Q \,dQ,
\end{equation*}
where
\[
Q=\left(\begin{array}{ll}
U&\rho\\
\tau& B
\end{array}
\right),
\]
$U$ is $p\times p$ unitary matrix, $B$ is $p\times p$ positive Hermitian matrix, and
$\rho$, $\tau$ are $p\times p$ matrices whose entries are independent Grassmann variables.
Here
\[
dQ=d\rho d\tau dU dB,
\]
\begin{align*}
d\rho d\tau&=\prod\limits_{j,k=1}^pd\rho_{jk}\, d\tau_{jk},\\
dB&=\mathbf{1}_{B>0}\prod\limits_{j=1}^pdB_{jj}\prod\limits_{j,k=1}^pd\Re B_{jk} d\Im B_{jk},\\
dU&=\dfrac{\pi^{p(p-1)/2}}{\prod_{s=1}^p s!}\prod\limits_{j=1}^p\dfrac{du_j}{2\pi i}\cdot \Delta(u_1,\ldots,u_p)^2d\mu(V),
\end{align*}
where $u_j\in \mathbb{T}$ are the eigenvalues of $U$,
\[
\Delta(u_1,\ldots,u_p)=\prod\limits_{j<k} (u_j-u_k),
\]
 $V$ is a matrix diagonalizing $U$,
and $d\mu(V)$ is the normalized Haar measure over
$U(p)/U(1)^p$.
\end{proposition}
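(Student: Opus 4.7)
The plan is to prove the superbosonization identity by the collective-field method combined with a contour deformation argument. The starting observation is that the integrand on the left-hand side is $U(n)$-invariant under the group acting on the $\alpha$-index, so up to this invariance the bilinears $\bar{\psi}\psi$, $\bar{\phi}\phi$, $\bar{\psi}\phi$, $\bar{\phi}\psi$ constitute a complete set of invariants, organised into a $(p|p)\times(p|p)$ supermatrix $M$. I would insert into the left-hand side a superdelta function $\delta(Q-M)$ enforcing that a new supermatrix variable $Q$ equals $M$; this is legitimate provided one uses a Fourier (respectively Grassmann-Fourier) representation with a conjugate supermatrix $R$ for the bosonic (respectively fermionic) blocks.

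After this insertion, the $\Phi,\Psi$-integration becomes Gaussian of the type covered by (\ref{G_comb}) and produces an explicit $\hbox{sdet}^{-n}$ factor depending on $R$. The remaining $R$-integral acts as an integral transform on $F(Q)$ whose effect is to project $Q$ onto the superdomain described in the proposition, namely $U\in U(p)$, $B\in \mathcal{H}_+(p)$, with Grassmann off-diagonal blocks $\rho,\tau$; the weight $\hbox{sdet}^n Q$ appears as the Berezinian Jacobian of this transform. An equivalent and perhaps cleaner route is to verify the identity on a generating family of test functions, for instance $F_A(Q) = \exp\{-\hbox{str}(AQ)\}$ with $A$ a fixed supermatrix parameter: on the LHS this is a direct Gaussian superintegral giving $\hbox{sdet}^{-n} A$ times an explicit prefactor, while on the RHS the Grassmann integrations over $\rho,\tau$ can be done exactly and the remaining $U$-integration is handled by the Harish-Chandra--Itzykson--Zuber formula (cf.\ Proposition~\ref{p:Its-Zub}) and the $B$-integration by a Laplace transform of a Hermitian matrix exponential, convergent precisely under the assumption $n\ge p$.

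The main obstacle is the contour deformation needed to justify that the fermionic block of $Q$ should be integrated over the compact group $U(p)$ rather than over the positive Hermitian cone where a naive Hubbard-Stratonovich-type insertion would naturally land it. This is the analytic heart of the superbosonization formula and requires analyticity of $F$ in the fermionic block together with careful tracking of how its eigenvalues rotate under the deformation; the compactness of $U(p)$ and the absence of boundary contributions are what ultimately power the identity. A subsidiary but tedious difficulty is the bookkeeping of normalisation constants: the factor $(i\pi)^{-p(p-1)}$ absorbs several Vandermonde-type Jacobians from the Weyl integration formula on $U(p)$ and from diagonalising $B$, and matching it with the prefactor emerging from the Gaussian superintegral is the most error-prone step. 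Once the identity is established for the exponential family $F_A$, the general case follows by the density of such exponentials in the relevant function space and a standard approximation argument.
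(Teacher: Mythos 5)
The paper does not prove Proposition~\ref{p:supboz}: it is stated with the attribution ``see \cite{SupB:08} and references therein'' and is thereafter used as a black box, so there is no proof in the paper to compare your attempt against.

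Your sketch is a plausible outline of how the superbosonization formula is actually established. The delta-function/collective-field route is close in spirit to the argument of Littelmann, Sommers, and Zirnbauer, and the alternative test-function route via $F_A(Q)=\exp\{-\hbox{str}(AQ)\}$ is also viable: after diagonalizing $U$ the coset integral is indeed of Harish-Chandra/Itzykson--Zuber type, and the $B$-integral is an Ingham--Siegel integral whose convergence is exactly the hypothesis $n\ge p$. You also correctly locate the two genuinely hard points --- the contour deformation that brings the fermion--fermion block onto the compact group $U(p)$, and the bookkeeping of the constant $(i\pi)^{-p(p-1)}$.

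That said, what you have written is a plan, not a proof, and the plan has real gaps. The contour deformation is the analytic core of the theorem in the fermionic sector, and you only name it; justifying the analyticity of the integrand under the deformation and the vanishing of boundary contributions is precisely where the substance of \cite{SupB:08} lies, and nothing in your sketch addresses how to carry this out. Likewise, passing from the exponential family $F_A$ to general $F$ ``by density'' hides a genuine step: the proposition is not valid for arbitrary functions of the invariants, so one must specify an admissible class of $F$ and control convergence of both sides uniformly in a topology for which the exponentials are dense; this is not a standard approximation argument that can be waved through. Given that the paper itself only quotes the proposition, the appropriate resolution is the same as the paper's: invoke \cite{SupB:08} rather than attempt to reprove it.
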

\subsection{Integration over $\mathring{U}(2)$ and $\mathring{U}(1,1)$}
The integral over $\mathring{U}(2)$ and $\mathring{U}(1,1)$ can be computed using
\begin{proposition}\label{p:Its-Zub}
\begin{enumerate}
\item[(i)] The normalized Haar measure $d\mu(U)$ over $\mathring{U}(2)$ can be parameterize as
follows
\begin{align}\label{mu}
&U=\left(\begin{array}{ll}
w& v\,e^{i\theta}\\
-v\,e^{-i\theta}& w\\
\end{array}\right),\quad
w=(1-v^2)^{1/2}\\ \notag
& d\mu(U)=\dfrac{d\theta}{2\pi}\cdot
(2v dv),\quad v\in [0,1],\quad \theta\in [0,2\pi].
\end{align}
If $C=\hbox{diag}\{c_1,c_2\}$ and
$D=\hbox{diag}\{d_1,d_2\}$, then for any $r$ we have
\begin{multline}\label{Its-Zub}
\int_{\mathring{U}(2)} \exp\{r\Tr CU^*DU\} d\mu(U)\\=
\dfrac{\exp\{r(c_1d_1+c_2t_2)\}-\exp\{r(c_1d_2+c_2t_1)\}}{r(c_1-c_2)(d_1-d_2)}.
\end{multline}

\item[(ii)] The Haar measure $d\nu(T)$ over the group $\mathring{U}(1,1)$ can be parameterize as
follows
\begin{align}\label{nu}
&T=\left(\begin{array}{ll}
s& t\,e^{i\sigma}\\
t\,e^{-i\sigma}& s\\
\end{array}\right),\quad
s=(1+t^2)^{1/2}\\ \notag
& d\nu(T)=\dfrac{d\sigma}{2\pi}\cdot
(2t dt),\quad t\in [0,\infty),\quad \sigma\in [0,2\pi].
\end{align}
If $C=\hbox{diag}\{c_1,c_2\}$ and
$D=\hbox{diag}\{d_1,d_2\}$, then for any $r$ such that $$\Re r(c_1-c_2)(d_1-d_2)>0$$
we have
\begin{equation*}
\int_{\mathring{U}(1,1)} \exp\{-r\Tr CT^{-1}DT\} d\,\nu(T)=
\dfrac{\exp\{-r(c_1d_1+c_2d_2)\}\}}{r(c_1-c_2)(d_1-d_2)}.
\end{equation*}
\end{enumerate}
\end{proposition}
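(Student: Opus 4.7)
The proof is essentially a direct calculation, so the plan is to verify the parametrizations first and then reduce each integral to an elementary one-variable integral.

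For part (i), I would start by checking that the matrix $U$ in \eqref{mu} is unitary for every $(v,\theta)\in[0,1]\times[0,2\pi]$ and that as $(v,\theta)$ ranges over this rectangle, the map sweeps out $\mathring{U}(2)=U(2)/U(1)^2$ once. This is standard: any $U\in \mathring U(2)$ has a representative whose diagonal entries are real and equal to a common value $w\ge 0$, and the stated form realizes the coset. To identify the measure, I would note that the $U(2)$-invariant probability measure on $\mathring U(2)$ (which exists and is unique by compactness) pushes forward under $(v,\theta)\mapsto U$ to a measure that is invariant under the obvious circle action in $\theta$ and under rotations in the first column $(w,-ve^{-i\theta})^t\in S^3/U(1)\simeq S^2$; since the Fubini-Study volume form on $\mathbb{C}P^1$ in these coordinates is $d\theta\wedge 2v\,dv/(2\pi)$, this matches the claimed $d\mu(U)$.

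With the parametrization in hand, part (i) reduces to computing the exponent. A one-line calculation gives
\begin{equation*}
\Tr CU^*DU = (c_1d_1+c_2d_2)(1-v^2) + (c_1d_2+c_2d_1)v^2 = (c_1d_1+c_2d_2) - v^2(c_1-c_2)(d_1-d_2),
\end{equation*}
so the $\theta$-integral is trivial and after substituting $u=v^2$ one is left with
\begin{equation*}
e^{r(c_1d_1+c_2d_2)}\int_0^1 e^{-ru(c_1-c_2)(d_1-d_2)}\,du,
\end{equation*}
which evaluates to the right-hand side of \eqref{Its-Zub}.

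For part (ii), I would proceed in parallel but with more care about non-compactness. First, one checks that $T$ in \eqref{nu} satisfies $T^*LT=L$ for $L=\mathrm{diag}\{1,-1\}$ and hence lies in $U(1,1)$, and that as $(t,\sigma)\in[0,\infty)\times[0,2\pi]$ it provides a section of $U(1,1)\to \mathring U(1,1)$. The $U(1,1)$-invariant measure on $\mathring U(1,1)\simeq \mathbb{H}^2$ in horospherical-type coordinates is the hyperbolic area form, which in these variables is exactly $\frac{d\sigma}{2\pi}\cdot 2t\,dt$. The exponent computation yields
\begin{equation*}
\Tr CT^{-1}DT = (c_1d_1+c_2d_2) + t^2(c_1-c_2)(d_1-d_2),
\end{equation*}
and the $t$-integral $\int_0^\infty e^{-ru(c_1-c_2)(d_1-d_2)}du$ converges under the assumed sign condition $\Re r(c_1-c_2)(d_1-d_2)>0$ and gives $1/[r(c_1-c_2)(d_1-d_2)]$. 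The boundary term at $t=\infty$ vanishes thanks to this sign condition, which is precisely why the noncompact answer has only one exponential in the numerator, unlike the compact case.

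The main obstacle, if any, is verifying that the candidate measures are Haar: for $\mathring U(2)$ this is routine via the $\mathbb{C}P^1$ identification, but for $\mathring U(1,1)$ one must be slightly careful because the space is noncompact and the Haar measure is only determined up to scale. The cleanest way is to view \eqref{mu} and \eqref{nu} as the standard KAK-type decompositions for $SU(2)$ and $SU(1,1)$ restricted to the coset, in which case the Jacobians are classical; alternatively, the measures can be fixed by invariance together with the convention in Proposition \ref{p:supboz}, which is the only place where the normalization is actually used in the paper.
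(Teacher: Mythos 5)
Your computation is correct, and the key identities
\begin{equation*}
\Tr CU^{*}DU=(c_1d_1+c_2d_2)-v^{2}(c_1-c_2)(d_1-d_2),\qquad
\Tr CT^{-1}DT=(c_1d_1+c_2d_2)+t^{2}(c_1-c_2)(d_1-d_2)
\end{equation*}
do reduce both integrals to elementary one-dimensional exponentials exactly as you say (and reproduce the stated right-hand sides, once one reads the $t_1,t_2$ in (\ref{Its-Zub}) as the obvious typos for $d_1,d_2$). Note, however, that the paper does not actually prove this proposition: it simply refers the reader to Mehta (Appendix 5) for the Harish--Chandra/Itzykson--Zuber formula (i) and to Fyodorov's Appendix C for the hyperbolic analogue (ii). So there is nothing in the paper to compare against step by step. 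Your argument is the natural self-contained route: in this rank-one, $2\times 2$ setting the general HCIZ machinery collapses to the direct calculation you carried out. Two small remarks. First, the identification of $d\mu$ with the pushforward of the unique $U(2)$-invariant probability measure on $\mathring{U}(2)\simeq\mathbb{CP}^1$ could be made a touch more concrete by computing the Fubini--Study form in the affine coordinate $z=-v e^{-i\theta}/w$: one finds $\frac{r\,dr}{(1+r^2)^2}=v\,dv$ with $r=v/\sqrt{1-v^2}$, so $\frac{dz\wedge d\bar z}{2\pi i(1+|z|^2)^2}=\frac{d\theta}{2\pi}\,(2v\,dv)$, exactly as claimed. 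Second, the phrase ``horospherical-type coordinates'' is a bit off --- these are geodesic polar coordinates on the disk model of $\mathring{U}(1,1)\simeq U(1,1)/U(1)^2$ --- but the substance of the argument (invariant area form is proportional to $t\,dt\,d\sigma$, normalization fixed by the convention used elsewhere in the paper) is correct.
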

Formula (\ref{Its-Zub}) is the well-known Harish Chandra/Itsykson-Zuber formula
(see e.g. \cite{Me:91}, Appendix 5). The proof of (ii) can be found e.g. in \cite{F:02},
Appendix C.

\bigskip

\textbf{Acknowledgement.}
Supported by NSF grant DMS 1128155. This research also was partially supported by
RF Government grant 11.G34.31.0026 and by JSC "Gazprom Neft".

\end{document}